\newcolumntype{M}[1]{>{\arraybackslash}m{#1}}
\definecolor{monokaiBackground}{HTML}{272822}
\definecolor{monokaiGray}{HTML}{75715E}
\definecolor{monokaiGreen}{HTML}{A6E22E}
\definecolor{monokaiYellow}{HTML}{E6DB74}
\definecolor{monokaiRed}{HTML}{F92672}
\definecolor{LightGray}{gray}{0.9}
\titleformat{\section}[block]{\normalfont\Large\bfseries}{\thesection}{1em}{}
\titlespacing*{\section}{0pt}{0.6\baselineskip}{0.3\baselineskip}
\titleformat{\subsection}[block]{\normalfont\large\bfseries}{\thesubsection}{1em}{}
\titlespacing*{\subsection}{0pt}{0.3\baselineskip}{0\baselineskip}
\newtheorem{definition}{Definition}[section]
\newtheorem{theorem}{Theorem}[section]
\newtheorem{lemma}[theorem]{Lemma}
\newtheorem{property}[theorem]{Property}
\newtheorem{example}{Example}[section]
\newcommand{\ycc}[1]{{#1}}
\begin{document}

\newcommand{\cal}[1]{\mathcal{#1}}


\definecolor{apipink}{rgb}{0.858, 0.188, 0.478}

\lstdefinestyle{apistyle}{
    belowcaptionskip=1\baselineskip,
    breaklines=true,
    frame=none,
    numbers=left, 
    xrightmargin=1em,
    framexrightmargin=1em,
    basicstyle=\footnotesize\ttfamily,
    keywordstyle=\bfseries\color{apipink},
    commentstyle=\itshape\color{green!40!black},
    identifierstyle=\color{black},
    backgroundcolor=\color{gray!10!white},
    linewidth=.97\columnwidth,
    numbersep=3pt,
}
\newcommand{\jiaxin}[1]{\textcolor{blue}{#1}}
\newcommand{\siyuan}[1]{\textcolor{orange}{#1}}
\newcommand{\keyres}[1]{#1}
\newcommand{\tr}[1]{}
\newcommand{\TODO}[1]{\textcolor{BurntOrange}{\textbf{TODO:} #1}}
\newcommand{\Remind}[1]{\textcolor{RubineRed}{\textbf{Reminder:}#1}}
\newcommand{\byronsuggestion}[1]{\textcolor{Green}{\textbf{Reminder:}#1}}
\newcommand{\red}[1]{\textcolor{red}{#1}}
\newcommand{\what}[1]{\textcolor{blue}{?#1?}}
\newcommand{\eat}[1]{}
\newcommand{\kw}[1]{{\ensuremath {\textsf{#1}}}\xspace}
\newenvironment{tbi}{\begin{itemize}
		\setlength{\topsep}{0.6ex}\setlength{\itemsep}{0ex}} 
	{\end{itemize}} 
\newcommand{\ei}{\end{itemize}}

\newcommand{\Gontology}{G_{Ont}}
\newcommand{\PRADS}{\textsf{\small PADS}}
\newcommand{\KPADS}{\textsf{\small KPADS}}
\newcommand{\BPADS}{\textsf{\small BPADS}}
\newcommand{\ADS}{\textsf{\small ADS}}
\newcommand{\PKD}{\textsf{\small PKD}}
\newcommand{\FRAMEWORK}{\kw{FRAMEWORK\_NAME}}
\newcommand{\ALGO}{\kw{ALGO\_NAME}}
\newcommand{\PEval}{\kw{PEval}}
\newcommand{\IncEval}{\kw{IncEval}}
\newcommand{\Assemble}{\kw{Assemble}}
\newcommand{\ARef}{\textsf{\small ARefine}}
\newcommand{\ACmpl}{\textsf{\small AComplete}}
\newcommand{\DataGraph}{$G$}
\newcommand{\Ghier}{\mathbb{G}}
\newcommand{\qhier}{\mathbb{Q}}
\newcommand{\subclassOf}{\mathsf{SubClassOf}}
\newcommand{\subtypeOf}{\mathsf{SubTypeOf}}
\newcommand{\answer}{\mathsf{ans}}
\newcommand{\answerb}{\mathsf{ans}^b}
\newcommand{\answerf}{\mathsf{ans}^f}
\newcommand{\Bisim}{\mathsf{Bisim}}
\newcommand{\rank}{\mathsf{rank}}
\newcommand{\dist}{\mathsf{dist}}
\newcommand{\Next}{\mathsf{next}}
\newcommand{\distance}{\mathsf{d}}
\newcommand{\reach}{\mathsf{reach}}
\newcommand{\Generalization}{\mathsf{Gen}}
\newcommand{\Specialization}{\mathsf{Spec}}
\newcommand{\desc}{\mathsf{desc}}
\newcommand{\support}{\mathsf{sup}}
\newcommand{\distort}{\mathsf{DT}}
\newcommand{\degree}{\mathsf{deg}}
\renewcommand{\equiv}{\mathsf{equiv}}
\newcommand{\equivv}[1]{[#1]_\mathsf{equiv}}
\newcommand{\irchi}[2]{\raisebox{\depth}{$#1\chi$}}
\newcommand{\Summarization}{\mathpalette\irchi\relax}
\newcommand{\Configuration}{C}
\newcommand{\radius}{d_{max}}
\newcommand{\Eval}{\mathsf{eval}}
\newcommand{\peval}{\mathsf{peval}}
\newcommand{\F}{{\cal F}}
\newcommand{\G}{{\cal G}}
\newcommand{\Score}{\mathsf{scr}}
\newcommand{\ie}{\emph{i.e.,}\xspace}
\newcommand{\eg}{\emph{e.g.,}\xspace}
\newcommand{\wrt}{\emph{w.r.t.}\xspace}
\newcommand{\aka}{\emph{a.k.a.}\xspace}

\newcommand{\equi}{\mathsf{equi}}
\newcommand{\sn}{\mathsf{sn}}
\newcommand{\METIS}{\mathsf{\small METIS}}

\newcommand{\maxsf}{\mathsf{max}}
\newcommand{\PIE}{\mathsf{PIE}}
\newcommand{\PINE}{\mathsf{PINE}}

\newcommand{\threshold}{\tau}

\newcommand{\cost}{\mathsf{cost}}
\newcommand{\costq}{\mathsf{cost}_\mathsf{q}}
\newcommand{\CR}{\mathsf{compress}}
\newcommand{\DT}{\mathsf{distort}}
\newcommand{\FP}{\mathsf{fp}}
\newcommand{\maxSAT}{\mathsf{maxSAT}}
\newcommand{\True}{\mathsf{T}}
\newcommand{\False}{\mathsf{F}}
\newcommand{\OptGen}{\mathsf{OptGen}}
\newcommand{\freq}{\mathsf{freq}}

\newcommand{\match}{\mathsf{match}}

\newcommand{\vpd}{V_{pd}}
\newcommand{\vtp}{V_{tbp}}
\newcommand{\BiGindex}{\mathsf{BiG\textnormal{-}index}}
\newcommand{\filter}{\mathsf{filter}}
\newcommand{\ans}{\mathsf{ans\_graph\_gen}}
\newcommand{\Azero}{\mathbb{A}}
\newcommand{\boost}{\mathsf{boost}}
\newcommand{\bkws}{\mathsf{bkws}}
\newcommand{\fkws}{\mathsf{fkws}}
\newcommand{\rkws}{\mathsf{rkws}}
\newcommand{\dkws}{\mathsf{dkws}}
\newcommand{\knk}{\mathsf{knk}}
\newcommand{\ksp}{\mathsf{ksp}}
\newcommand{\config}{\mathsf{config}}
\newcommand{\content}{\mathsf{isKey}}
\newcommand{\pcnt}{\mathsf{pcnt}}
\newcommand{\private}{\textsf{isPrivate}}

\newcommand{\Path}{{\mathcal{P}}}
\renewcommand{\P}{{\mathcal P}}

\newcommand{\ppkws}{\textsf{\small PPKWS}\xspace}

\newcommand{\DKWS}{\textsf{\small DKWS}\xspace}
\newcommand{\kDKWS}{\textsf{\small $k$DKWS}\xspace}
\newcommand{\SKWS}{\textsf{\small BFKWS}\xspace}
\newcommand{\KWS}{\textsf{\small KWS}\xspace}
\newcommand{\VU}{\mathbb{V}}
\newcommand{\VI}{\mathcal{V}}
\newcommand{\VM}{\bar{\mathcal{V}}}
\newcommand{\vsf}{\mathsf{v}}
\newcommand{\usf}{\mathsf{u}}
\newcommand{\answerset}{\mathcal{A}}
\newcommand{\candanswerset}{\bar{\mathcal{A}}}
\newcommand{\prune}{S}
\newcommand{\Ud}{\hat{\dist}}
\newcommand{\Ld}{\check{\dist}}
\newcommand{\invert}{\mathscr{I}}
\newcommand{\MB}{u.b}
\newcommand{\MF}{\mathsf{f}}
\newcommand{\Queue}{\mathcal{P}}
\newcommand{\Visit}{\mathsf{Vis}}
\newcommand{\invertV}{V_{\invert}}
\newcommand{\invertE}{E_{\invert}}
\newcommand{\tnormal}[1]{\textnormal{#1}}

\newcommand{\DKWSBF}{\textsf{\small BF}\xspace}
\newcommand{\DKWSNP}{\textsf{\small BF+PADS+NP}\xspace}
\newcommand{\DKWSPADS}{\textsf{\small BF+PADS}\xspace}
\newcommand{\DKWSPINE}{\textsf{\small BF+ALL}\xspace}
\newcommand{\Notify}{\mathsf{Notify}}
\newcommand{\Push}{\mathsf{Push}}
\newcommand{\Parameters}{\mathscr{X}}
\newcommand{\grape}{\kw{GRAPE}}
\newcommand{\Buffer}{\mathbb{B}}
\newcommand{\SI}{\kw{SI}}

\newcommand{\SBGindex}{\mathsf{SBGIndex}}

\newcommand{\SGIndex}{\mathsf{SGIndex}} 

\newcommand{\Portal}{\mathbb{P}}

\newcommand{\stitle}[1]{\vspace{0.4ex}\noindent{\bf #1}}
\newcommand{\expstitle}[1]{\vspace{0.4ex}\noindent{\underline{\bf #1}}}
\newcommand{\etitle}[1]{\vspace{0.8ex}\noindent{\underline{\em #1}}}
\newcommand{\eetitle}[1]{\vspace{0.6ex}\noindent{{\em #1}}}

%
\newcommand{\techreport}[2]{#2}
\newcommand{\SGFrame}{\mathsf{SGFrame}} 

\newcommand{\stab}{\rule{0pt}{8pt}\\[-2.0ex]}
\newcommand{\tab}{\hspace{4ex}}

\newcommand{\Q}{{\cal Q}}



\newcommand{\eop}{\hspace*{\fill}\mbox{\qed}}

\setlength{\floatsep}{0.2\baselineskip plus 0.1\baselineskip minus 0.1\baselineskip}
\setlength{\textfloatsep}{0.2\baselineskip plus 0.1\baselineskip minus 0.1\baselineskip}
\setlength{\intextsep}{0.4\baselineskip plus 0.1\baselineskip minus 0.1\baselineskip}
\setlength{\dbltextfloatsep}{0.2\baselineskip plus 0.1\baselineskip minus 0.1\baselineskip}
\setlength{\dblfloatsep}{0.2\baselineskip plus 0.1\baselineskip minus 0.1\baselineskip}

\renewcommand{\textfraction}{0.05}
\renewcommand{\topfraction}{0.95}
\renewcommand{\bottomfraction}{0.2}
\renewcommand{\floatpagefraction}{0.99}

\newcommand{\Src}{S}
\newcommand{\Dst}{T}
\newcommand{\SD}{At least $k$ $\mathsf{S}$-$\mathsf{T}$ maximum-flow}
\newcommand{\SDMF}{$\mathsf{kSTMF}$}
\newcommand{\SDMFG}{$\mathsf{kSTMF}^g$}
\newcommand{\TEMSDMF}{$\mathsf{TEM}$-$\mathsf{kSTMF}$}
\newcommand{\STcore}{$\mathsf{ST}$-FCore}
\newcommand{\Core}{\mathsf{Core}}
\newcommand{\MFlow}{\mathsf{MaxFlow}}
\newcommand{\MFavg}{MF}
\newcommand{\DKS}{\mathsf{DkS}}
\newcommand{\CBB}{C^T}
\newcommand{\fT}{f^T}
\newcommand{\tsf}{\mathsf{t}}
\newcommand{\csf}{\mathsf{c}}
\newcommand{\fsf}{\mathsf{f}}
\newcommand{\Tsf}{\mathsf{T}_{max}}
\newcommand{\algo}{\mathsf{algo}}
\newcommand{\FnDense}{\mathsf{FnDense}}
\newcommand{\FnSparse}{\mathsf{FnSparse}}
\newcommand{\Transform}{\hat{G}}
\newcommand{\TFNet}{$\mathsf{TF}$-$\mathsf{Network}$}
\newcommand{\RTFNet}{$\mathsf{RTF}$-$\mathsf{Network}$}
\newcommand{\Baseline}{\mathsf{Baseline}}
\newcommand{\Greedy}{\mathsf{Greedy}}
\newcommand{\BWCC}{$\mathsf{WCC}$}
\newcommand{\Tran}{\mathsf{T}}
\newcommand{\ts}{\tau}
\newcommand{\name}{\red{\mathsf{STMflow}}}
\newcommand{\FF}{$\mathsf{Flow}$-$\mathsf{Force}$}


\newcommand{\Spade}{$\mathsf{Spade}$}
\newcommand{\TC}{$\mathsf{TC}$}
\newcommand{\TDS}{$\mathsf{TDS}$}
\newcommand{\KCDS}{$k\mathsf{CLiDS}$}
\newcommand{\KCore}{$k$-$\mathsf{core}$}
\newcommand{\KTip}{$k$-$\mathsf{tip}$}
\newcommand{\KClique}{$k$-$\mathsf{clique}$}
\newcommand{\ALENEX}{$\mathsf{ALENEX}$}
\newcommand{\PKMC}{$\mathsf{PKMC}$}
\newcommand{\FWA}{$\mathsf{FWA}$}

\newtheorem{manualtheoreminner}{Theorem}
\newenvironment{manualtheorem}[1]{%
  \renewcommand\themanualtheoreminner{#1}%
  \manualtheoreminner
}{\endmanualtheoreminner}

\newtheorem{manuallemmainner}{Lemma}
\newenvironment{manuallemma}[1]{%
  \renewcommand\themanuallemmainner{#1}%
  \manuallemmainner
}{\endmanualtheoreminner}

\newcommand{\Seq}{O}
\newcommand{\Grab}{\mathsf{Grab}}
\newcommand{\DENG}{\mathsf{DG}}
\newcommand{\DENGW}{\mathsf{DW}}
\newcommand{\Fraudar}{\mathsf{FD}}
\newcommand{\IncDENG}{$\mathsf{IncDG}$}
\newcommand{\IncDENGW}{$\mathsf{IncDW}$}
\newcommand{\IncFraudar}{$\mathsf{IncFD}$}
\newcommand{\IncDENGU}{$\mathsf{IncDGU}$}
\newcommand{\IncDENGWU}{$\mathsf{IncDWU}$}
\newcommand{\IncFraudarU}{$\mathsf{IncFDU}$}
\newcommand{\permutation}{\alpha}
\newcommand{\AFF}{$\mathsf{AFF}$}
\newcommand{\TCal}{$\mathcal{T}$}
\newcommand{\latency}{\mathcal{L}}
\newcommand{\Elapsed}{\mathcal{E}}
\newcommand{\Ratio}{\mathcal{R}}

\newcommand{\DG}{$\mathsf{DG}$}
\newcommand{\DW}{$\mathsf{DW}$}
\newcommand{\FD}{$\mathsf{FD}$}
\newcommand{\GBBS}{$\mathsf{GBBS}$}
\newcommand{\PBBS}{$\mathsf{PBBS}$}
\newcommand{\PEELB}{$\mathsf{Dupin}$}
\newcommand{\PEELLT}{$\mathsf{DupinGPO}$}
\newcommand{\PEELLTP}{$\mathsf{DupinLPO}$}
\newcommand{\kClist}{$\mathsf{kCLIST}$}
\newcommand{\PEELK}{$\mathsf{PEELB}$-$k\mathsf{CLi}$}
\newcommand{\NS}{-}
\newcommand{\TLE}{$\mathsf{TLE}$}


\newcommand{\Dupin}{$\mathsf{Dupin}$}
\newcommand{\remove}{U}
\newcommand{\DDS}{$\mathsf{DSD}$}

\newcommand*\circled[1]{\tikz[baseline=(char.base)]{
            \node[shape=circle,draw,inner sep=2pt] (char) {#1};}}



\newcommand{\rev}[2][1]{\noindent\emph{\underline{{#1}}: {#2\vspace{1mm}}}\\}
\newcommand{\res}[1]{\noindent\textbf{RESPONSE: }{#1}}

\captionsetup[table]{aboveskip=2pt, belowskip=1pt}
\captionsetup[figure]{aboveskip=2pt, belowskip=1pt}

\title{Dupin: A Parallel Framework for Densest Subgraph Discovery in Fraud Detection on Massive Graphs (Technical Report)}

\settopmatter{authorsperrow=3}

\author{Jiaxin Jiang}
\affiliation{%
    \institution{National University of Singapore}
    \country{Singapore}
}
\email{jxjiang@nus.edu.sg}
\orcid{0000-0001-8748-3225}

\author{Siyuan Yao}
\affiliation{%
    \institution{National University of Singapore}
    \country{Singapore}
}
\email{siyuan.y@u.nus.edu}
\orcid{0009-0001-8243-3947}

\author{Yuchen Li}
\affiliation{%
    \institution{Singapore Management University}
    \country{Singapore}
}
\email{yuchenli@smu.edu.sg}
\orcid{0000-0001-9646-291X}

\author{Qiange Wang}
\affiliation{
  \institution{National University of Singapore}
  \country{Singapore}
}
\email{wang.qg@nus.edu.sg}

\author{Bingsheng He}
\affiliation{%
    \institution{National University of Singapore}
    \country{Singapore}
}
\email{hebs@comp.nus.edu.sg}
\orcid{0000-0001-8618-4581}


\author{Min Chen}
\affiliation{%
  \institution{GrabTaxi Holdings}
  \country{Singapore}
}
\email{min.chen@grab.com}

\renewcommand{\shortauthors}{Jiaxin Jiang et al.}



\begin{CCSXML}
<ccs2012>
   <concept>
       <concept_id>10003752.10003809.10003635</concept_id>
       <concept_desc>Theory of computation~Graph algorithms analysis</concept_desc>
       <concept_significance>500</concept_significance>
       </concept>
   <concept>
       <concept_id>10003752.10003809.10003636</concept_id>
       <concept_desc>Theory of computation~Approximation algorithms analysis</concept_desc>
       <concept_significance>500</concept_significance>
       </concept>
   <concept>
       <concept_id>10003752.10003809.10010170.10010174</concept_id>
       <concept_desc>Theory of computation~Massively parallel algorithms</concept_desc>
       <concept_significance>500</concept_significance>
       </concept>
 </ccs2012>
\end{CCSXML}

\ccsdesc[500]{Theory of computation~Graph algorithms analysis}
\ccsdesc[500]{Theory of computation~Approximation algorithms analysis}
\ccsdesc[500]{Theory of computation~Massively parallel algorithms}

\received{October 2024}
\received[revised]{January 2025}
\received[accepted]{February 2025}

\keywords{Graph Algorithms, Parallel Programming, Densest Subgraph Discovery}




\begin{abstract}
Detecting fraudulent activities in financial and e-commerce transaction networks is crucial. One effective method for this is Densest Subgraph Discovery (\DDS{}). However, deploying \DDS{} methods in production systems faces substantial scalability challenges due to the predominantly sequential nature of existing methods, which impedes their ability to handle large-scale transaction networks and results in significant detection delays. To address these challenges, we introduce \Dupin{}, a novel parallel processing framework designed for efficient \DDS{} processing in billion-scale graphs. \Dupin{} is powered by a processing engine that exploits the unique properties of the peeling process, with theoretical guarantees on detection quality and efficiency. \Dupin{} provides user-friendly APIs for flexible customization of \DDS{} objectives and ensures robust adaptability to diverse fraud detection scenarios. Empirical evaluations demonstrate that \Dupin{} consistently outperforms several existing \DDS{} methods, achieving performance improvements of up to 100 times compared to traditional approaches. On billion-scale graphs, \Dupin{} demonstrates the potential to enhance the prevention of fraudulent transactions from 45\% to 94.5\% and reduces density error from 30\% to below 5\%, as supported by our experimental results. These findings highlight the effectiveness of \Dupin{} in real-world applications, ensuring both speed and accuracy in fraud detection.
\end{abstract}


\maketitle

\section{Introduction}\label{sec:intro}

In today's digital landscape, graph structures are essential in various applications, including transaction networks~\cite{ye2021gpu,chen2024rush,jiang2024spade+,xu2025bursting}, communication systems~\cite{klimat2004enron,magdon2003locating}, knowledge graphs~\cite{Jiang2020PPKWSAE,jiang2023dkws,jiang2019generic}, and social media platforms~\cite{liao2022dcore,yao2024ublade,fang2016effective,su2025revisiting}. A challenge in these domains is detecting fraudulent activities, which can significantly undermine business integrity and consumer trust. The Densest Subgraph Discovery (\DDS{}) problem, first explored by~\cite{goldberg1984finding}, has emerged as a critical tool in this context, facilitating tasks such as link spam identification~\cite{gibson2005discovering,beutel2013copycatch}, community detection~\cite{dourisboure2007extraction,chen2010dense,niu2025sans}, and notably, fraud detection~\cite{hooi2016fraudar,chekuri2022densest,shin2016corescope}. The peeling process~\cite{tsourakakis2015k,hooi2016fraudar,bahmani2012densest,chekuri2022densest,boob2020flowless} is central to many \DDS{} algorithms. This process removes vertices based on density metrics. \emph{The density metrics are mainly differed by how the weight function is defined. They serve as indicators for the community's suspiciousness—denser subgraphs indicate more intense and cohesive interactions that are more likely to be fraudulent.} Examples of such metrics include vertex degree or the cumulative weight of adjacent edges. While these methods have demonstrated efficiency and robustness, their practical deployment in real-world fraud detection systems faces significant challenges. For instance, business requirements necessitate the detection of fraudulent communities within billion-scale graphs in mere seconds. Recent studies, including those by \cite{dailyreport,ye2021gpu}, reveal that malicious bot activity in e-commerce accounted for a staggering 21.4\% of all traffic in 2018, highlighting the urgency of developing effective detection mechanisms.

\begin{figure}[tb]
    \centering
    \begin{subfigure}{0.49\linewidth}
        \includegraphics[width=\linewidth]{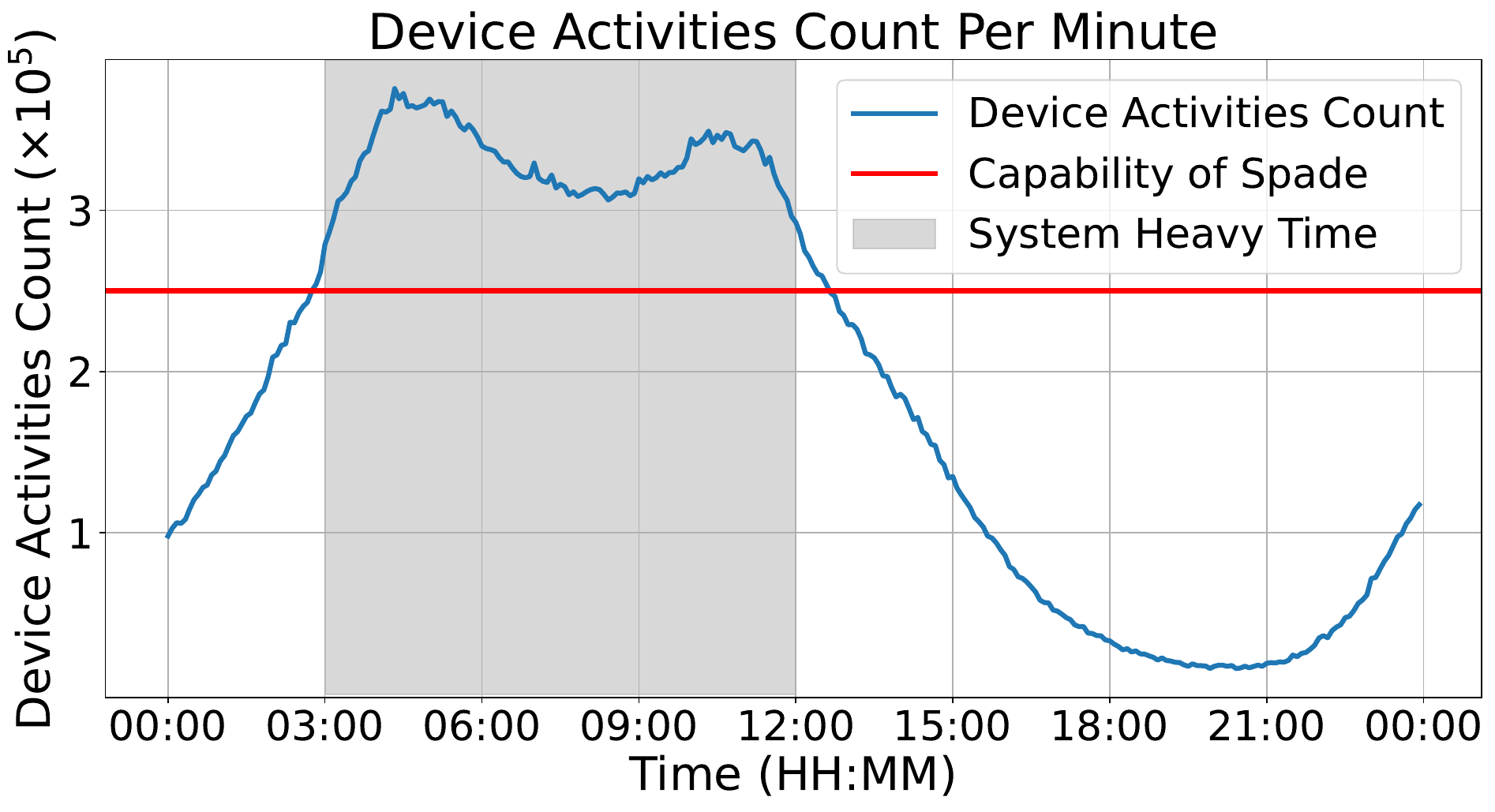}
        \caption{System Load Distribution.}\label{fig:spade-heavy-time}
    \end{subfigure}
    \begin{subfigure}{0.49\linewidth}
        \includegraphics[width=\linewidth]{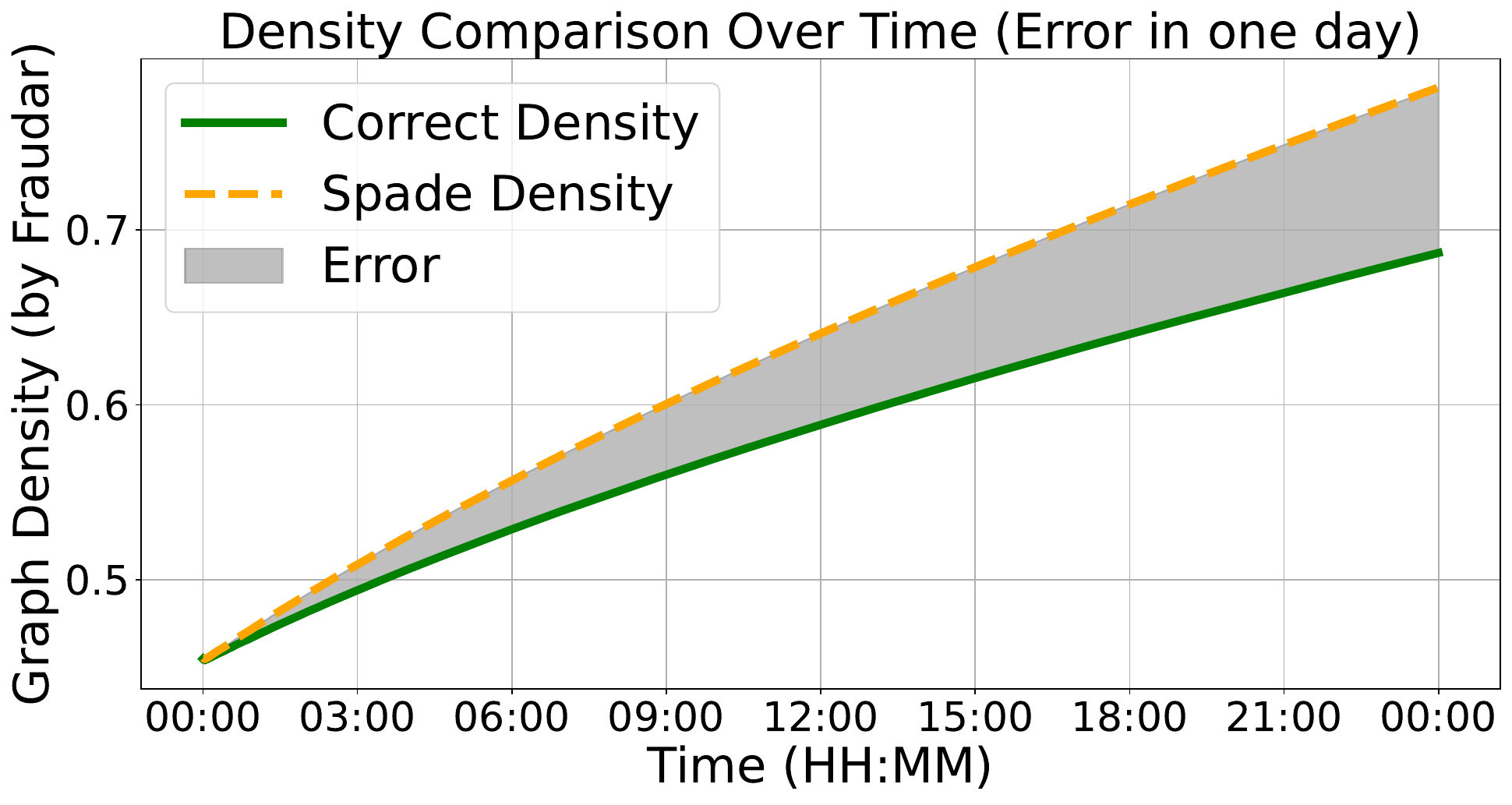}
        \caption{Error Accumulation Trends.}\label{fig:density-comparison}
    \end{subfigure}
    \caption{Activity analysis from our industry partner, Grab. Data normalized for privacy.}\label{fig:heavy}
\end{figure}

\noindent
\stitle{Existing Challenges.} Over the years, extensive research has advanced \DDS{} through incremental peeling algorithms~\cite{bahmani2012densest,epasto2015efficient,shin2017densealert,jiang2023spade,jiang2024spade+} and parallel peeling algorithms~\cite{shi2021parallel,dhulipala2020graph,danisch2018listing,shin2016corescope}. Despite these developments, practical fraud detection in real-world settings continues to face several limitations:
\begin{enumerate}[leftmargin=*]
    \item \textbf{Lack of Flexibility in Density Metrics.} Many existing algorithms rely on fixed scoring functions tailored to \DDS{}. Methods such as \cite{beutel2013copycatch,jiang2014inferring} employ static detection criteria that become less effective as fraudulent behaviors diversify. In fraud detection, for instance, a density metric that incorporates transaction amounts as weights~\cite{gudapati2021search} highlights collusion patterns—such as frequent high-value or consistent transactions indicative of coupon or rebate abuse. Similarly, when detecting fraud communities, the density metric might utilize structural measures (\eg triangle counts~\cite{tsourakakis2015k}) to better capture community characteristics.
    \item \textbf{Inefficient Incremental Updates.} Current incremental algorithms often fail to scale to billion-edge graphs (\(G\)) with high transaction rates (\ie graph updates \(\Delta G\)). For example, consider the state-of-the-art incremental algorithm \Spade{}~\cite{jiang2023spade} (Figure~\ref{fig:spade-heavy-time}), where the vertical axis represents the Device Activities Count (i.e., the number of new edges generated per minute) and the gray area indicates System Heavy Time. \Spade{} can process approximately \(2.65 \times 10^5\) edge updates per minute, which is insufficient to handle scenarios with high transaction volumes. Moreover, these methods often assume static edge weights, even though newly inserted edges can alter the weights of existing edges. For example, the cumulative density error of \cite{jiang2023spade} can exceed 15\% in a single day (Figure~\ref{fig:density-comparison}).
    \item \textbf{Ineffective Parallel Pruning.} Although parallelization can speed up \DDS{} ~\cite{shi2021parallel}, existing parallel frameworks lack efficient pruning strategies, resulting in an excessive number of peeling iterations that degrade performance on large-scale datasets.
\end{enumerate}

\noindent
\stitle{Industry Use Case.} In collaboration with $\Grab$, a leading technology company in Southeast Asia, we address real-world fraud detection challenges on a digital payment and food delivery platform. Two predominant patterns are observed in practice (see Figure~\ref{fig:intro:example}): (i)~\textit{Legitimate Community}, such as popular items attracting buyers, forms large but relatively sparse subgraphs. These communities reflect normal coupon usage and regular customer transactions across merchants. (ii)~\textit{Fraudulent Community}, on the other hand, are characterized by coupon abuse and collusion between customers and merchants. These involve smaller groups of participants engaging in frequent, repetitive transactions, resulting in highly dense subgraphs. The contrasting structures demonstrate the utility of density-based approaches for distinguishing legitimate transactions from fraudulent collusion. \DDS{} algorithms then identify dense subgraphs indicative of fraudulent communities.

\begin{figure}[tb]
        \includegraphics[width=0.9\linewidth]{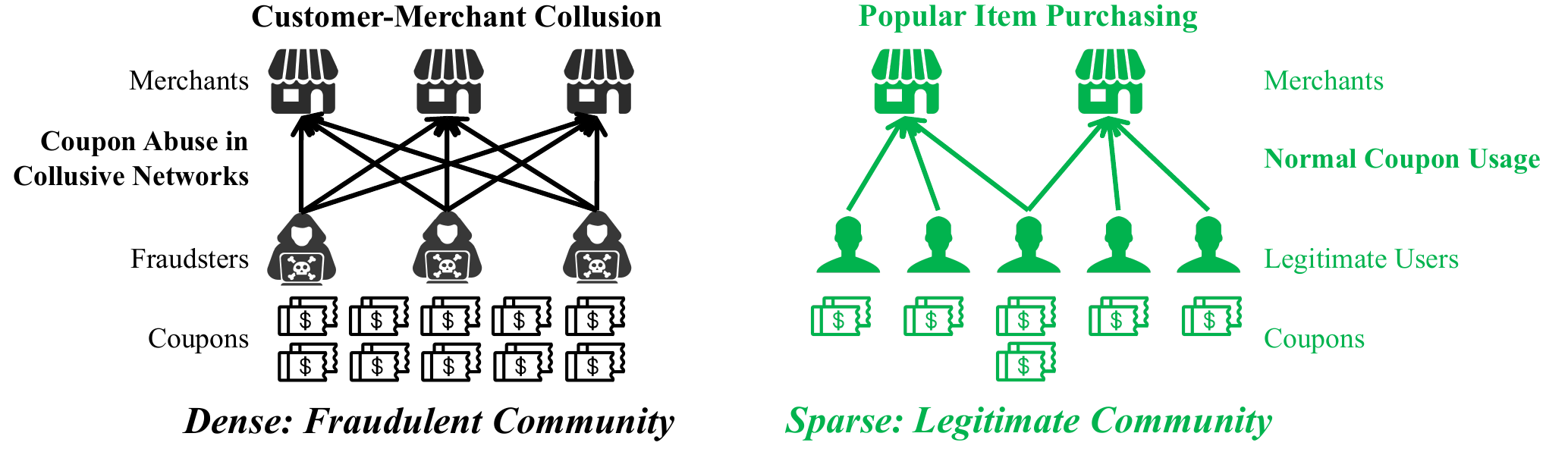}
  \caption{Fraudulent coupon abuse vs. normal coupon usage in customer-merchant networks.}\label{fig:intro:example}
\end{figure}

\noindent
\stitle{Efficiency, Latency, and Prevention Ratio.} Existing \DDS{} approaches struggle to meet real-time latency targets and maintain high fraud-prevention ratios under heavy transaction loads:
\begin{itemize}[leftmargin=*]
    \item \emph{Latency:} A representative incremental peeling approach is \Spade{}~\cite{jiang2023spade,jiang2024spade+}, which leverages existing peeling results on \(G\) and incrementally updates them when new edges or vertices (\(\Delta G\)) arrive \emph{to detect the dense subgraph on} \(G \oplus \Delta G\). However, when \(\Delta G\) is large, \Spade{} incurs substantial overhead from frequent data structure updates and density recalculations. For instance, processing even a single batch of 1K edges on a 2-billion-edge graph (CPU Xeon X5650) can exceed 200 seconds, which is infeasible for real-time detection on modern e-commerce platforms that may see up to 400K edge insertions per minute.
    \item \emph{Prevention Ratio:} Parallel \DDS{} frameworks often compute dense subgraphs from scratch in parallel directly on \(G \oplus \Delta G\). While this approach exploits multiple cores, it typically exhibits latency over 1000 seconds and suffers reduced detection effectiveness when scaling from million-edge to billion-edge datasets (e.g., dropping from 92.47\% to 45\% in fraud prevention).
\end{itemize}

To address these challenges, we pose the following research question: \textit{How can we enhance the flexibility, efficiency, and scalability of \DDS{} methods to effectively detect fraud in massive graphs?}

\stitle{Contributions.} We propose a \DDS{} framework for fraud detection. Our key contributions are:

\begin{itemize}[leftmargin=*]
\item \textbf{A \DDS{} Parallelization Framework.} We propose \Dupin{}, a novel parallel processing framework for Densest Subgraph Discovery (\DDS{}) that breaks the sequential dependencies inherent in traditional density metrics, significantly enhancing scalability and efficiency on both weighted and unweighted graphs. Unlike existing methods that focus on specific density metrics, \Dupin{} allows developers to define custom fraud detection semantics through flexible suspiciousness functions for edges and vertices. It supports a wide range of \DDS{} semantics, including $\DENG$~\cite{charikar2000greedy}, $\DENGW$~\cite{gudapati2021search}, $\Fraudar$~\cite{hooi2016fraudar}, \TDS~\cite{tsourakakis2015k}, and \KCDS{}~\cite{danisch2018listing}. \Dupin{} not only leverages parallelism but also provides theoretical guarantees on approximation ratios for the supported density metrics. To the best of our knowledge, this is the \emph{first} generic parallel processing framework that accommodates a broad class of density metrics in \DDS{}, offering both flexibility and theoretical assurance.
\item \textbf{Exploiting Long-Tail Properties.} We identify and address the \emph{long-tail} problem in the peeling process of \DDS{} algorithms, where late iterations remove only a small fraction of vertices and contribute little to the quality of the densest subgraph, leading to inefficiency. To tackle this, we introduce two novel optimizations: \emph{Global Pruning Optimization}, which proactively removes low-contribution vertices within each peeling iteration to reduce computational overhead, and \emph{Local Pruning Optimization}, which eliminates disparate structures formed after vertex removal to enhance the density and quality of the detected subgraph. These optimizations significantly reduce the number of iterations required, improving both performance and result quality. 
\item \textbf{Empirical Validation and Real-World Impact:} We conduct comprehensive experiments on large-scale industry datasets to validate the effectiveness of \Dupin{}. Our results show that \Dupin{} achieves up to 100 times faster fraud detection compared to the state-of-the-art incremental method \Spade{}~\cite{jiang2023spade} and parallel methods like \PBBS{}~\cite{shi2021parallel}. Moreover, \Dupin{} is capable of preventing up to 94.5\% of potential fraud, demonstrating significant practical benefits and reinforcing system integrity in real-world applications.
\end{itemize}




\stitle{Organization.} The remainder of this paper is structured as follows: Section~\ref{sec:background} presents the background and the literature review. Section~\ref{sec:framework} introduces the architecture of \Dupin{}. In Section~\ref{sec:seqpeel}, we propose the theoretical foundations for parallel peeling-based algorithms for \DDS{}. Section~\ref{sec:peellongtail} introduces long-tail pruning techniques to improve efficiency and effectiveness. The experimental evaluation of \Dupin{} is presented in Section~\ref{sec:exp}. Finally, Section~\ref{sec:conclusion} concludes the paper and outlines avenues for future research.

\section{Background and Related Work}\label{sec:background}

\subsection{Preliminary} \label{subsec:preliminary}

This subsection presents the preliminary concepts and notations. 
\begin{table}[tb]
	  \caption{Frequently used notations.}
	  \setlength{\tabcolsep}{0.5em}
	\label{tab-notations}
	\begin{footnotesize}
		\begin{center}
			\begin{tabular}
				{r|l} \hline \textbf{Notation} & \textbf{Definition} \\
				\hline
                                \hline
                $G$ / $\Delta G$ & a transaction graph / updates to graph $G$ \\ \hline
				$N(u)$ & the neighbors of $u$ \\ \hline
				$a_i$ / $c_{ij}$ & the weight on vertex $u_i$ / on edge $(u_i,u_j)$ \\ \hline
				$f(S)$ & the sum of the suspiciousness of vertex set $S$ \\ \hline
				$g(S)$ & the suspiciousness density of  vertex set $S$ \\ \hline
				$w_{u}(S)$ & the peeling weight, \ie the decrease in $f$ by peeling $u$ from $S$ \\ \hline
				$Q$ & the peeling algorithms \\ \hline
				$S^P$ & the vertex set returned by $Q$ \\ \hline
				$S^*$ & the optimal vertex set \\ \hline
			\end{tabular}
		\end{center}
	\end{footnotesize}
\end{table}

\stitle{Graph $G$.} In this work, we operate on a \textit{weighted} graph denoted as $G=(V,E)$, where $V$ represents the set of vertices and $E$ ($\subseteq V \times V$) represents the set of edges. The graph can be either directed or undirected. Each edge $(u_i,u_j) \in E$ is assigned a \textit{nonnegative} weight, expressed as $c_{ij}$. Additionally, $N(u)$ denotes the set of neighboring vertices connected to vertex $u$.

\stitle{Induced Subgraph.} For a given subset $S$ of vertices in $V$, we define the induced subgraph as $G[S] = (S, E[S])$, where $E[S] = \{(u, v) \mid (u, v) \in E \land u, v \in S\}$. The size of $S$ is denoted by $|S|$.

\stitle{Density Metrics $g$ and Weight Function $f$.} We utilize the class of metrics $g$ as defined in prior works~\cite{hooi2016fraudar,charikar2000greedy,jiang2023spade}, where for a given subset of vertices $S \subseteq V$, the density metric is computed as:
$g(S) = \frac{f(S)}{|S|}$.
Here, $f(S)$ represents the total weight of the induced subgraph $G[S]$, which is defined as the sum of the weights of vertices in $S$ and the weights of edges in $E[S]$. 
More formally:
\begin{equation}\label{eq:density}
    f(S)=\sum_{u_i\in S} a_i + \sum_{(u_i,u_j)\in E[S]} c_{ij}
\end{equation}

The weight of a vertex $u_i$, denoted as $a_i$ ($a_i \geq 0$), quantifies the suspiciousness of user $u_i$. The weight of an edge $(u_i, u_j)$, represented by $c_{ij}$ ($c_{ij} > 0$), reflects the suspiciousness of the transaction between users $u_i$ and $u_j$. Intuitively, the density metric $g(S)$ encapsulates the overall suspiciousness density of the induced subgraph $G[S]$. Larger $g(S)$ indicates a higher density of suspiciousness within $G[S]$. The vertex set that maximizes the density metric $g$ is denoted by $S^*$.

We next introduce five representative density metrics employed in fraud detection applications.

\stitle{Dense Subgraphs ($\DENG{}$)~\cite{charikar2000greedy}.} The $\DENG{}$ metric is employed to quantify the connectivity of substructures within a graph. It has found extensive applications in various domains, such as detecting fake comments in online platforms~\cite{kumar2018community} and identifying fraudulent activities in social networks~\cite{ban2018badlink}. Given a subset of vertices $S \subseteq V$, the weight function of $\DENG{}$ is defined as $f(S) = |E[S]|$, where $|E[S]|$ denotes the number of edges in the induced subgraph $G[S]$.

\stitle{Dense Subgraph on Weighted Graphs ($\DENGW$)~\cite{gudapati2021search}.} In transaction graphs, it is common to have weights assigned to the edges, representing quantities such as the transaction amount. The $\DENGW{}$ metric extends the concept of dense subgraphs to accommodate these weighted relationships. Given a subset of vertices $S \subseteq V$, the weight function of $\DENGW{}$ is defined as $f(S) =\sum_{(u_i,u_j)\in E[S]}c_{ij}$.

\stitle{Fraudar ($\Fraudar{}$)~\cite{hooi2016fraudar}.} In order to mitigate the effects of fraudulent actors deliberately obfuscating their activities, Hooi et al.~\cite{hooi2016fraudar} introduced the $\Fraudar{}$ algorithm. This approach innovatively incorporates weights for edges and assigns prior suspiciousness scores to vertices, potentially utilizing side information to enhance detection accuracy. Given a subset of vertices $S \subseteq V$, the weight function of $\Fraudar{}$ is defined as Equation~\ref{eq:density} where $a_i$ denotes the suspicious score of vertex $u_i$ and $c_{ij} = \frac{1}{\log (x+c)}$. Here $x$ is the degree of the object vertex between $u_i$ and $u_j$, and $c$ is a positive constant ~\cite{hooi2016fraudar}.

\stitle{Triangle Densest Subgraph (\TDS)~\cite{tsourakakis2015k}.} Triangles serve as a crucial structural motif in graphs, offering insights into connectivity and cohesion. In \TDS{}, the density of a subgraph \(G[S]\) is measured by \(f(S) = t(S)\), where \(t(S)\) is the number of triangles in \(S\). Although this directly counts triangles rather than vertices or edges, it can still be expressed in our general density framework \(f(S) = \sum_{u_i \in S} a_i + \sum_{(u_i,u_j)\in E[S]} c_{ij}\) by assigning \(\smash{c_{ij}=0}\) and \(\smash{a_i = \tfrac{t_i}{3}}\), where \(t_i\) is the number of triangles involving vertex \(u_i\). Summing \(\tfrac{t_i}{3}\) over all vertices in \(S\) counts each triangle exactly once, thus yielding \(f(S) = t(S)\). Consequently, \TDS{} fits naturally under the same density metric framework used for other weighted subgraphs.

\stitle{$k$-Clique Densest Subgraph (\KCDS)~\cite{danisch2018listing}.} Extending \TDS{} beyond triangles ($k=3$) to any clique size $k$, \KCDS{} defines the weight function as $f(S) = k(S)$, $k(S)$ is the number of $k$-cliques in $G[S]$. A $k$-Clique is a complete subgraph of $k$ vertices, representing a tightly-knit group of vertices. Just as in the \TDS{} case, \KCDS{} can be embedded in our general framework by assigning $c_{ij} = 0$ for all edges and distributing each $k$-Clique’s contribution among its $k$ vertices. Specifically, let $a_i = k_i/k$, where $k_i$ denotes the number of $k$-Cliques that include vertex $u_i$. Summing $k_i/k$ over all $u_i \in S$ counts each $k$-clique exactly once, thereby recovering $f(S) = k(S)$ within the same density formulation.

\stitle{Problem Statement.} Given a graph $G = (V, E)$ and a community detection semantic with density metric $g(S)$, our goal is to find a vertex subset $S^* \subseteq V$ that maximizes $g(S^*)$.

\subsection{Sequential Peeling Algorithms} \label{subsec:SPA}

Peeling algorithms have gained popularity for their efficiency, robustness, and proven worst-case performance in various applications~\cite{hooi2016fraudar,tsourakakis2015k,charikar2000greedy}. Next, we provide a concise overview of the typical execution flow of these algorithms.

\stitle{Peeling Weight.} We use $w_{u_i}(S)$ to indicate the decrease in the value of \ycc{the weight function} $f$ when the vertex $u_i$ is removed from a vertex set $S$, \ie the \emph{peeling weight}. 



\begin{algorithm}[tb]
    \caption{Sequential Peeling Algorithm}\label{algo:peeling}
    \DontPrintSemicolon
    \footnotesize
    \SetKwFunction{FindMinWeight}{FindMinWeight}
    \SetKwProg{Fn}{Function}{}{}
    \KwIn{Graph $G = (V, E)$, density metric $g(S)$}
    \KwOut{Subset $S^*$ maximizing $g(S)$}
    $S_0 \leftarrow V$, $S^* \leftarrow S_0$, $i \leftarrow 1$ \tcc*{Initialization}\label{algo:peeling:init}

    \While{$S_{i-1} \neq \emptyset$}{
        $u_i \leftarrow \arg\min_{u \in S_{i-1}} w_u(S_{i-1})$ \tcc*{The vertex to be peeled}\label{algo:peeling:select}
        $S_i \leftarrow S_{i-1} \setminus \{u_i\}$ \tcc*{Peel $u_i$ from $S_{i-1}$}\label{algo:peeling:remove}
        \If{$g(S_i) > g(S^*)$}{
            $S^* \leftarrow S_i$ \tcc*{Update optimal subset}\label{algo:peeling:update}
        }
        $i \leftarrow i + 1$ \label{algo:peeling:increment}
    }

    \Return{$S^*$} \tcc*{The subset that maximizes $g(S)$}\label{algo:peeling:return}
    
\end{algorithm}

\noindent
\stitle{Sequential Peeling Algorithm (Algorithm~\ref{algo:peeling}).} The peeling process starts with the full vertex set \( S_0 = V \) and initializes the optimal subset \( S^* \) to \( S_0 \) (Line~\ref{algo:peeling:init}). The algorithm iterates as long as the current subset \( S_{i-1} \) is not empty (Line~\ref{algo:peeling:select}). In each iteration, the vertex \( u_i \) with the smallest peeling weight is identified and peeled from \( S_{i-1} \) to form \( S_i \) (Lines~\ref{algo:peeling:select} and \ref{algo:peeling:remove}). If the density metric \( g(S_i) \) exceeds that of the current optimal subset \( S^* \), \( S^* \) is updated to \( S_i \) (Line~\ref{algo:peeling:update}). This process generates a sequence of nested subsets \( S_0, S_1, \ldots, S_{|V|} \), ultimately returning the subset \( S^* \) that maximizes the density metric \( g(S^*) \) (Line~\ref{algo:peeling:return}). The sequential peeling algorithm is a foundational approach widely adopted in existing studies~\cite{tsourakakis2015k,hooi2016fraudar,bahmani2012densest,chekuri2022densest,jiang2023spade,jiang2024spade+}.

\stitle{Approximation.} An algorithm $Q$ is defined as an $\alpha$-approximation for the Densest Subgraph Discovery (\DDS{}), where $\alpha \geq 1$, if for any given graph it returns a subset $S$ satisfying the condition $g(S) \geq \frac{g(S^*)}{\alpha}$,
where $S^*$ is the optimal subset that maximizes the density metric $g$.

\begin{theorem}[\cite{hooi2016fraudar}]\label{thm:2ppr}
For the vertex set $S^p$ returned by Algorithm~\ref{algo:peeling} and the optimal vertex set $S^*$, it holds that $g(S^p) \geq \frac{g(S^*)}{2}$ for $\DENG$, $\DENGW$ and $\Fraudar$ as the density metrics.
\end{theorem}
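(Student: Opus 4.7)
The plan is to follow the classical Charikar-style charging argument, adapted to the general weight function $f(S)=\sum_{u_i\in S} a_i + \sum_{(u_i,u_j)\in E[S]} c_{ij}$ that covers $\DENG$, $\DENGW$, and $\Fraudar$ uniformly. Since $a_i\ge 0$ and $c_{ij}\ge 0$ in all three cases, both $f$ and the induced peeling weights $w_u(S)=a_u+\sum_{u_j\in N(u)\cap S} c_{uj}$ are nonnegative and monotone: whenever $S^*\subseteq T$, we have $w_v(T)\ge w_v(S^*)$ for every $v\in S^*$, because enlarging the set can only add nonnegative neighbour contributions.

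First I would establish a lower bound on the peeling weight of every vertex of the optimum $S^*$. By optimality, $g(S^*\setminus\{v\})\le g(S^*)$ for every $v\in S^*$; expanding $g=f/|\cdot|$ and rearranging yields
\begin{equation*}
 w_v(S^*) \;=\; f(S^*)-f(S^*\setminus\{v\}) \;\ge\; \frac{f(S^*)}{|S^*|} \;=\; g(S^*).
\end{equation*}
Next I would identify the critical iteration $t^*$ of Algorithm~\ref{algo:peeling}: the first iteration at which a vertex of $S^*$ is removed. Up to that point $S^*\subseteq S_{t^*}$, and by the greedy choice in Line~\ref{algo:peeling:select} the removed vertex $u\in S^*$ achieves the minimum peeling weight in $S_{t^*}$. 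Combining this with monotonicity and the previous bound gives
\begin{equation*}
 \min_{v\in S_{t^*}} w_v(S_{t^*}) \;=\; w_u(S_{t^*}) \;\ge\; w_u(S^*) \;\ge\; g(S^*).
\end{equation*}

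The final ingredient is a handshake-style identity that controls $f$ from below using the peeling weights. For each of the three metrics, summing $w_v(S_{t^*})$ over $v\in S_{t^*}$ double-counts every edge weight and single-counts every vertex weight, yielding $\sum_{v\in S_{t^*}} w_v(S_{t^*}) = \sum_{v\in S_{t^*}} a_v + 2\sum_{(u_i,u_j)\in E[S_{t^*}]} c_{ij} \le 2f(S_{t^*})$. Chaining with the previous inequality,
\begin{equation*}
 2f(S_{t^*}) \;\ge\; |S_{t^*}|\cdot g(S^*), \qquad\text{so}\qquad g(S_{t^*}) \;\ge\; \tfrac{1}{2}g(S^*).
\end{equation*}
Since $S^P$ is the densest set ever recorded (Line~\ref{algo:peeling:update}), $g(S^P)\ge g(S_{t^*})\ge g(S^*)/2$, as required.

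The main obstacle I expect is not the algebra but ensuring that the three ingredients---the per-vertex optimality bound, the monotonicity $w_v(T)\ge w_v(S^*)$, and the handshake inequality $\sum w_v\le 2f$---all hold simultaneously for $\Fraudar$, where vertex weights $a_i$ and a data-dependent edge weight $c_{ij}=1/\log(x+c)$ coexist. I would therefore devote a short paragraph to verifying that $a_i\ge 0$ and $c_{ij}>0$ suffice to make each step go through, so that a single argument covers all three density metrics without modification.
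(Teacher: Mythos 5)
Your proposal is correct and follows essentially the same argument as the paper's proof in Appendix~\ref{subsec:appendix:proof}: locate the first vertex of $S^*$ peeled, combine the handshake bound $\sum_{v} w_v(S')\le 2f(S')$ with the greedy minimality of the peeled vertex, then apply monotonicity of the peeling weight and the per-vertex optimality bound $w_v(S^*)\ge g(S^*)$. Your write-up is in fact slightly more careful than the paper's, which leaves the final inequality $w_{u_i}(S^*)\ge g(S^*)$ only loosely justified in the appendix (it is proved by contradiction only later, inside Theorem~\ref{theorem:2ppr}), whereas you derive it explicitly.
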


The proofs for the theorems presented in this section are provided in Appendix~\ref{appendix:implementation} of~\cite{dupin2024}.

\begin{figure*}[t]
    \includegraphics[width=\linewidth]{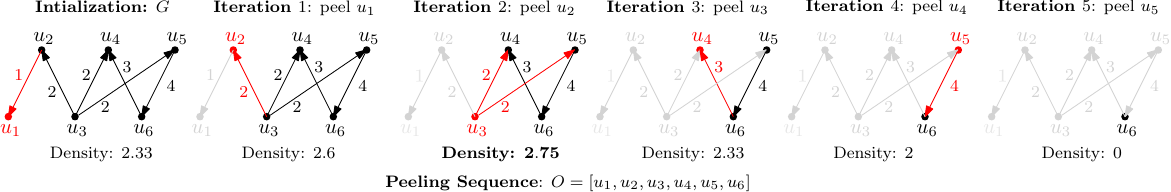}
    \caption{Example of sequential peeling algorithms.}\label{fig:peeling}
\end{figure*}

\begin{tcolorbox}[colback=gray!5, colframe=black,boxrule=0.5pt,boxsep=-2pt]
\begin{example}[Sequential Peeling Algorithm Process] \label{exp:2.1} 
Consider the graph $G$ as shown in Figure~\ref{fig:peeling}. The process begins with an initial graph density of $2.33$. In the first iteration, vertex $u_1$ is peeled due to its smallest peeling weight among all vertices. Subsequently, vertices are peeled in the following order: $u_2$, $u_3$, $u_4$, $u_5$, and $u_6$ based on the updated criteria after each peeling operation. The sequence of peeling continues until the last vertex, $u_6$, is peeled, resulting in a final graph density of $0$. The peeling sequence, therefore, is $O = [u_1, u_2, u_3, u_4, u_5, u_6]$, completely disassembling the graph. Notably, after peeling $u_2$, the graph density increases to 2.75, reaching its maximum before beginning to decrease. This non-monotonic behavior is characteristic of the peeling process. The induced subgraph $G[S]$ of $S = [u_3, u_4, u_5, u_6]$ has the greatest density of $2.75$, thus $G[S]$ is returned.
\end{example}
\end{tcolorbox}

Theorem~\ref{thm:3ppr_cli} extends the analysis to \TDS{} and \KCDS{}, proof is omitted due to space limitation.

\begin{theorem}[\cite{tsourakakis2015k}]\label{thm:3ppr_cli}
For the vertex set $S^p$ returned by Algorithm~\ref{algo:peeling} and the optimal vertex set $S^*$, it holds that $g(S^p) \geq \frac{g(S^*)}{k}$ for \TDS{} and \KCDS{} as the density metrics where $k=3$ for \TDS.
\end{theorem}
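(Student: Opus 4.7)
The plan is to adapt Charikar's classical $2$-approximation argument for \DENG{} to the clique-counting setting, exploiting the observation made earlier in the paper that both \TDS{} and \KCDS{} fit the general density framework by distributing each $k$-clique's contribution equally among its $k$ vertices. Under this lens, the peeling weight of a vertex $u$ in a set $S$ equals $w_u(S) = c_k(u,S)$, the number of $k$-cliques of $G[S]$ that contain $u$, and the fundamental double-counting identity $\sum_{u \in S} c_k(u,S) = k \cdot f(S)$ will replace the usual factor of $2$ from summing degrees.

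First I would establish a local optimality condition: for every $u \in S^*$, the number of $k$-cliques in $G[S^*]$ through $u$ satisfies $c_k(u,S^*) \geq g(S^*)$. The justification is the standard "swap" argument, namely that if some $u \in S^*$ had $c_k(u,S^*) < g(S^*)$, then the set $S^* \setminus \{u\}$ would have strictly higher density, contradicting the optimality of $S^*$. Next, I would pinpoint the critical iteration: let $v$ be the first vertex of $S^*$ that Algorithm~\ref{algo:peeling} peels, and let $S$ denote the current working set at that moment, so that $S^* \subseteq S$.

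Third, I would combine the minimality of $v$ with the monotonicity of clique counts under set inclusion. Because $v$ minimizes the peeling weight in $S$, we have $c_k(v,S) \leq c_k(u,S)$ for every $u \in S$; and because $S^* \subseteq S$, every $k$-clique of $G[S^*]$ through $v$ is also a $k$-clique of $G[S]$ through $v$, yielding $c_k(v,S) \geq c_k(v,S^*) \geq g(S^*)$. Chaining these shows that every $u \in S$ satisfies $c_k(u,S) \geq g(S^*)$. Summing over $u \in S$ and invoking the double-counting identity gives $k \cdot f(S) \geq |S| \cdot g(S^*)$, i.e., $g(S) \geq g(S^*)/k$. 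Since the algorithm returns the iterate of maximum density, $g(S^p) \geq g(S) \geq g(S^*)/k$, which specializes to $g(S^p) \geq g(S^*)/3$ for \TDS{}.

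The main obstacle is not computational but conceptual: it lies in the monotonicity step $c_k(u,S) \geq c_k(u,S^*)$ when $S^* \subseteq S$. This is what allows the local density bound, which originally lives only inside the optimal set $S^*$, to be transported to the larger peeling state $S$ where the minimality of $v$ actually operates. Verifying this carefully (and noting that it holds trivially because enlarging the ambient vertex set can only add $k$-cliques through $u$, never remove them) is the pivotal ingredient that generalizes the Charikar argument from edges to $k$-cliques and replaces the factor $2$ with $k$.
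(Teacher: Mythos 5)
Your proposal is correct and follows essentially the same route as the paper's proof: identify the first vertex of $S^*$ peeled, use the local-optimality bound $w_u(S^*)\ge g(S^*)$ via the swap argument, transport it to the current working set by monotonicity of clique counts under set inclusion, and close with the $k$-fold double-counting identity and the fact that the algorithm returns the densest iterate. The only cosmetic difference is that you sum the pointwise bound over all of $S$ at the end, whereas the paper bounds the average by the minimum peeling weight up front; the chains of inequalities are identical.
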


The sequential peeling algorithm is designed with a 2-approximation guarantee; that is, when using density metrics such as \(\DENG\), \(\DENGW\), and \(\Fraudar\) to evaluate suspiciousness, the density (and hence the suspiciousness) of the returned subgraph is at least half that of the optimal subgraph—in other words, the optimal subgraph's suspiciousness cannot exceed twice that of the sequential algorithm’s output. In contrast, when the density metrics \TDS\ and \KCDS\ (which are based on triangle counts and \(k\)-Clique counts, respectively) are employed as indicators of suspiciousness, the sequential peeling algorithm guarantees a 3-approximation. Thus, the suspiciousness of the returned subgraph is within a factor of 3 of the optimal value.

\subsection{Related Work}\label{sec:related}

\begin{table}[tb]
\centering
\caption{Comparison of Algorithms Across Key Dimensions.}
\label{tab:algorithm-comparison}
\resizebox{0.85\linewidth}{!}{
\begin{tabular}{lcccc}
\toprule
\textbf{Systems\tablefootnote{Entries marked with (*) indicate methods that support only a single density metric and require modifications to the codebase for more metrics.}} & \textbf{Density Metric Support} & \textbf{Parallelizability} & \textbf{Weighted Graph} & \textbf{Pruning} \\
\midrule
Spade      & $\DENG$, $\DENGW$, $\Fraudar$, \TDS{}, \KCDS{} & Sequential & Yes & No \\
GBBS*      & $\DENG$, $\DENGW$, $\Fraudar$                 & Parallel   & No  & No \\
PKMC*      & $\DENG$, $\DENGW$, $\Fraudar$                 & Parallel   & No  & No \\
FWA*       & $\DENG$, $\DENGW$, $\Fraudar$                 & Parallel   & No  & No \\
ALENEX*    & $\DENG$, $\DENGW$, $\Fraudar$                 & Parallel   & No  & No \\
kCLIST     & \TDS{}, \KCDS{}                               & Parallel   & No  & No \\
PBBS       & \TDS{}, \KCDS{}                               & Parallel   & No  & No \\ \hline
Dupin      & $\DENG$, $\DENGW$, $\Fraudar$, \TDS{}, \KCDS{} & Parallel   & Yes & Yes \\
\bottomrule
\end{tabular}
}
\end{table}

\stitle{Densest Subgraph Discovery (DSD).} DSD represents a core area of research in graph theory, extensively explored in literature~\cite{lee2010survey,gionis2015dense,seidman1983network,sariyuce2018peeling,DBLP:conf/approx/Charikar00,chekuri2022densest}. The foundational max-flow-based algorithm by Goldberg et al.~\cite{goldberg1984finding} set a benchmark for exact DSD, with subsequent studies introducing more scalable exact solutions~\cite{mitzenmacher2015scalable,ma2022convex}. While speed enhancements have been achieved through various greedy approaches~\cite{hooi2016fraudar,ma2022convex,charikar2000greedy,chekuri2022densest}, these methods are inherently sequential and difficult to parallelize. For example, they often peel one vertex at a time, which results in too many peeling iterations. This sequential nature limits their scalability and efficiency on massive graphs. In graph-based fraud detection, techniques like \cite{beutel2013copycatch} and \cite{jiang2014inferring} leverage local search heuristics to identify dense subgraphs in bipartite networks, frequently applied in fraud, spam, and community detection across social and review platforms~\cite{hooi2016fraudar,shin2016corescope,ren2021ensemfdet}. Despite their widespread use, they suffer from efficiency issues in large graphs, often involving numerous iterations and lacking effective pruning techniques, exacerbating the long-tail problem. Moreover, they are limited to specific definitions of dense subgraphs. In contrast, \Dupin{} offers a versatile, parallel DSD approach. It facilitates the simultaneous peeling of multiple vertices, reducing the computational overhead. Notably, \Dupin{} introduces unique global and local pruning techniques that enhance efficiency, allowing it to handle large-scale graphs effectively. Moreover, \Dupin{} provides APIs that allow users to customize definitions of dense subgraphs, ensuring that the resultant graph density adheres to theoretical bounds. This flexibility and efficiency make \Dupin{} a significant advancement over existing methods.

\stitle{DSD on Dynamic Graphs.} To tackle real-time fraud detection, several variants have been designed for dynamic graphs~\cite{epasto2015efficient, shin2017densealert, jiang2023spade,jiang2024spade+,chen2024rush}. \cite{jiang2023spade} addresses real-time fraud community detection on evolving graphs (edge insertions) using an incremental peeling sequence reordering method, while \cite{jiang2024spade+} extends this approach to fully dynamic graphs (edge insertions and deletions). \cite{shin2017densealert} proposes an incremental algorithm that maintains and updates a dense subtensor in a tensor stream. \cite{chu2019online} and \cite{qin2022mining} utilize sliding windows to detect dense subgraphs and bursting cores within specific time windows. These methods have three limitations. First, response time is an issue. Although they respond quickly to benign communities, the results can change drastically for newly formed fraudulent communities, causing significant delays in incremental calculations for fraud detection. Second, efficiency is problematic. While incremental processing is faster than recomputation for a single transaction on billion-scale graphs, frequent incremental evaluations can be slower than a full recomputation. Third, in terms of flexibility, most methods support only a single semantic. In contrast, \Dupin{} introduces a unified parallel framework for peeling-based dense subgraph detection. \Dupin{} can respond within seconds on billion-scale graphs and defines multiple density metrics that can be used on the \Dupin{} platform  addressing the limitations of existing dynamic graph methods.

\stitle{Parallel \DDS{}.} Several parallel methods for \DDS{} have been proposed~\cite{bahmani2012densest,shi2021parallel,dhulipala2020graph,danisch2018listing,shin2016corescope,sariyuce2018peeling}. Bahmani et al.~\cite{bahmani2012densest} proposed a parallel algorithm for \DG{} using the MapReduce framework.  \FWA{}~\cite{danisch2017large} proposed a parallel algorithm for \DW{} based on the Frank-Wolfe algorithm~\cite{jaggi2013revisiting} using OpenMP~\cite{dagum1998openmp}. Some works propose parallel $k$-core algorithms~\cite{shi2021parallel,sukprasert2024practical,luo2023scalable}. Others introduce parallel algorithms for $k$-Clique detection, with approaches offering a $(1+\epsilon)$-approximation ratio and parallel implementations for the \KCDS{} problem~\cite{dhulipala2020graph,sun2020kclist++}. There are also algorithms for $k$-Clique densest subgraph detection using OpenMP~\cite{danisch2018listing}, and methods for dense subtensor detection~\cite{shin2016corescope}. However, these systems have two main limitations. First, they typically support only a single-density metric and do not handle weighted graphs. In real-world applications, transactions between users can have varying degrees of importance based on amounts, frequencies~\cite{jiang2023spade}, and other factors. Extending these systems to support other dense metrics with theoretical guarantees is non-trivial. Second, they often lack efficient pruning techniques, requiring numerous iterations and suffering from the long-tail issue, which hinders scalability on massive graphs. In contrast, \Dupin{} defines the characteristics of a set of supported dense metrics and accommodates weighted subgraphs, allowing for more nuanced analysis reflective of real-world data. Our method introduces efficient pruning strategies by peeling multiple vertices in parallel based on a threshold mechanism, significantly reducing the number of iterations and mitigating the long-tail problem. This makes \Dupin{} more effective and scalable in diverse real-world scenarios compared to previous approaches.


\section{Architecture of \Dupin{}}\label{sec:framework}

Parallelization offers a viable solution to these challenges by distributing the computational load across multiple processors. Modern CPUs are equipped with numerous cores that are cost-effective, enabling parallel processing. If the peeling process of \DDS{} is parallelized, different processors can simultaneously evaluate and peel different nodes, thereby increasing scalability and efficiency. To enable users to design complex \DDS{} algorithms to detect fraudsters based on our framework, \Dupin{}, we have introduced the architecture, APIs, and supported density metrics in this section.



\begin{figure}
        \includegraphics[width=0.9\linewidth]{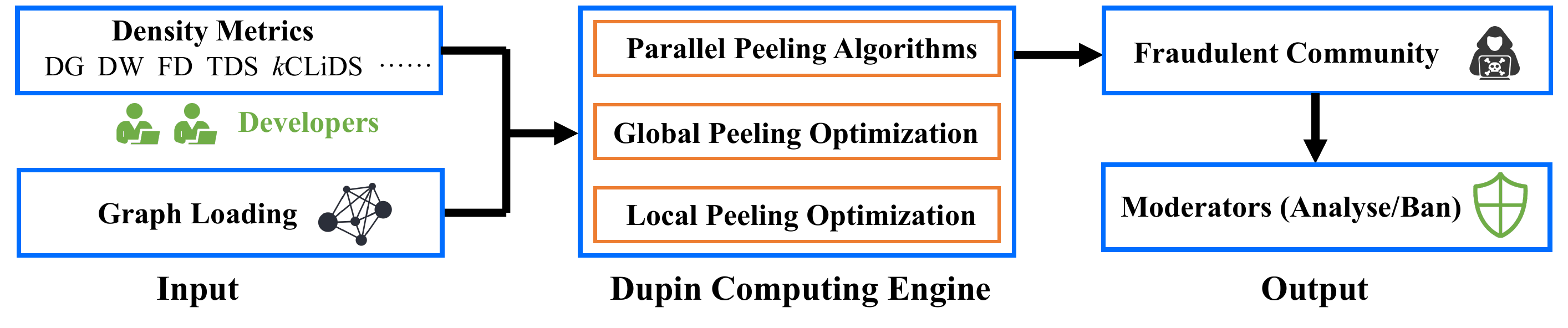}
  \caption{Architecture of \Dupin.}\label{fig:architecture}
\end{figure}

\stitle{Architecture of \Dupin{}.} Figure~\ref{fig:architecture} illustrates the architecture of \Dupin{}, designed to detect fraudulent communities through a flexible and efficient computational pipeline. Developers can define custom fraud detection semantics by leveraging APIs to specify density metrics like $\DENG$, $\DENGW$, $\Fraudar$, \TDS, and \KCDS{}. \Dupin{} begins with graph loading and processes the input through the \Dupin{} computing engine, which incorporates parallel peeling algorithms for scalable processing. To further optimize performance, both global and local peeling optimizations are implemented, enabling faster convergence and improved subgraph density at each iteration. The identified fraudulent communities are flagged and forwarded to moderators for detailed analysis and appropriate actions such as banning suspicious entities. This modular design ensures extensibility and adaptability for various graph-based fraud detection tasks.

\stitle{APIs of \Dupin{} (Figure~\ref{fig:architecture}).} \Dupin{} simplifies the process of defining detection semantics by allowing users to specify a few simple functions. Additionally, \Dupin{} offers APIs that enable users to balance detection speed and accuracy effectively. The key APIs encompass several components designed to facilitate the customization and optimization of fraud detection processes. These components provide a robust framework for users to implement their fraud detection strategies, ensuring both high performance and precision in identifying fraudulent activities. Details are listed below:

\begin{itemize}[leftmargin=*]
    \item \underline{$\mathsf{VSusp}$ and $\mathsf{ESusp}$.} Given a vertex/edge, these components are responsible for deciding the suspiciousness of the endpoint of the edge or the edge with a user-defined strategy.
    \item \underline{$\mathsf{isBenign}$.} This component is used to decide whether a vertex is benign during the whole peeling process (Section~\ref{sec:peellongtail}). If the vertex is benign, it is peeled within the current iteration.
    \item \underline{\textsf{setEpsilon}.} This function is utilized to control the precision of fraud detection. During periods of high system load, moderators can increase $\epsilon$ to achieve higher throughput. Conversely, when the system load is low, $\epsilon$ can be decreased to enhance accuracy.
    \item \underline{\textsf{setK}.} This function defines the $k$-Clique parameter for analysis.
\end{itemize}

\stitle{Example (Listing~\ref{lst:fraudar}).} To implement $\Fraudar$~\cite{hooi2016fraudar} on \Dupin{}, users simply need to plug in the suspiciousness function $\mathsf{vsusp}$ for the vertices by calling $\mathsf{VSusp}$ and the suspiciousness function $\mathsf{esusp}$ for the edges by calling $\mathsf{ESusp}$. Specifically, 1) $\mathsf{vsusp}$ is a constant function, \ie given a vertex $u$, $\mathsf{vsusp}(u) = a_i$ and 2) $\mathsf{esusp}$ is a logarithmic function such that given an edge $(u_i,u_j)$, $\mathsf{esusp}(u_i,u_j) = \frac{1}{\log (x+c)}$, where $x$ is the degree of the object vertex between $u_i$ and $u_j$, and $c$ is a positive constant ~\cite{hooi2016fraudar}. Developers can easily implement customized peeling algorithms with \Dupin{}, significantly reducing the engineering effort. The implementation of other metrics, including $\DENG$, $\DENGW$, \TDS, and \KCDS, follows a similar process. We provide detailed implementations for these metrics in Appendix~\ref{appendix:implementation} of~\cite{dupin2024}.

\begin{minipage}{\columnwidth}
\begin{lstlisting}[language=C++, style=apistyle, caption=Implementation of $\Fraudar{}$ on \Dupin., captionpos=b, label=lst:fraudar]
double vsusp(const Vertex& v, const Graph& g) {
  return g.weight[v]; // Side information on vertex }
double esusp(const Edge& e, const Graph& g) {
  return 1.0 / log(g.deg[e.src] + 5.0); }
int main() {
  Dupin dupin;
  dupin.VSusp(vsusp); // Plug in vsusp
  dupin.ESusp(esusp); // Plug in esusp
  dupin.setEpsilon(0.1);
  dupin.LoadGraph("graph_sample_path");
  vector<Vertex> fraudsters = dupin.ParDetect();
  return 0; }
\end{lstlisting}
\end{minipage}

\noindent
\stitle{Density Metrics in \Dupin{}.} To demonstrate the flexibility of \Dupin{} in accommodating various density metrics, we establish comprehensive sufficient conditions outlined in Property~\ref{property:parallelization}. These conditions ensure that, for any density metric \( g(S) \) satisfying the specified criteria, \Dupin{} can parallelize the peeling process with theoretical guarantees. This enables \Dupin{} to integrate a diverse array of metrics for effectively assessing graph density, which is essential for sophisticated fraud detection and network analysis tasks.

\begin{property}\label{property:parallelization}
Given a density metric \( g(S) \), if the following conditions are satisfied, then \Dupin{} can parallelize the peeling process with theoretical guarantees:
\begin{enumerate}
\item $g(S) = \frac{f(S)}{|S|}$, where $f$ is a monotone increasing function.
\item $a_i \geq 0$, ensuring non-negative vertex weight function.
\item $c_{ij} \geq 0$, ensuring non-negative edge weight function.
\end{enumerate}
\end{property}


\section{Parallelable peeling algorithms}\label{sec:seqpeel}

\ycc{
For sequential peeling algorithms, only one vertex can be removed at a time, with each vertex's removal depending on the removal of previous vertices. This dependency prevents the algorithm from executing multiple vertex removals simultaneously.
}

\ycc{
In this section, we introduce a parallel peeling paradigm to break this dependency chain. This new approach peels vertices in parallel without affecting the programming abstractions that \Dupin{} provides to end users. We then demonstrate how this paradigm offers fine-grained trade-offs between parallelism and approximation guarantees for all density metrics studied in this work. 
}

\subsection{Parallel Peeling Paradigm}\label{sec:batpeel}

Instead of peeling the vertex with the maximum peeling weight, we can peel all vertices that lead to a significant decrease in the density scores of the remaining graph. This decrease is controlled by a tunable parameter $\epsilon$. A larger $\epsilon$ allows more vertices to be peeled in parallel, while a smaller $\epsilon$ provides a better approximation ratio.

\begin{algorithm}[ht]
    \caption{\PEELB: Parallel Peeling}\label{algo:peeling:batch}
    \footnotesize
    \DontPrintSemicolon
    \SetKwProg{Fn}{Function}{}{}
    \SetKwFor{ParallelForEach}{parallel_for}{do}{EndParallelForEach}
    \SetKwFunction{ParSum}{parallel\_sum}
    \KwIn{$G = (V, E)$, a density metric $g(S)$, and $\epsilon \geq 0$}
    \KwOut{A subset of vertices $S$ that maximizes the density metric $g(S)$}
    $S_0 \gets V$ \tcc*[f]{Initialize the subset with all vertices}\\
    $i \gets 1$ \tcc*[f]{Initialize the index}\\
    \While{$S_{i-1} \neq \emptyset$}{
        \ParallelForEach{$u_j \in S_{i-1}$}{
            $a_j \gets \mathsf{vsusp}(u_j)$
        }
        \ParallelForEach{$(u_j, u_k) \in E$ \textbf{and} $u_j, u_k \in S_{i-1}$}{
            $c_{jk} \gets \mathsf{esusp}(u_j, u_k)$
        }
        $f(S_{i-1}) \gets$ \ParSum{$a_j$} $+$ \ParSum{$c_{jk}$} \\
        $g(S_{i-1}) \gets $ $f(S_{i-1})/$ $|S_{i-1}|$\\
        $\tau \gets k(1+\epsilon) \cdot g(S_{i-1})$ \tcc*[f]{The threshold for peeling}\\
        $\remove \gets \emptyset$ \tcc*[f]{Initialize the removal set}\\
        \ParallelForEach{$u \in S_{i-1}$}{
        \If{$w_u(S_{i-1}) \leq \tau$}{
                $\remove \gets \remove \cup \{ u \}$ \tcc*[f]{Vertices to be peeled}
                }
        }        
        $S_i \gets S_{i-1} \setminus \remove$ \tcc*[f]{Update the subset} \\
        $i \gets i + 1$ \tcc*[f]{Increment the index}
        }
    \Return{$\arg\max_{S_i} g(S_i)$} \tcc*[f]{The subset that maximizes $g(S)$}
\end{algorithm}



\stitle{Parallel Peeling (Algorithm~\ref{algo:peeling:batch}).} Let $S_i$ denote the vertex set after the $i$-th peeling iteration. Initially, $S_0 = V$ (Line~\ref{algo:peeling:init}). In each iteration, the algorithm computes vertex and edge suspiciousness in parallel, then aggregates the total weight $f(S_{i-1})$ using a \emph{parallel} summation. Based on the density, the algorithm concurrently evaluates each vertex $u \in S_{i-1}$ to determine if its peeling weight $w_u(S_{i-1})$ satisfies $w_u(S_{i-1}) \leq k(1+\epsilon)g(S_{i-1})$. Vertices meeting this condition are simultaneously added to the removal set $U$. The algorithm then removes all vertices in $U$ from $S_{i-1}$. Parameter $k$ is determined by the metric used. $k=2$ works for $\DENG$, $\DENGW$ and $\Fraudar$. $k=3$ works for \TDS{} and $k$ is the size of the cliques for \KCDS{}. This process is repeated recursively until no vertices are left, resulting in a series of vertex sets $S_0, \ldots, S_{R}$, where $R$ is the number of iterations. The algorithm then returns the set $S_i$ ($i\in [0, R]$) that maximizes $g(S_i)$, denoted as $S^p$. 

In the following two examples, $\DENGW$ and \KCDS{} are the density metrics for \DDS{}, respectively.

\begin{figure}[tb]
   \includegraphics[width=\linewidth]{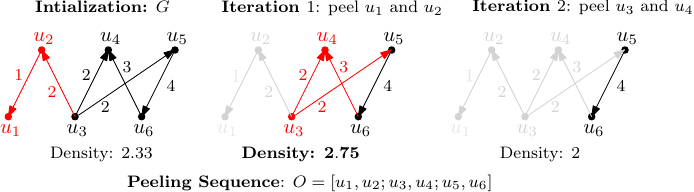}
\caption{Illustration of Parallel \DW{}. Nodes marked in red indicate those to be peeled in the current iteration. The grayed areas represent nodes that have been peeled.}\label{fig:parallel}   
\end{figure}

\begin{figure}[tb]
   \includegraphics[width=\linewidth]{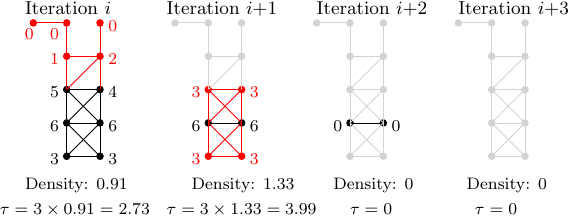}
\caption{Illustration of Parallel \KCDS{} Peeling for $k=3$. Labels adjacent to each node denote the count of $k$-Cliques that the node participates in within the current subgraph.}\label{fig:parallel-cli}   
\end{figure}

\begin{tcolorbox}[colback=gray!5, colframe=black,boxrule=0.5pt,boxsep=-2pt]
\begin{example}[Parallel Peeling Algorithm]\label{example:parallel}
Consider the graph $G$ as illustrated in Figure~\ref{fig:parallel}. The parallel peeling algorithm significantly reduces the number of iterations needed to peel the graph. In the first iteration, the peeling weights of all vertices are computed concurrently. Vertices \(u_1\) and \(u_2\) have peeling weights of 1 and 3, respectively—both of which are below twice the current density—so they satisfy the peeling criteria and are removed simultaneously. This results in an updated graph density of 2.75. The second iteration sees the simultaneous peeling of $u_3$ and $u_4$, leaving the remaining vertices $u_5$ and $u_6$ disconnected and thus effectively peeled from the graph as well. The parallel algorithm completes the peeling process in only two iterations, a notable improvement over sequential methods. Consequently, the peeling sequence is succinctly represented as $O = [u_1, u_2; u_3, u_4; u_5, u_6]$, where semicolons denote the parallel groups peeled in each iteration. The peeling process is non-monotonic; the induced subgraph $G[S]$ with $S = \{u_3, u_4, u_5, u_6\}$ attains the highest density of 2.75 and is thus returned.
\end{example}
\end{tcolorbox}

\begin{tcolorbox}[colback=gray!5, colframe=black,boxrule=0.5pt,boxsep=-2pt]
\begin{example}\label{example:parallel:clique}
In Figure~\ref{fig:parallel-cli}, we demonstrate the prowess of \Dupin{} in executing parallel node peeling for \TDS{} with $k=3$. Consider the $i$-th iteration illustrated on the left: the five vertices in red would require five iterations to peel using sequential algorithms. Prior methodologies, such as those described in~\cite{shi2021parallel}, could necessitate up to three iterations for complete peeling, as they are constrained to peeling vertices with identical weights concurrently and the recalculations of peeling weights are time-intensive. In contrast, \Dupin{} computes, in parallel, the number of \(k\)-Cliques each vertex participates in and checks whether this count falls below the peeling threshold. This parallel evaluation enables the simultaneous removal of all five vertices in a single iteration, thereby producing a denser subgraph for the subsequent iteration. Overall, \Dupin{} reduces the total peeling process to just three iterations to extract the target dense subgraph, highlighting its superior efficiency in leveraging parallelism.
\end{example}
\end{tcolorbox}

\subsection{Theoretical Analysis}
\ycc{
In this section, we analyze the theoretical properties of the parallel peeling process. We begin by examining the number of iterations required for parallel peeling. Built upon the result, we then determine the time complexity and the approximation ratio of the parallel peeling process.}
\eat{
\stitle{Theoretical Analysis Overview.} In Theorem~\ref{theorem:finite}, we prove that Algorithm~\ref{algo:peeling:batch} terminates in a finite number of iterations by demonstrating that the number of vertices peeled in each iteration is non-zero. Additionally, in Theorem~\ref{theorem:2ppr}, we establish that our peeling framework has a theoretical bound on density, ensuring the quality of the detected subgraphs.
}
\begin{lemma}\label{lemma:finite}
Let \( R \) denote the upper bound on the number of rounds required for the peeling process. For any \( \epsilon > 0 \), the peeling process requires at most \( R < \log_{1+\epsilon} |V| \) rounds.
\end{lemma}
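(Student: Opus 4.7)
The plan is to prove the bound by a simple averaging (Markov-style) argument: in every round a constant fraction of the surviving vertices must be peeled, so the surviving set shrinks geometrically.

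First I would establish a \emph{sum identity} linking peeling weights to $f$. For every density metric supported by Property~\ref{property:parallelization}, one verifies that
\[
\sum_{u \in S_{i-1}} w_u(S_{i-1}) \;\leq\; k\, f(S_{i-1}),
\]
where $k=2$ for $\DENG$, $\DENGW$, $\Fraudar$ (each edge weight is counted in the peeling weight of both endpoints, and the vertex weight $a_i$ is counted exactly once for $u_i$) and $k$ equals the clique size for \TDS{} and \KCDS{} (each $k$-clique is counted once for each of its $k$ vertices). This is the same constant $k$ used in the peeling threshold $\tau = k(1+\epsilon)\,g(S_{i-1})$ of Algorithm~\ref{algo:peeling:batch}, which is exactly why that choice is made.

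Second, I would turn this sum identity into a lower bound on the number of vertices peeled per round. Dividing by $|S_{i-1}|$, the average peeling weight satisfies
\[
\frac{1}{|S_{i-1}|}\sum_{u \in S_{i-1}} w_u(S_{i-1}) \;\leq\; k \, g(S_{i-1}).
\]
By a standard Markov-type counting argument, the number of vertices with $w_u(S_{i-1}) > k(1+\epsilon) g(S_{i-1})$ is strictly less than $\tfrac{1}{1+\epsilon}|S_{i-1}|$. Since every vertex failing this strict inequality lies in the peeling set $U$, we conclude $|S_i| < \tfrac{1}{1+\epsilon}|S_{i-1}|$.

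Third, iterating the geometric shrinkage gives $|S_i| < |V|/(1+\epsilon)^i$. The loop terminates as soon as $|S_i|=0$, and since $|S_i|$ is an integer, the first index at which $|V|/(1+\epsilon)^i < 1$ — equivalently $i > \log_{1+\epsilon}|V|$ — already forces termination; hence the total number of rounds $R$ satisfies $R < \log_{1+\epsilon}|V|$ as claimed. The main obstacle is the first step: the sum identity has to be checked uniformly for all supported metrics, because the constant $k$ in the peeling threshold is what couples the averaging argument to the monotone, non-negative weight framework of Property~\ref{property:parallelization}; once this is in place, the remaining steps are routine averaging and geometric decay.
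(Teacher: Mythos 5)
Your proposal is correct and follows essentially the same route as the paper's proof: both establish the bound $\sum_{u \in S_{i-1}} w_u(S_{i-1}) \leq k\,f(S_{i-1})$ (with $k=2$ for $\DENG$, $\DENGW$, $\Fraudar$ and $k$ the clique size for \TDS{} and \KCDS{}), then observe that every surviving vertex has peeling weight strictly above the threshold $k(1+\epsilon)g(S_{i-1})$, which forces $|S_i| < \tfrac{1}{1+\epsilon}|S_{i-1}|$ and hence the logarithmic round bound. The only cosmetic difference is that you package the per-round shrinkage as a Markov-type counting argument, whereas the paper splits the sum explicitly over peeled and surviving vertices — these are the same inequality.
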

\begin{proof}
We prove that for any $\epsilon \geq 0$, the size of the remaining vertex set is bounded by $|S_{i}| < \frac{1}{1+\epsilon}|S_{i-1}|$ for all $i \in [1,R]$. \ycc{We first examine the peeling weights of the vertices in $S_{i-1}$ for the density metrics $\DENG{}$. $\DENGW{}$ and $\Fraudar{}$ where $k=2$.} As each edge weight in the peeling weight calculation contributes twice (once for each endpoint), and the vertex weights contribute once, we have
\begin{equation}\label{eq:twice}
\sum_{u \in S_{i-1}} w_u(S_{i-1}) \leq 2f(S_{i-1}),    
\end{equation}
The peeling weights of the vertices in $S_{i-1}$ are calculated as follows:
\begin{equation}\label{eq:smaller}
\centering
\begin{split}
    \sum\limits_{u\in S_{i-1}} w_{u}(S_{i-1}) & = \sum\limits_{u\in S_{i-1}\setminus S_{i}}w_u(S_{i-1}) + \sum\limits_{u\in S_{i}}w_u(S_{i-1}) \geq \sum\limits_{u\in S_{i}}w_u(S_{i-1}) > 2(1+\epsilon)|S_i| g(S_{i-1})
\end{split}
\end{equation}
Combining Equation~\ref{eq:twice} and Equation~\ref{eq:smaller}, we have:
\begin{equation}
     2(1+\epsilon) |S_i| g(S_{i-1}) < 2f(S_{i-1}) = 2|S_{i-1}|g(S_{i-1})
\end{equation}
For all $i\in [1,R]$,
\begin{equation}
    |S_{i}| < \frac{1}{1+\epsilon}|S_{i-1}|
\end{equation}
\begin{equation}\label{eq:expdecrease}
 \Rightarrow   |V| = |S_0| > \frac{1}{1+\epsilon} |S_1| > \ldots > (\frac{1}{1+\epsilon})^R |S_R|
\end{equation}
Therefore, we have $R < \log_{1+\epsilon}|V|$. 

\ycc{For the density metrics \TDS{} and \KCDS{}, we use the fact that each edge contributes to the peeling weight calculation exactly $k$ times due to the counting of each edge within every $k$-Clique in the graph ($k=3$ for \TDS{}). This multiplicity of edge contributions leads to the relationship:
\begin{equation}\label{eq:ktimes}
\sum_{u \in S_{i-1}} w_u(S_{i-1}) \leq kf(S_{i-1}),    
\end{equation}
}
We then adapt Equations~\ref{eq:smaller}-\ref{eq:expdecrease} and the proof follows.
\end{proof}




\tr{
\begin{lemma}\label{lemma:rm}
Given any $\epsilon \geq 0$, $\forall i \in [1,R]$, $|S_{i-1}| - |S_{i}| > 0$.
\end{lemma}
\begin{proof}
We prove this in contradiction by assuming that $|S_{i-1}| - |S_{i}|=0$. Since $S_{i}\subseteq S_{i-1}$, $S_{i} = S_{i-1}$ which implies that:
\begin{equation}
\forall u \in S_{i-1}, w_u(S_{i-1}) > 2(1+\epsilon)g(S_{i-1}).
\end{equation}
Therefore, we have the following.
\begin{equation}\label{eq:left}
    \sum\limits_{u\in S_{i-1}} w_{u}(S_{i-1}) > 2(1+\epsilon)|S_{i-1}|g(S_{i-1}) = 2(1+\epsilon)f(S_{i-1})
\end{equation}
Since the weights on vertices are calculated once, while those on edges are calculated twice, we have:
\begin{equation}\label{eq:right}
    \sum\limits_{u\in S_{i-1}} w_{u}(S_{i-1}) \leq 2f(S_{i-1})
\end{equation}
The equality is established if all the weights of the vertices are equal to $0$. When we combine Equation~\ref{eq:left} and Equation~\ref{eq:right}, we have the following.
\begin{equation}
    2f(S_{i-1}) > 2(1+\epsilon)f(S_{i-1})
\end{equation}
which contradicts that $\epsilon \geq 0$. Hence, $|S_{i-1}| - |S_{i}| > 0$.
\end{proof}
}

\tr{
Given the guarantee from Lemma~\ref{lemma:nonempty_peel} that at least one vertex is peeled in every iteration, we can deduce that the number of iterations $R$ will not exceed the total number of vertices, $|V|$. This leads us to a more robust and conclusive statement:
}




\stitle{Time Complexity.} Algorithm~\ref{algo:peeling:batch} operates for at most $R = \log_{1+\epsilon}|V|$ iterations, with each iteration scanning all vertices, taking $O(|V|)$ time. As vertices are peeled, updates are necessitated for the peeling weights of their neighboring vertices. Over the course of $R$ iterations, there are no more than $|E|$ such updates for $\DENG$, $\DENGW$ and $\Fraudar$. 
\ycc{
Consequently, Algorithm~\ref{algo:peel2} incurs $O((|E|+ |V|)\log_{1+\epsilon}|V|)$ node/edge weight updates. 
For $\DENG$, $\DENGW$ and $\Fraudar$, the update costs are $O(1)$.
For \TDS{} and \KCDS{}, we follow previous studies~\cite{chiba1985arboricity,danisch2018listing} to compute the peeling weights, and the complexity is $O(k|E|\alpha(G)^{k-2})$, where $\alpha(G)$ is the arboricity of the graph. Hence, the overall complexity is $O(k|E|\alpha(G)^{k-2}(|E|+ |V|)\log_{1+\epsilon}|V|)$. The cost of peeling weight updates is orthogonal to the \Dupin{} framework. 
}

\stitle{Approximation Ratio.}
Next, we are ready to show that \Dupin{} provides a theoretical guarantee with a $k(1+\epsilon)$-approximation for all density metrics studied in this work. We denote $u_i \in S^p$ as the first vertex in $S^*$ peeled by \Dupin{}, and $S'$ as the peeling subsequence starting from $u_i$.

\tr{
\begin{lemma}\label{lemma:opt}
    $\forall u_i\in S^*$, $w_{u_i}(S^*) \geq g(S^*)$.
\end{lemma}
\begin{proof}
We prove it in contradiction. We assume that $w_{u_i}(S^*) < g(S^*)$. By peeling $u_i$ from $S^*$, we have the following.
    \begin{equation}
        \begin{split}
            g(S^{*}\setminus \{u_i\}) & = \frac{f(S^{*}) - w_{u_i}(S^{*})}{|S^{*}|-1} > \frac{f(S^{*}) - g(S_i)}{|S^{*}|-1} \\
            & = \frac{f(S^{*}) - g(S^*)}{|S^{*}|-1} = \frac{f(S^{*}) - \frac{f(S^{*})}{|S^{*}|}}{|S^{*}|-1} = g(S^*)
        \end{split}       
    \end{equation}
Therefore, a better solution can be obtained by removing $u_i$ from $S^*$, which contradicts the fact that $S^*$ is the optimal solution.
\end{proof}
}

\begin{theorem}\label{theorem:2ppr}
For the vertex set $S^p$ returned by \Dupin{} and the optimal vertex set $S^*$, $g(S^p) \geq \frac{g(S^*)}{k(1+\epsilon)}$.
\end{theorem}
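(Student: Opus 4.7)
The plan is to adapt the classical two-approximation argument for sequential peeling to the parallel setting by tracking what happens to a distinguished vertex of $S^*$ at the moment it is peeled. Concretely, I would let $u_i$ be the first vertex of $S^*$ that \Dupin{} removes, and let $S_{i-1}$ be the working set at the start of that iteration. By the choice of $u_i$, the whole optimum is still present, i.e.\ $S^* \subseteq S_{i-1}$. The proof will then chain three inequalities relating $w_{u_i}(S^*)$, $w_{u_i}(S_{i-1})$, and $g(S_{i-1})$ to conclude $g(S_{i-1}) \geq g(S^*)/(k(1+\epsilon))$, after which the final bound follows because the algorithm returns the subset with maximum density over all rounds.

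The first ingredient is a local-optimality statement for $S^*$: since removing any $u \in S^*$ cannot increase the density, a direct manipulation of $g(S^* \setminus \{u\}) \leq g(S^*)$ yields $w_{u_i}(S^*) \geq g(S^*)$ (this is essentially the auxiliary Lemma~\ref{lemma:opt} indicated in the text, and it uses only that $g(S) = f(S)/|S|$, which holds by Property~\ref{property:parallelization}). The second ingredient is a monotonicity property of peeling weights under the non-negativity conditions of Property~\ref{property:parallelization}: since $S^* \subseteq S_{i-1}$ and all $a_j, c_{jk}$ as well as triangle/$k$-clique contributions are non-negative, extending the ambient set from $S^*$ to $S_{i-1}$ can only add positive terms to the sum that defines $w_{u_i}$, giving
\begin{equation*}
w_{u_i}(S_{i-1}) \geq w_{u_i}(S^*).
\end{equation*}
The third ingredient is simply the peeling rule of Algorithm~\ref{algo:peeling:batch}, which, having actually peeled $u_i$ in round $i$, certifies that $w_{u_i}(S_{i-1}) \leq k(1+\epsilon)\, g(S_{i-1})$.

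Chaining the three inequalities gives
\begin{equation*}
g(S^*) \;\leq\; w_{u_i}(S^*) \;\leq\; w_{u_i}(S_{i-1}) \;\leq\; k(1+\epsilon)\, g(S_{i-1}),
\end{equation*}
and since $S^p$ maximizes $g$ over all generated subsets we obtain $g(S^p) \geq g(S_{i-1}) \geq g(S^*)/(k(1+\epsilon))$. The argument is uniform across the five metrics because each was cast in the framework $f(S)=\sum a_i+\sum c_{ij}$ with non-negative weights (with the vertex-weight encoding $a_i=t_i/3$ or $a_i=k_i/k$ for \TDS{} and \KCDS{}), so both the local-optimality lemma and the monotonicity step apply identically; only the constant $k$ in the threshold changes.

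The main obstacle I anticipate is the monotonicity step for the clique-based metrics. For \DENG{}, \DENGW{}, \Fraudar{} monotonicity of $w_{u_i}$ in the ambient set is immediate from the definition, but for \TDS{} and \KCDS{} one has to verify carefully that the rewritten form with $a_i=t_i/3$ (respectively $k_i/k$) still obeys $w_{u_i}(S^*) \leq w_{u_i}(S_{i-1})$; this requires observing that when a triangle (or $k$-clique) is destroyed in passing from $S_{i-1}$ to $S^*$, the corresponding reduction in the $a$-weights at $u_i$ is non-negative, which follows from the non-negativity clauses of Property~\ref{property:parallelization}. Once this is in place, the rest of the proof is routine algebra.
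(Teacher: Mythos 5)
Your proposal is correct and follows essentially the same route as the paper's proof: identify the first vertex $u_i$ of $S^*$ peeled, establish $w_{u_i}(S^*)\ge g(S^*)$ from local optimality of $S^*$, use $S^*\subseteq S_{i-1}$ to get $w_{u_i}(S^*)\le w_{u_i}(S_{i-1})$, apply the peeling threshold $w_{u_i}(S_{i-1})\le k(1+\epsilon)g(S_{i-1})$, and chain. Your explicit attention to why the monotonicity step survives the $a_i=t_i/3$ (resp.\ $k_i/k$) encoding for \TDS{} and \KCDS{} is a point the paper passes over with ``it is clear that,'' so that extra care is welcome but does not change the argument.
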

\begin{proof}

We first prove that $\forall u_i\in S^*$, $w_{u_i}(S^*) \geq g(S^*)$. We prove it in contradiction. We assume that $w_{u_i}(S^*) < g(S^*)$. By peeling $u_i$ from $S^*$, we have the following.
    \begin{equation}
        \begin{split}
            g(S^{*}\setminus \{u_i\}) & = \frac{f(S^{*}) - w_{u_i}(S^{*})}{|S^{*}|-1} > \frac{f(S^{*}) - g(S_i)}{|S^{*}|-1} = \frac{f(S^{*}) - g(S^*)}{|S^{*}|-1} = \frac{f(S^{*}) - \frac{f(S^{*})}{|S^{*}|}}{|S^{*}|-1} = g(S^*)
        \end{split}       
    \end{equation}
Therefore, a better solution can be obtained by removing $u_i$ from $S^*$, which contradicts the fact that $S^*$ is the optimal solution. We can conclude that $w_{u_i}(S^*) \geq g(S^*)$.

Next, we assume that $u_i \in S^p$ is the first vertex in $S^*$ peeled by the peeling algorithm. Since $S^*$ is the optimal vertex set, we have
    \begin{equation}
        g(S^*) \leq w_{u_i}(S^*),    
    \end{equation}
    because the optimal value $g(S^*)$ must be less than or equal to the weight contribution of any single vertex in $S^*$, in particular $u_i$.
    Since $S^* \subseteq S'$, it is clear that
    \begin{equation}
        w_{u_i}(S^*) \leq w_{u_i}(S')
    \end{equation}
    By the peeling condition, we have 
    \begin{equation}
        w_{u_i}(S') \leq k(1+\epsilon) g(S')
    \end{equation}
    \begin{equation}
    \Rightarrow    g(S^*) \leq w_{u_i}(S^*) \leq w_{u_i}(S') \leq k(1+\epsilon) g(S') \leq k(1+\epsilon) g(S^p)
    \end{equation}
    Hence, we can conclude that $g(S^p) \geq \frac{g(S^*)}{k(1+\epsilon)}$.
\end{proof}


\eat{
\ycc{
Next, we extend our analysis to show that \Dupin{} provides $k(1+\epsilon)$-approximation for \DDS{} with \KCDS{} as the density metric.
The analysis can be directly applied to \TDS{} since \TDS{} is a special for \KCDS{} when $k=3$.
To adapt \KCDS{} as the density metric for parallel peeling described in Algorithm~\ref{algo:peeling:batch}, we need to update the peeling weight $w_u(S_{i-1})$ to reflect the number of k-Cliques removed when $u$ is removed from $S_{i-1}$. The challenge in our analysis lies in the fact that the computation of a node's peeling weight under \KCDS{} extends beyond its immediate neighbors. 
}
}

\eat{
\begin{lemma}\label{lemma:finite:kcds}
Given any $\epsilon > 0$, $R < \log_{1+\epsilon}|V|$.
\end{lemma}
\begin{proof}
    We prove that for any $\epsilon \geq 0$, the size of the remaining vertex set is bounded by $|S_{i}| < \frac{1}{1+\epsilon}|S_{i-1}|$ for all $i \in [1,R]$. For \KCDS{}, each edge contributes to the peeling weight calculation multiple times, precisely $k$ times due to the counting of each edge within every $k$-Clique in the graph. This multiplicity of edge contributions leads to the relationship:
\begin{equation}\label{eq:ktimes}
\sum_{u \in S_{i-1}} w_u(S_{i-1}) = kf(S_{i-1}),    
\end{equation}
The peeling weights of the vertices in $S_{i-1}$ are calculated as follows:
\begin{equation}\label{eq:smaller:kclique}
\centering
\begin{split}
    \sum\limits_{u\in S_{i-1}} w_{u}(S_{i-1}) & = \sum\limits_{u\in S_{i-1}\setminus S_{i}}w_u(S_{i-1}) + \sum\limits_{u\in S_{i}}w_u(S_{i-1}) \\
    & \geq \sum\limits_{u\in S_{i}}w_u(S_{i-1}) > k(1+\epsilon)|S_i| g(S_{i-1})
\end{split}
\end{equation}
Combing Equation~\ref{eq:ktimes} and Equation~\ref{eq:smaller:kclique}, we have:
\begin{equation}
     k(1+\epsilon) |S_i| g(S_{i-1}) < kf(S_{i-1}) = k|S_{i-1}|g(S_{i-1})
\end{equation}
For all $i\in [1,R]$,
\begin{equation}
    |S_{i}| < \frac{1}{1+\epsilon}|S_{i-1}|
\end{equation}
Then we have:
\begin{equation}
    |V| = |S_0| > \frac{1}{1+\epsilon} |S_1| > \ldots > (\frac{1}{1+\epsilon})^R |S_R|
\end{equation}
Therefore, we have $R < \log_{1+\epsilon}|V|$.
\end{proof}
}

\tr{
\ycc{
\begin{lemma}\label{lemma:k_nonempty_peel}
The peeled vertex set $U$ is not empty in Algorithm~\ref{algo:peeling:batch} when \KCDS{} is used as the density metric, \ie $|U_1| > 0$.
\end{lemma}
}
\begin{proof}
Assume for contradiction that in the $i$-th iteration, $U_1 = \emptyset$. This means that for all $u \in S_{i-1}$, we have $w_u(S_{i-1}) > k(1+\epsilon)g(S_{i-1})$. Summing these inequalities over all vertices in $S_{i-1}$, we get
\begin{equation}
\sum_{u \in S_{i-1}} w_u(S_{i-1}) > k(1+\epsilon)|S_{i-1}|g(S_{i-1}) = k(1+\epsilon)f(S_{i-1})    
\end{equation}
%
For \KCDS{}, each edge contributes to the peeling weight calculation multiple times, precisely $k$ times due to the counting of each edge within every $k$-Clique in the graph. This multiplicity of edge contributions leads to the relationship:
\begin{equation}\label{eq:ktimes}
\sum_{u \in S_{i-1}} w_u(S_{i-1}) = kf(S_{i-1}),    
\end{equation}
where the sum of the weights \( w_u(S_{i-1}) \) for all vertices \( u \) in the subset \( S_{i-1} \) is bounded above by \( k \) times the function \( f(S_{i-1}) \), which denotes the total weight of all $k$-Cliques in the subset \( S_{i-1} \). This equality leads to a contradiction since $kf(S_{i-1}) > k(1+\epsilon)f(S_{i-1})$ cannot be true for any $\epsilon \geq 0$. Therefore, our assumption that $U = \emptyset$ must be false, and there must be at least one vertex peeled in each iteration.
\end{proof}
}

\eat{
\begin{theorem}\label{thm:kppr}
For the vertex set $S^p$ returned by \Dupin{} and the optimal vertex set $S^*$, the relationship $g(S^p) \geq \frac{g(S^*)}{k(1+\epsilon)}$ holds.
\end{theorem}
\begin{proof}
    We assume that $u_i \in S^p$ is the first vertex in $S^*$ peeled by the peeling algorithm. Since $S^*$ is the optimal vertex set, we have
    \begin{equation}
        g(S^*) \leq w_{u_i}(S^*),    
    \end{equation}
    because the optimal value $g(S^*)$ must be less than or equal to the weight contribution of any single vertex in $S^*$, in particular $u_i$.
    Since $S^* \subseteq S'$, it is clear that
    \begin{equation}
        w_{u_i}(S^*) \leq w_{u_i}(S')
    \end{equation}
    By the peeling condition, we have 
    \begin{equation}
        w_{u_i}(S') \leq k(1+\epsilon) g(S')
    \end{equation}
    Therefore, we have
    \begin{equation}
        g(S^*) \leq w_{u_i}(S^*) \leq w_{u_i}(S') \leq k(1+\epsilon) g(S') \leq k(1+\epsilon) g(S^p)
    \end{equation}
    where the last term $g(S')$ is at most $g(S^p)$.
Therefore, we can conclude that $g(S^p) \geq \frac{g(S^*)}{k(1+\epsilon)}$
\end{proof} 
}


\stitle{Summary.} 
\ycc{
\Dupin{} can balance efficiency and the worst-case approximation ratio with a single parameter, $\epsilon$. A larger $\epsilon$ allows more vertices to be peeled in parallel, reducing peeling iterations. Meanwhile, the worst-case approximation ratio of parallel peeling is only affected by a factor of $(1+\epsilon)$ compared to the ratio for sequential peeling \wrt all density metrics studied in this work.
}

\eat{
\begin{lemma}\label{lemma:opt2}
    $\forall S\subseteq S^*$, $\frac{w_{S}(S^*)}{|S|} \geq g(S^*)$.
\end{lemma}
\begin{proof}
We prove it in contradiction. We assume that $\frac{w_{S}(S^*)}{|S|} < g(S^*)$. By peeling $S$ from $S^*$, we have the following.
    \begin{equation}
        \begin{split}
            g(S^{*}\setminus S) & = \frac{f(S^{*}) - w_{S}(S^{*})}{|S^{*}|-|S|} > \frac{f(S^{*}) - |S|g(S^*)}{|S^{*}|-|S|} \\
            & = \frac{f(S^{*}) - |S|\frac{f(S^{*})}{|S^*|}}{|S^{*}|-|S|} = \frac{f(S^*)(1-\frac{|S|}{|S^*|})}{|S^{*}|-|S|} =  \frac{f(S^{*})}{|S^*|} = g(S^*)
        \end{split}
    \end{equation}
Therefore, a better solution can be obtained by removing $u_i$ from $S^*$, which contradicts the fact that $S^*$ is the optimal solution.
\end{proof}
}


\section{Long-Tail Pruning}\label{sec:peellongtail}

We observe that the parallel peeling algorithm encounters two significant issues. First, there is a long-tail problem where subsequent iterations fail to produce denser subgraphs. Instead, each iteration only peels off a small fraction of vertices, which significantly impacts the efficiency. Second, during batch peeling, each peeling iteration often results in disparate structures. These are primarily caused by the peeling of certain vertices, which leaves their neighboring structure disconnected and fragmented. This phenomenon affects the continuity and coherence of the subgraphs.

As shown in Table~\ref{tab:peeling_rounds_disconnect_nodes}, on the \texttt{la} dataset (the largest social network dataset tested in our experiments), $\DENG$, $\DENGW$, and $\Fraudar$ experience 24.67\%, 46.84\%, and 3.01\% of rounds as ineffective long-tail iterations, severely limiting response time. Additionally, during the peeling process, if the batch size is large (e.g., $\epsilon=0.5$), approximately 25.34\%, 29.46\%, and 7.16\% of nodes become disconnected and sparse, significantly impacting the quality of the detected subgraphs. 

\begin{table}[h]
\centering
\caption{Impact of $\mathsf{GPO}$ and $\mathsf{LPO}$ on Peeling Rounds for Various Metrics. Experimented on dataset \texttt{la}.}
\label{tab:peeling_rounds_disconnect_nodes}
\resizebox{0.8\linewidth}{!}{
\begin{tabular}{@{}lccc@{}}
\toprule
\textbf{Method} & \textbf{$\DENG$} & \textbf{$\DENGW$} & \textbf{$\Fraudar$} \\ \midrule
Rounds without $\mathsf{GPO}$  & 17{,}637 & 150{,}223 & 112{,}074 \\ \midrule 
Rounds with $\mathsf{GPO}$     & 13{,}287 & 79{,}835 & 108{,}706 \\
Long-tail vertices  & 45{,}017{,}232 & 48{,}248{,}685 & 5{,}658{,}425 \\
\% Reduction in Rounds & 24.67\% & 46.84\% & 3.01\% \\ \midrule 
Rounds with $\mathsf{LPO}$     & 3{,}221 & 10{,}832 & 101{,}255 \\
Sparse vertices     & 13{,}324{,}405 & 15{,}487{,}382 & 3{,}762{,}288 \\
\% Reduction in Rounds & 81.74\% & 92.79\% & 9.65\% \\ \bottomrule
\end{tabular}
}
\end{table}

To address the challenges of high iteration counts in peeling algorithms, \Dupin{} introduces two long-tail pruning techniques: Global Peeling Optimization ($\mathsf{GPO}$) to maintain a global peeling threshold, and Local Peeling Optimization ($\mathsf{LPO}$) to improve the density of the current peeling iteration's subgraph, thereby increasing the peeling threshold. We tested three algorithms, $\DENG$, $\DENGW$, and $\Fraudar$, and found that they required 17,637, 150,223, and 112,074 iterations respectively, which is significantly high. By strategically pruning long-tail vertices and sparse vertices, the number of iterations can be reduced by up to 46.84\%. Additionally, $\mathsf{LPO}$ can further reduce the number of iterations by up to 92.79\%. These optimizations significantly enhance the efficiency of the peeling process, as discussed in this section.

\subsection{Global Peeling Optimization}


Initially, we define what constitutes a \textit{long-tail vertex} during the peeling process. Peeling these long-tail vertices results in ineffective peeling iterations. To address this, we maintain a global maximum peeling threshold to determine which vertices can be peeled directly. When the global peeling threshold is greater than the current iteration's threshold, we use the global peeling threshold.

\begin{algorithm}[ht]
    \caption{\PEELLT: Global Peeling Optimization}\label{algo:peel2}
    \footnotesize
    \DontPrintSemicolon
    \SetKwProg{Fn}{Function}{}{}
    \SetKwFor{ParallelForEach}{parallel_for}{do}{EndParallelForEach}
    \SetKwFunction{ParSum}{parallel\_sum}
    \KwIn{$G = (V, E)$, a density metric $g(S)$, and $\epsilon \geq 0$}
    \KwOut{A subset of vertices $S^p$ that maximizes the density metric $g(S)$}
    $S_0 \gets V$ \tcc*[f]{Initialize the subset with all vertices}\\
    $i \gets 1$ \tcc*[f]{Initialize the index}\\
    $\tau_{\max} \gets 0$ \label{algo:peeling:init:tau} \tcc*[f]{Initialize the global threshold}\\
    \While{$S_{i-1} \neq \emptyset$}{
        \ParallelForEach{$u_j \in S_{i-1}$}{
            $a_j \gets \mathsf{vsusp}(u_j)$
        }
        \ParallelForEach{$(u_j, u_k) \in E$ \textbf{and} $u_j, u_k \in S_{i-1}$}{
            $c_{jk} \gets \mathsf{esusp}(u_j, u_k)$
        }
        $f(S_{i-1}) \gets$ \ParSum{$a_j$} $+$ \ParSum{$c_{jk}$} \\
        $g(S_{i-1}) \gets $ $f(S_{i-1})/$ $|S_{i-1}|$\\
        $\tau_{\max} \gets \max\{ \tau_{\max}, \frac{g(S_{i-1})}{k(1+\epsilon)} \}$ \label{algo:peeling:init:refine}\tcc*[f]{Refine the global threshold}\\
        $\tau \gets \max\{\tau_{\max}, k(1+\epsilon) \cdot g(S_{i-1})\}$ \tcc*[f]{Peeling threshold}\\
        $\remove \gets \emptyset$ \tcc*[f]{Initialize the removal set}\\
        \ParallelForEach{$u \in S_{i-1}$}{
            \If{$w_u(S_{i-1}) \leq \tau$}{
                $\remove \gets \remove \cup \{ u \}$ \tcc*[f]{Vertices to be peeled}
            }
        }
        $S_i \gets S_{i-1} \setminus \remove$ \tcc*[f]{Update the subset}\\
        $i \gets i + 1$ \tcc*[f]{Increment the index}\\
    }    
    \Return{$\arg\max_{S_i} g(S_i)$} \tcc*[f]{The subset that maximizes $g(S)$}
\end{algorithm}


\begin{definition}[Long-Tail vertex] Given an $\alpha$-$approximation$ \DDS{} algorithm and a set of vertices $S\subseteq V$, if $w_u(S) < \frac{g(S^*)}{\alpha}$ and $S^*$ is the optimal vertex set, then $u$ is a long-tail vertex.
\end{definition}

\begin{lemma}\label{lemma:opt}
    If $\exists u \in S_i$ is a long-tail vertex, then $S_i \neq S^{P}$.
\end{lemma}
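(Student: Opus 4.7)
The plan is to argue by contradiction: assume $S_i = S^P$ and derive a contradiction by exhibiting a subgraph strictly denser than $g(S^P)$, which together with the maximality of $S^P$ over the peeling sequence will yield the desired conclusion. The argument combines the approximation bound of the parallel peeling framework with the classical one-vertex density-improvement calculation already used in the proof of Theorem~\ref{theorem:2ppr}.

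First, I would invoke Theorem~\ref{theorem:2ppr} to obtain $g(S^P) \geq g(S^*)/\alpha$, where $\alpha = k(1+\epsilon)$ is the approximation factor for all density metrics studied in this work. Under the contradiction hypothesis $S_i = S^P$, this gives $g(S_i) \geq g(S^*)/\alpha$. Combining this with the defining inequality of a long-tail vertex, $w_u(S_i) < g(S^*)/\alpha$, yields the crucial strict bound
\[
w_u(S_i) \;<\; \frac{g(S^*)}{\alpha} \;\leq\; g(S_i).
\]

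Second, I would carry out the standard peel-one-vertex density computation (mirroring the manipulation used to prove $w_{u_i}(S^*) \geq g(S^*)$ in Theorem~\ref{theorem:2ppr}) to get
\[
g(S_i \setminus \{u\}) \;=\; \frac{f(S_i) - w_u(S_i)}{|S_i|-1} \;>\; \frac{f(S_i) - g(S_i)}{|S_i|-1} \;=\; \frac{f(S_i)(|S_i|-1)/|S_i|}{|S_i|-1} \;=\; g(S_i).
\]
So $S_i \setminus \{u\}$ strictly dominates $S^P$ in density.

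Third, to close the contradiction against $S^P = \arg\max_j g(S_j)$, I would argue that the peeling sequence must contain, at some iteration $j > i$, a subset whose density exceeds $g(S_i)$. Since $w_u(S_i) < g(S_i) \leq k(1+\epsilon) g(S_i)$, the long-tail vertex $u$ meets the peeling threshold in iteration $i+1$ and is therefore removed; monotonicity of peeling weights ($w_u(S') \leq w_u(S_i)$ whenever $S' \subseteq S_i$ and $u \in S'$) guarantees that $u$ cannot persist in any later $S_j$. I would then isolate the contribution of $u$ from those of the other vertices removed alongside it and argue that the strict density gap established in step two propagates to at least one recorded $g(S_j)$ in the sequence, contradicting $S_i = S^P$.

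The main obstacle is step three: in parallel peeling, the round that removes $u$ may also remove other vertices whose peeling weights, while bounded by $k(1+\epsilon) g(S_i)$, are not below $g(S_i)$, so the \emph{aggregate} effect on density is not immediately guaranteed to be positive. The technical work will consist of a careful partition of the removed set and a bookkeeping argument showing that along the trajectory $S_{i+1}, S_{i+2}, \ldots$ some recorded density must strictly exceed $g(S_i)$; this is where I expect the bulk of the effort to lie.
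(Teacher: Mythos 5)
Your steps 1 and 2 reproduce the paper's proof exactly. The paper assumes $S_i = S^P$, combines the long-tail definition with the approximation guarantee to get $w_u(S_i) < g(S^*)/\alpha \le g(S^P)$, runs the same one-vertex density calculation to conclude $g(S^P\setminus\{u\}) > g(S^P)$, and then simply stops, declaring that the existence of this strictly denser set contradicts $S_i = S^P$. Nothing resembling your step 3 appears in the paper.

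That said, the difficulty you isolate in step 3 is genuine, and it is a gap in the paper's own argument rather than one you introduced: $S^P$ is defined as $\arg\max_j g(S_j)$ over the \emph{recorded} peeling sequence $S_0,\dots,S_R$, and $S_i\setminus\{u\}$ need not be one of the recorded sets, so exhibiting it does not formally contradict $S_i = S^P$ under that definition. Your observation that the round which removes $u$ also removes other vertices --- whose peeling weights are only bounded by $k(1+\epsilon)\,g(S_i)$, not by $g(S_i)$ --- is precisely why the naive completion fails: the aggregate density change of a parallel round is not sign-controlled. The paper sidesteps this by implicitly reading $S^P$ as locally optimal under single-vertex deletion. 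If you want a rigorous route, the productive fix is not to chase the recorded densities along $S_{i+1},S_{i+2},\dots$ but to argue as the paper does for correctness elsewhere (Theorem~\ref{theorem:2ppr} and Lemma~\ref{lemma:not_trimmed}): track the first vertex of $S^*$ that is peeled and verify that peeling long-tail vertices early never breaks the chain $g(S^*)\le w_{u_i}(S^*)\le w_{u_i}(S')\le \tau$, which is what the lemma is actually used for. As submitted, your proof is incomplete at step 3, but it is incomplete at exactly the point where the paper's proof is merely asserted.
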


\begin{proof}
    We prove it in contradiction by assuming that $S_i = S^{P}$. Since $u$ is a long-tail vertex in $S_i$, $w_u(S_i) < \frac{g(S^*)}{\alpha} < g(S^P)$. By peeling $u$ from $S^{P}$, we have the following:
    \begin{equation}
        \begin{split}
            g(S^{P}\setminus \{u\}) & = \frac{f(S^{P}) - w_{u}(S^{P})}{|S^{P}|-1} > \frac{f(S^{P}) - g(S^P)}{|S^{P}|-1} = \frac{f(S^{P}) - \frac{f(S^{P})}{|S^{P}|}}{|S^{P}|-1} = g(S^P)
        \end{split}       
    \end{equation}
Peeling $u$ from $S^P$ yields a denser subgraph, contradicting that $S_i = S^{P}$. Hence, $S_i\not = S^{P}$.
\end{proof}

\stitle{Identifying Long-Tail Vertices}. Based on Lemma~\ref{lemma:opt}, we can also establish that a vertex $u$ in set $S$ can be classified as a long-tail vertex if $w_u(S) < \frac{g(S^*)}{k(1+\epsilon)}$. This allows us to implement a global peeling threshold, denoted as $\tau_{\max}$. During each iteration, we compare $\frac{g(S_i)}{k(1+\epsilon)}$ with $\tau_{\max}$. If the former exceeds the latter, \Dupin{} updates $\tau_{\max}$ to $\frac{g(S_i)}{k(1+\epsilon)}$, thereby enhancing the efficiency of the peeling.

\stitle{Peeling with Global Threshold (Algorithm~\ref{algo:peel2}).} Using $\Fraudar{}$ as an example, we can set $k$ to 2. In Lemma~\ref{lemma:opt}, we recognize that long-tail vertices in set $S_{i}$ can be peeled directly during iteration $i$ for $\Fraudar{}$. To facilitate this, \Dupin{} initiates with a global peeling threshold $\tau_{\max}$ (Line~\ref{algo:peeling:init:tau}). After identifying a denser subgraph post-peeling, $\tau_{\max}$ is updated to $\frac{g(S_i)}{2(1+\epsilon)}$, refining the peeling process (Line~\ref{algo:peeling:init:refine}).


\begin{tcolorbox}[colback=gray!5, colframe=black,boxrule=0.5pt,boxsep=-2pt]
\begin{example}
In the initial depiction provided in the leftmost panel of Figure~\ref{fig:LTP} (the $i$-th iteration), the density of the graph is $4.28$. Traditionally, this scenario would necessitate two additional iterations to complete the peeling process, which could notably increase computational time. As illustrated, vertex $u$ cannot be peeled in the $(i+1)$-th iteration and would typically require an extra iteration for its removal. However, \Dupin{} can promptly peel vertices that fall below this threshold by employing our algorithm's global peeling threshold, which is calculated as half of the current density ($\tau_{\max}=4.28/2 = 2.14$). This approach enables the simultaneous peeling of three nodes in the $(i+1)$-th iteration. Without this pruning strategy, an additional iteration would be essential to peel vertex $u$, underscoring our method's enhanced efficiency in the peeling process.
\end{example}
\end{tcolorbox}

\begin{figure}[tb]
   \includegraphics[width=0.8\linewidth]{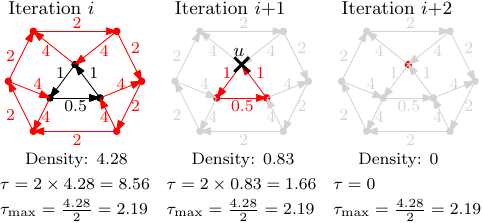}
\caption{Demonstrating Long-Tail Pruning. Nodes highlighted in red are targeted for removal in the current iteration, while the nodes shaded in gray indicate those that have been peeled.}\label{fig:LTP}   
\end{figure}


\subsection{Local Peeling Optimization}

\begin{algorithm}[ht]
    \caption{\PEELLTP: Local Peeling Optimization}\label{algo:peel3}
    \footnotesize
    \DontPrintSemicolon
    \SetKwProg{Fn}{Function}{}{}
    \SetKwFor{ParallelForEach}{parallel_for}{do}{EndParallelForEach}
    \SetKwFunction{ParSum}{parallel\_sum}
    \KwIn{$G = (V, E)$, a density metric $g(S)$, and $\epsilon \geq 0$}
    \KwOut{A subset of vertices $S^p$ representing the densest subgraph}
    $S_0 \gets V$ \tcc*[f]{Initialize the subset with all vertices}\\
    $\tau_{\max} \gets 0$ \tcc*[f]{Initialize the global threshold}\\
    $i \gets 1$ \tcc*[f]{Initialize the index}\\
    \While{$S_{i-1} \neq \emptyset$}{
        \ParallelForEach{$u_j \in S_{i-1}$}{
            $a_j \gets \mathsf{vsusp}(u_j)$
        }
        \ParallelForEach{$(u_j, u_k) \in E$ \textbf{and} $u_j, u_k \in S_{i-1}$}{
            $c_{jk} \gets \mathsf{esusp}(u_j, u_k)$
        }
        $f(S_{i-1}) \gets$ \ParSum{$a_j$} $+$ \ParSum{$c_{jk}$} \\
        $g(S_{i-1}) \gets $ $f(S_{i-1})/$ $|S_{i-1}|$\\
        $\tau_{\max} \gets \max\{ \tau_{\max}, \frac{g(S_{i-1})}{k(1+\epsilon)} \}$ \tcc*[f]{Refine the global threshold}\\
        $\tau_1 \gets \max\{\tau_{\max}, k(1+\epsilon) \cdot g(S_{i-1})\}$ \tcc*[f]{Peeling threshold}\\       
        $\remove_1 \gets \emptyset$ \tcc*[f]{Initialize the first removal set}\\
        \ParallelForEach{$u \in S_{i-1}$}{
            \If{$w_u(S_{i-1}) \leq \tau_1$}{
                $\remove_1 \gets \remove_1 \cup \{ u \}$ \tcc*[f]{Vertices to be peeled}
            }
        }
        $S_i \gets S_{i-1} \setminus \remove_1$ \tcc*[f]{Update the subset}\\        
        \While{$\exists u \in S_i \text{ such that } w_u(S_i) < g(S_i)$ \label{algo:peel3:trimming:start}}{
            $\tau_2 \gets \max\{\tau_{\max}, g(S_i)\}$ \tcc*[f]{Refine the local threshold}\\
            $\remove_2 \gets \emptyset$ \tcc*[f]{Initialize the second removal set}\\
            \ParallelForEach{$u \in S_i$}{
                \If{$w_u(S_i) < \tau_2$}{
                    $\remove_2 \gets \remove_2 \cup \{ u \}$ \tcc*[f]{Vertices to be peeled}
                }
            }
            $S_i \gets S_i \setminus \remove_2$ \label{algo:peel3:trimming:end}\tcc*[f]{Trim the subset} \\
        }
        $i \gets i + 1$ \tcc*[f]{Increment the index} \\
    }
    \Return{$\arg\max_{S_i} g(S_i)$} \tcc*[f]{The subset that maximizes $g(S)$}
\end{algorithm}

        

It is important to note that each peeling iteration often results in disparate structures. This is primarily caused by the peeling of certain vertices, which leaves their neighboring structures disconnected and fragmented. This not only affects the density of the detected subgraph but also lowers the peeling threshold. Lemma~\ref{lemma:opt} establishes a property of vertices within a dense graph.


\begin{lemma}\label{lemma:opt}
    $\forall u\in S$, if $w_{u}(S) < g(S)$, $g(S\setminus \{u\}) > g(S)$.
\end{lemma}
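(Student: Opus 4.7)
The plan is to prove this directly by a short algebraic manipulation using the definitions of $g$ and $w_u$, mirroring the standard argument used in the proof of Theorem~\ref{theorem:2ppr} (where the same kind of swap inequality appears with $S^*$ in place of $S$).

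First I would write out the two quantities explicitly. By definition, $g(S) = f(S)/|S|$, and removing $u$ from $S$ decreases $f$ by exactly $w_u(S)$ while decreasing $|S|$ by one. Hence
\begin{equation}
g(S \setminus \{u\}) \;=\; \frac{f(S) - w_u(S)}{|S| - 1}.
\end{equation}
So the claim $g(S \setminus \{u\}) > g(S)$ is equivalent, after cross-multiplying by the positive quantities $|S|$ and $|S|-1$, to $|S|\bigl(f(S)-w_u(S)\bigr) > (|S|-1)\,f(S)$, which simplifies to $w_u(S) < f(S)/|S| = g(S)$. That is exactly the hypothesis.

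The one mild subtlety is the boundary case $|S|=1$, where $S\setminus\{u\} = \emptyset$ and the expression for $g$ is undefined; I would either note that the lemma is vacuous there (since the result is only meaningful when $|S| \geq 2$, and the hypothesis $w_u(S) < g(S)$ together with the non-negativity conditions of Property~\ref{property:parallelization} forces a nontrivial setting), or restrict to $|S| \geq 2$. Beyond this, the derivation is genuinely one line of algebra, so there is no real obstacle: the proof amounts to the observation that removing an ``underweight'' vertex — one whose contribution to $f$ falls below the average density — must strictly increase the average.
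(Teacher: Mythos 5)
Your proposal is correct and follows essentially the same route as the paper's proof: both start from the identity $g(S\setminus\{u\}) = \frac{f(S)-w_u(S)}{|S|-1}$ and reduce the claim to the hypothesis $w_u(S) < g(S)$ by elementary algebra (the paper as a chain of inequalities, you via cross-multiplication, which is the same computation). Your extra remark about the degenerate case $|S|=1$ is a point of care the paper omits, but it does not change the argument.
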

\begin{proof}
By peeling $u$ from $S$, we have the following.
    \begin{equation}
        \begin{split}
            g(S\setminus \{u\}) & = \frac{f(S) - w_{u}(S)}{|S|-1} > \frac{f(S) - g(S_i)}{|S|-1} = \frac{f(S) - g(S)}{|S^{*}|-1} = \frac{f(S) - \frac{f(S)}{|S|}}{|S|-1} = g(S)
        \end{split}       
    \end{equation}
Therefore, a denser graph can be obtained by peeling $u$ from $S$.
\end{proof}

According to Lemma~\ref{lemma:opt}, we introduce local peeling optimization within the iteration. After each peeling iteration, if the peeling weight of a vertex is smaller than the current density, \Dupin{} will trim it. We implement a local peeling optimization (Lines~\ref{algo:peel3:trimming:start}-\ref{algo:peel3:trimming:end}, Algorithm~\ref{algo:peel3}) before the next iteration. After a vertex $u$ is peeled, \Dupin{} invokes \textsf{updateNgh} to update the peeling weights of $u$'s neighboring nodes. If any of these neighbors have a peeling weight less than the current density, they are also peeled, thereby increasing the density of the current iteration. Consequently, when an iteration yields globally optimal results, a denser structure is obtained. The benefits of this approach extend beyond density improvements; performance is also enhanced. By trimming these vertices, we reduce the number of vertices to traverse in subsequent iterations, and a higher density enables more efficient iterations due to the possibility of using larger peeling thresholds.

\begin{lemma}\label{lemma:not_trimmed}
If $u_i \in S^p$ is the first vertex in $S^*$ removed by \Dupin{}, $u_i$ will not be removed during the local peeling optimization process.
\end{lemma}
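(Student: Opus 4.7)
The plan is to argue by contradiction. Suppose, contrary to the claim, that $u_i$ is removed inside a local peeling optimization trimming step (Lines~\ref{algo:peel3:trimming:start}--\ref{algo:peel3:trimming:end} of Algorithm~\ref{algo:peel3}). Let $S$ denote the vertex set maintained at the precise moment when the LPO loop selects $u_i$. Because $u_i$ is declared to be the first vertex of $S^{*}$ that \Dupin{} removes, no vertex of $S^{*}$ has been peeled so far---neither by the outer peeling step nor by earlier LPO rounds---so $S^{*} \subseteq S$. The LPO trigger condition then forces
\[
w_{u_i}(S) < g(S).
\]

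Next I would chain three ingredients to contradict this. First, monotonicity of the peeling weight in the set: since $S \supseteq S^{*}$ and $f$ decomposes as a sum of vertex weights, edge weights, or $k$-clique contributions (as made explicit for $\DENG$, $\DENGW$, $\Fraudar$, \TDS{}, and \KCDS{}), enlarging the surrounding set can only increase the amount of $f$ lost upon removing $u_i$, giving $w_{u_i}(S) \geq w_{u_i}(S^{*})$. Second, the inner claim already established inside the proof of Theorem~\ref{theorem:2ppr}: for every $u \in S^{*}$, $w_{u}(S^{*}) \geq g(S^{*})$; in particular $w_{u_i}(S^{*}) \geq g(S^{*})$. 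Third, optimality of $S^{*}$ yields $g(S) \leq g(S^{*})$. Splicing these together,
\[
w_{u_i}(S) < g(S) \leq g(S^{*}) \leq w_{u_i}(S^{*}) \leq w_{u_i}(S),
\]
which is a contradiction. Hence $u_i$ must have been removed by the outer peeling step, where the weaker bound $w_{u}(S) \leq k(1+\epsilon)\, g(S)$ applies---exactly what the approximation argument in the style of Theorem~\ref{theorem:2ppr} needs in order to transfer to \PEELLTP{}.

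The main obstacle I anticipate is a clean justification of the peeling-weight monotonicity $w_{u_i}(S) \geq w_{u_i}(S^{*})$ uniformly across all supported density metrics. It is transparent for $\DENG$, $\DENGW$, and $\Fraudar$, where $w_{u_i}(S) = a_{u_i} + \sum_{v \in N(u_i) \cap S} c_{u_i v}$ is nondecreasing in $S$, and equally clear for \TDS{} and \KCDS{} because every $k$-clique in $G[S^{*}]$ through $u_i$ remains a $k$-clique in $G[S]$. Still, this monotonicity deserves to be stated once as a direct consequence of the monotonicity of $f$ imposed in Property~\ref{property:parallelization}, so that the chain of inequalities above is unambiguous across the whole class of metrics handled by \Dupin{}.
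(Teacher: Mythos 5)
Your proof is correct and follows essentially the same route as the paper's: contradiction, $S^*\subseteq S'$ because $u_i$ is the first vertex of $S^*$ removed, monotonicity of the peeling weight under set inclusion, and the fact (established inside the proof of Theorem~\ref{theorem:2ppr}) that $w_{u}(S^*)\ge g(S^*)$ for every $u\in S^*$. The only cosmetic difference is that the actual trimming threshold in Algorithm~\ref{algo:peel3} is $\tau_2=\max\{\tau_{\max},\,g(S)\}$ rather than $g(S)$ alone, but since $\tau_{\max}\le g(S^*)/(k(1+\epsilon))\le g(S^*)$ your chain of inequalities goes through unchanged, and the paper's own proof elides this point in the same way.
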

\begin{proof}
Assume, for the sake of contradiction, that $u_i$ is trimmed during the local peeling optimization process. Let $S'$ be the set of vertices left before $u_i$ is trimmed. By definition, we have:
\[
w_{u_i}(S') < g(S^*).
\]

Since $u_i$ is the first vertex in $S^*$ to be removed, $S^* \subseteq S'$, therefore:
\[
w_{u_i}(S^*) \leq w_{u_i}(S') < g(S^*).
\]

This contradicts Lemma~\ref{lemma:opt}, which states that $\forall u_i \in S^*$:
\[
w_{u_i}(S^*) \geq g(S^*).
\]

Otherwise, a better solution can be obtained by removing $u$ from $S^*$, which contradicts the fact that $S^*$ is the optimal solution. Thus, $u_i$ cannot be trimmed. Hence, $u_i$ will not be removed during the local peeling optimization process.
\end{proof}

Due to Lemma~\ref{lemma:not_trimmed}, $u_i$ will only be removed during the peeling process. Hence, the approximation ratios (Theorem~\ref{theorem:2ppr}) for different density metrics are preserved.

\section{EXPERIMENTAL STUDY}\label{sec:exp}

\subsection{Experimental Setup}\label{subsec:setup}

Our experiments are run on a machine with the following specifications: Intel Xeon X5650 CPU\footnote{The Xeon X5650 was selected to ensure a fair comparison with the setup reported in Spade’s original work. To assess the impact of different hardware, we evaluated the performance of \Spade{}, Dupin, \FWA{}, \GBBS{}, and \PBBS{} on two CPUs in Appendix~\ref{appendix:exp} of~\cite{dupin2024}, showing that Dupin achieves better speedup compared to the other systems in the modern hardware.}, 512 GB RAM, and running on Ubuntu 20.04 LTS. The implementation is memory-resident and developed in C++. All codes are compiled with optimization flag -$O3$.

\begin{table}[tb]
\centering
\footnotesize
\setlength{\tabcolsep}{0.2em}
\caption{Statistics of real-world datasets.}\label{table:Statistics}
\begin{tabular}{c|c|c|c|c|c}
  \hline  
  {\bf Datasets} & {\bf Name} & $\mathbf{|V|}$ & $\mathbf{|E|}$  & \textbf{avg. degree}   & \textbf{Type} \\
  \hline
  \hline
    \texttt{GFG} & \texttt{gfg} & 3,646,185 & 28,635,763 & 17 &  Transaction \\ 
  \hline
  \texttt{soc-twitter}& \texttt{soc} & 28,504,110 & 531,000,244 & 18 & Social network \\ \hline
    \texttt{Web-uk-2005} & \texttt{uk} & 39,454,748 & 936,364,284 & 24 & Web graph  \\ \hline
    \texttt{Twitter-rv}& \texttt{rv} & 41,652,230 & 1,468,365,182 & 35 & Social network\\ \hline
    \texttt{kron-logn21} & \texttt{kron} & 1,544,088 & 91,042,012 & 58 & Cheminformatics  \\ \hline
    \texttt{Web-sk-2005} & \texttt{sk} & 50,636,151 & 1,949,412,601 & 38 & Web graph  \\ \hline
    \texttt{links-anon}& \texttt{la} & 52,579,682 & 1,963,263,821 & 37 & Social network \\ \hline
    \texttt{biomine}& \texttt{bio} & 1,508,587 & 32,761,889 & 22 & Biologic graph \\ \hline
\end{tabular}
\end{table}

\stitle{Datasets.} We report the datasets used in our experiment in Table~\ref{table:Statistics}. One industrial dataset is from $\Grab$ (\texttt{gfg}), which represents a real-world scenario of fraud detection in a large-scale transaction network. This dataset is particularly relevant as it contains diverse transaction patterns and user behaviors, reflecting the complexities of real-world fraud detection. Datasets \texttt{links-anon(\emph{la})} and \texttt{Twitter-rv (\emph{rv})} were obtained from Stanford Network Dataset Collection ~\cite{snapnets}, providing insights into social network structures. Datasets \texttt{kron-logn21(\emph{kron})}, \texttt{soc-twitter(\emph{soc})}, \texttt{Web-uk-2005(\emph{uk})}, and \texttt{Web-sk-2005(\emph{sk})} were obtained from Network Data Repository~\cite{nr}, representing various web graph characteristics. Dataset \texttt{\emph{bio}} was from the BIOMINE~\cite{biomine2019}, which offers biological network structures. The selection of these datasets was based on their ability to represent different types of graphs, thus enhancing the generalizability of our results.

\stitle{Baseline Methods.} We evaluate five commonly used density metrics pivotal in network structure analysis: $\DENG$~\cite{charikar2000greedy}, $\DENGW$~\cite{gudapati2021search}, $\Fraudar$~\cite{hooi2016fraudar}, \TDS{}~\cite{tsourakakis2015k}, and \KCDS{}~\cite{danisch2018listing}. These metrics are implemented within our framework, \Dupin{}, and their performance is compared against several existing frameworks: \Spade{}~\cite{jiang2023spade}, \kClist{}~\cite{danisch2018listing}, \GBBS{}~\cite{dhulipala2020graph}, \PBBS{}~\cite{shi2021parallel}, \PKMC{}~\cite{luo2023scalable}, \FWA{}~\cite{danisch2017large}, and \ALENEX{}~\cite{sukprasert2024practical}. Specifically, \kClist{} and \PBBS{} are parallel algorithms for \KCDS{}. Since \GBBS{} does not natively support weighted graphs, we precompute the peeling weights offline and import them into \GBBS{} to enable support for $\DENGW$ and $\Fraudar$; the reported runtime excludes this preprocessing time. Additionally, \FWA{}, \ALENEX{}, and \PKMC{} support the $\DENG$ metric, and we have extended their algorithms to support $\DENGW$ and $\Fraudar$ by modifying their codebases accordingly. For \TDS{} and \KCDS{}, we adapted the API of \Spade{} for implementation. We utilized the latest codebase $\mathsf{Spade+}$ of \Spade{} to ensure consistency and reliability in our performance measurements. To standardize the comparison, we denote the incremental peeling process—which iteratively removes graph elements based on density scores—as \Spade{}-X, where X corresponds to each metric. This follows the experimental conditions in \cite{jiang2023spade}, with the batch size set to 1K as per their default setting; we report the average runtime per batch for \Spade{}. Our parallel peeling algorithms, incorporating global and local peeling optimizations, are denoted as \PEELLT{}-X and \PEELLTP{}-X, respectively.

\stitle{Default Settings.} In our experiments, the parameter \( \epsilon \) is set to $0.1$ by default, and the number of threads \( t \) is configured to 128. These settings are used consistently unless otherwise specified.

\subsection{Efficiency of \Dupin{}}

\eat{
\begin{figure*}
    \centering
    \includegraphics[width=0.9\linewidth]{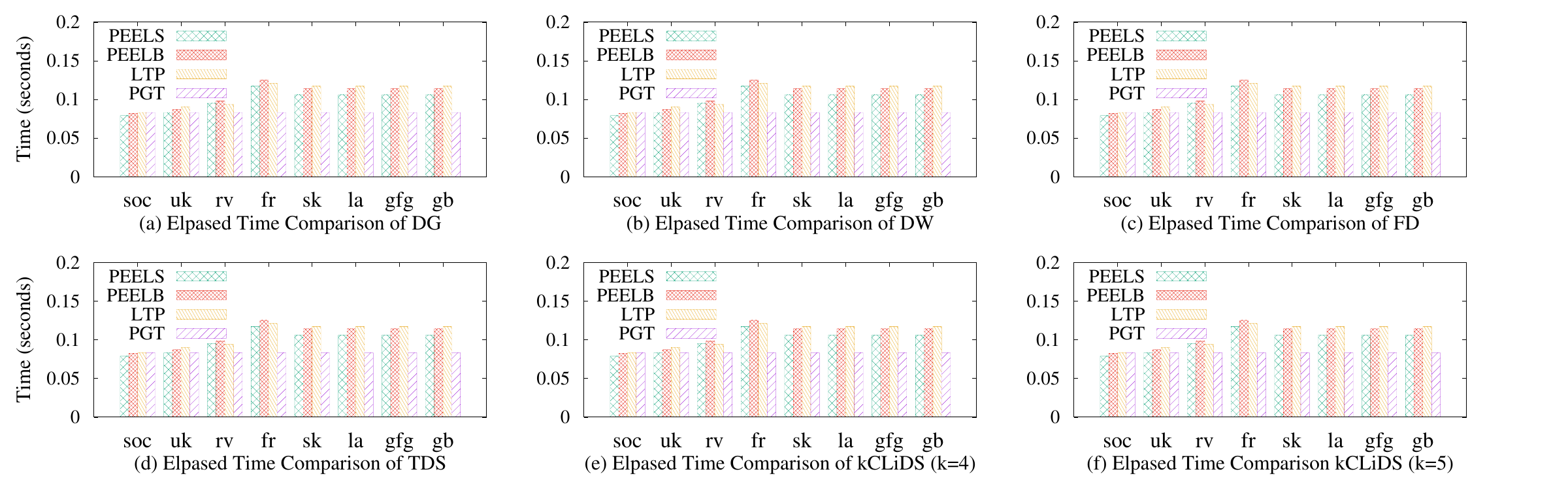}
    \caption{Overall Performance of \Dupin{}}
    \label{fig:overall_performance}
\end{figure*}
}

\begin{figure*}
    \begin{center}
    \includegraphics[width=0.85\linewidth]{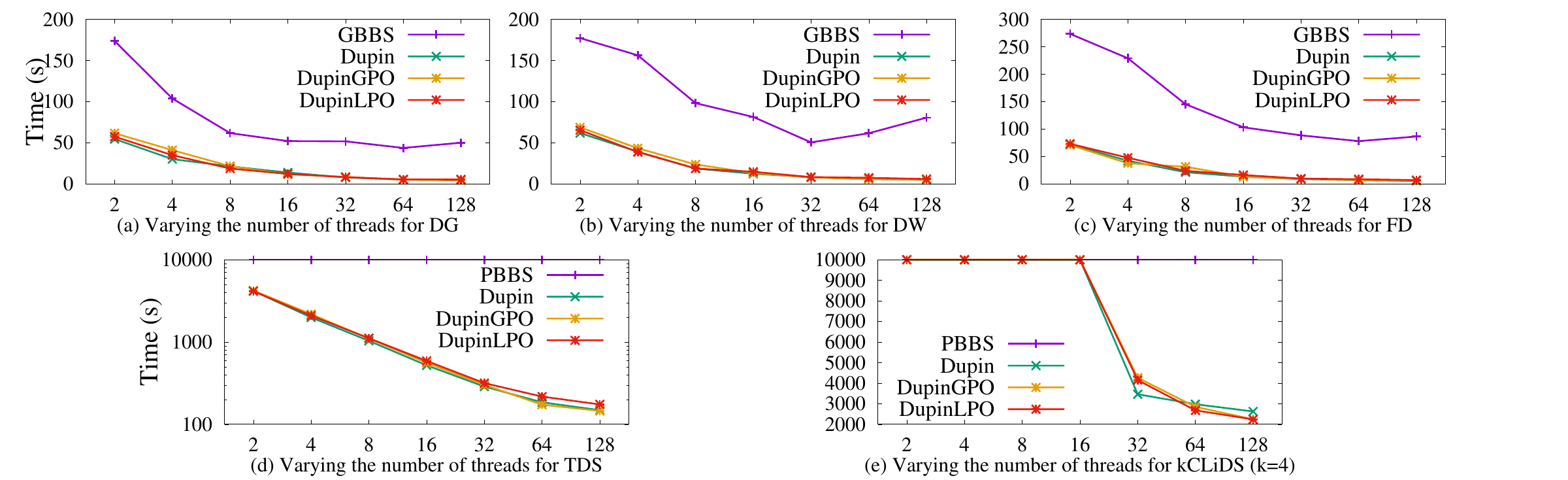}
    \caption{Scalability of \GBBS{}, \PBBS{}, and \Dupin{} by varying the number of the threads.}\label{fig:threads}
    \end{center}
\end{figure*}

\expstitle{\Spade{} vs \PEELB{}.} Our initial comparison focuses on the performance disparity between the incremental and parallel peeling algorithms. Specifically, \PEELB{}-$\DENG$ (resp. \PEELB{}-$\DENGW$, \PEELB{}-$\Fraudar$, \PEELB{}-\TDS{}, and \PEELB{}-\KCDS{}) demonstrates a speedup factor of $6.97$ (resp. $7.71$, $7.71$, $7.65$, and $8.46$) over \Spade{}-$\DENG$ (resp. \Spade{}-$\DENGW$, \Spade{}-$\Fraudar$, \Spade{}-\TDS{}, and \Spade{}-\KCDS{}) as shown in Table~\ref{tab:efficiency-deng} and Table~\ref{tab:efficiency-tds}. Notably, \Spade{}-\TDS{} and \Spade{}-\KCDS{} only complete computations within $7{,}200$ seconds on \texttt{gfg}. The bottleneck for \Spade{} arises in larger graphs where initial calculations of the triangle and $k$-Clique counting are unable to finish. \PEELB{} outpaces \Spade{} because \Spade{} suffers from the frequent reordering of peeling sequences when there are many increments, which proves to be slower than recalculating.

\expstitle{\GBBS{} vs \PEELB{}.} Among the parallel methods for $\DENG$, $\DENGW$, and $\Fraudar$, \FWA{}, \ALENEX{}, and \PKMC{} are several to dozens of times slower than \GBBS{}. Therefore, we select \GBBS{} as a representative for comparison with \PEELB{}. The runtimes of the other three methods are provided in Table~\ref{tab:efficiency-deng}. On average, \PEELB{}-$\DENG$, \PEELB{}-$\DENGW$, and \PEELB{}-$\Fraudar$ demonstrated speedup factors of $6.33$, $12.51$, and $17.07$ times, respectively, over \GBBS{}-$\DENG$, \GBBS{}-$\DENGW$, and \GBBS{}-$\Fraudar$. This performance advantage is attributed to the fundamental difference in the peeling process between the two frameworks. In \GBBS{}, the basic unit of peeling is a bucket, where nodes in the same bucket have an identical peeling weight. This design limits parallelism in weighted graphs, such as those using $\DENGW$ and $\Fraudar$, where many buckets contain only a single node. In contrast, \PEELB{} peels batches of nodes at each step and maintains a global peeling threshold to enhance the efficiency of each iteration and eliminate long-tail iterations. This approach achieves higher parallelism and consistently outperforms \GBBS{}, particularly in weighted graph scenarios where the speedup is even more pronounced.

\expstitle{\kClist{} vs \PBBS{} vs \PEELB{}.} Both \kClist{} and \PBBS{} support the density metrics \TDS{} and \KCDS{}. In our comparisons, \PEELB{} demonstrates notable performance advantages. Specifically, \PEELB{}-\TDS{} is faster than \kClist{}-\TDS{} and \PBBS{}-\TDS{} by $39.85$ and $62.71$ times, respectively. Furthermore, \PEELB{}-\KCDS{} also outperforms \kClist{}-\KCDS{} by $4.16$ times. A significant finding is that \PBBS{}-\KCDS{} fails to complete computations within 7200 seconds on most datasets, highlighting the efficiency of \PEELB{} in handling more complex metrics under stringent time constraints.

\begin{table}[ht]
\centering
\caption{Overall efficiency evaluation conducted using 128 threads for density metrics $\DENG$, $\DENGW$, and $\Fraudar$. The notation `\TLE' signifies ``Time Limit Exceeded,'' indicating that the task did not finish within 7,200 seconds.}
\label{tab:efficiency-deng}
\resizebox{0.85\linewidth}{!}{
\begin{tabular}{l||c|c|c|c||c|c|c|c}
\hline
\textbf{Methods} & \textbf{Dataset} & $\DENG$ & $\DENGW$ & $\Fraudar$ & \textbf{Dataset} & $\DENG$ & $\DENGW$ & $\Fraudar$ \\ \hline
\Spade       & \multirow{6}{*}{\texttt{soc}}    & 23.46 & 28.10 & 30.67 & \multirow{6}{*}{\texttt{sk}}    & 182.28 & 215.27 & 210.26 \\
\GBBS       &     & 10.01 & 23.28 & 35.43 &     & 13.95 & 35.91 & 48.80 \\
\PKMC       &     & 103.16 & 105.77 & 106.96 &     & 388.85 & 393.01  & 431.04 \\
\FWA        &     & 704    & 1,241 & 1,381 &     & 3,092   & 4,934 & 4,716 \\
\ALENEX     &     & 138.73 & 120.19 & 128.81 &     & 411.75 & 488.49 & 473.45 \\ 
\PEELB      &     & \textbf{1.79}   & \textbf{2.26}  & \textbf{2.38}  &     & \textbf{3.87}   & \textbf{3.93}  & \textbf{4.25}  \\ \hline\hline
\Spade      & \multirow{6}{*}{\texttt{uk}}     & 67.08 & 93.66 & 83.20 & \multirow{6}{*}{\texttt{la}}     & 175.07 & 242.26 & 224.47 \\
\GBBS       &     & 7.26  & 27.15 & 50.29 &     & 50.21 & 80.66 & 50.29 \\
\PKMC       &     & 167.76 & 168.81 & 187.61 &     & 946.64 & 1048.48 & 1024.23 \\
\FWA        &     & 1,794   & \TLE{} & \TLE{} &     & 4995   & \TLE & \TLE{} \\
\ALENEX     &     & 238.36 & 209.30 & 213.80 &     & 215.98 & 154.17 & 194.06 \\ 
\PEELB      &     & \textbf{3.38}   & \textbf{3.57}  & \textbf{3.56}  &     & \textbf{4.55}   & \textbf{4.41}  & \textbf{5.16}  \\ \hline\hline

\Spade      & \multirow{6}{*}{\texttt{rv}}     & 135.89 & 135.05 & 132.61 & \multirow{6}{*}{\texttt{bio}}     & 1.99 & 2.02 & 2.11 \\
\GBBS       &     & 32.51 & 62.29 & 71.67 &     & 2.55  & 5.04  & 5.21 \\
\PKMC       &     & 659.04 & 659.15 & 693.83 &     & 11.38 & 11.39 & 12.97 \\
\FWA        &     & 3569   & \TLE{} & \TLE{} &     & 62    & 393 & 428 \\
\ALENEX     &     & 158.64 & 132.88 & 158.79 &     & 62.16 & 68.22 & 63.62 \\ 
\PEELB      &     & \textbf{3.34}   & \textbf{3.76}  & \textbf{3.89}  &     & \textbf{0.15}  & \textbf{0.20}  & \textbf{0.23} \\ \hline\hline

\Spade      & \multirow{6}{*}{\texttt{gfg}}    & 2.30 & 2.62 & 2.70 & \multirow{6}{*}{\texttt{kron}}    & 5.08  & 5.36  & 5.61 \\
\GBBS       &     & 0.41 & 2.10 & 5.74 &     & 1.47  & 6.62  & 9.58 \\
\PKMC       &     & 15.96 & 15.00 & 16.25 &     & 34.47 & 35.71 & 37.23 \\
\FWA        &     & 133    & 186 & 173 &     & 254   & 344 & 372 \\
\ALENEX     &     & 4.13   & 4.01 & 4.75 &     & 85.8  & 83.05 & 82.98 \\ 
\PEELB      &     & \textbf{0.29}   & \textbf{0.33}  & \textbf{0.35}  &     & \textbf{0.18}  & \textbf{0.36}  & \textbf{0.23} \\ \hline
\end{tabular}
}
\end{table}

\begin{table}[ht]
\centering
\caption{Overall efficiency evaluation conducted using 128 threads for density metrics \TDS{} and \KCDS{}.}
\label{tab:efficiency-tds}
\resizebox{0.85\linewidth}{!}{
\begin{tabular}{l||c|c|c||c|c|c}
\hline
\textbf{Methods} & \textbf{Dataset} & \TDS{} & \KCDS{} & \textbf{Dataset} & \TDS{} & \KCDS{} \\ \hline
\Spade       & \multirow{4}{*}{\texttt{soc}}    & \TLE & \TLE & \multirow{4}{*}{\texttt{sk}}    & \TLE & \TLE \\
\kClist      &     & 1516 & 1444 &     & \TLE & \TLE \\
\PBBS        &     & 3524.11 & \TLE &     & \TLE & \TLE \\
\PEELB       &     & \textbf{32.59} & \textbf{283.70} &     & \textbf{42.73} & \textbf{2636.17} \\ \hline\hline
\Spade       & \multirow{4}{*}{\texttt{uk}}     & \TLE & \TLE & \multirow{4}{*}{\texttt{la}}     & \TLE & \TLE \\
\kClist      &     & 494 & 447 &     & 10663 & 8003 \\
\PBBS        &     & 2900.83 & \TLE &     & \TLE & \TLE \\
\PEELB       &     & \textbf{16.66} & \textbf{186.31} &     & \textbf{145.72} & \textbf{2241.69} \\ \hline\hline
\Spade       & \multirow{4}{*}{\texttt{rv}}     & \TLE & \TLE & \multirow{4}{*}{\texttt{bio}}     & \TLE & \TLE \\
\kClist      &     & 5561 & 4655 &     & 230.0 & 255.0 \\
\PBBS        &     & \TLE & \TLE &     & 225.06 & \TLE \\
\PEELB       &     & \textbf{96.32} & \textbf{1009.45} &     & \textbf{6.25} & \textbf{82.54} \\ \hline\hline
\Spade       & \multirow{4}{*}{\texttt{gfg}}    & 5.66 & 5.84 & \multirow{4}{*}{\texttt{kron}}    & \TLE & \TLE \\
\kClist      &     & 10 & 9 &     & 406 & 450 \\
\PBBS        &     & 1.37 & 0.79 &     & 330.18 & \TLE \\
\PEELB       &     & \textbf{0.74} & \textbf{0.69} &     & \textbf{11.69} & \textbf{138.63} \\ \hline
\end{tabular}
}
\end{table}

\eat{
\begin{table*}[ht]
\centering
\caption{Overall efficiency evaluation conducted using 128 threads. The notation `\TLE' signifies ``Time Limit Exceeded,'' indicating that the task did not finish within 7,200 seconds. A dash (``-'') denotes that the framework does not support the density metrics.}
\label{tab:efficiency}
\resizebox{\linewidth}{!}{
\begin{tabular}{l||c|c|c|c|c|c||c|c|c|c|c|c}
\hline
\textbf{Methods} & \textbf{Datasets} & $\DENG$ & $\DENGW$ & $\Fraudar$ & \TDS & \KCDS{} & \textbf{Datasets} & $\DENG$ & $\DENGW$ & $\Fraudar$ & \TDS & \KCDS{}   \\ \hline

\Spade       & \multirow{7}{*}{\texttt{soc}}    & 23.46 & 28.10 & 30.67 & \TLE & \TLE & \multirow{7}{*}{\texttt{sk}}    & 182.28 & 215.27 & 210.26 & \TLE & \TLE \\
\GBBS       &     & 10.01 & 23.28 & 35.43 & \NS{} & \NS{} &     & 13.95 & 35.91 & 48.80 & \NS{} & \NS{} \\
\PKMC       &     &    103.16   &  \NS{}   &   \NS{}     &   \NS{}  &   \NS{}      &     &    388.85   &  393.01   &    \NS{}    &  \NS{}   & \NS{} \\
\FWA       &     &   704    &  \NS{}   &   \NS{}     &   \NS{}  &   \NS{}      &     &    3092   &  \NS{}   &    \NS{}    &  \NS{}   & \NS{} \\
\ALENEX       &     &   138.73    &  \NS{}   &   \NS{}     &   \NS{}  &   \NS{}      &     &    411.75   &  \NS{}   &    \NS{}    &  \NS{}   & \NS{} \\
\PBBS       &     & \NS{} & \NS{} & \NS{} & 3524.11 & \TLE &     & \NS{} & \NS{} & \NS{} & \TLE & \TLE \\
\kClist       &  & \NS{} & \NS{} & \NS{} & 1516 & 1444 &    & \NS{} & \NS{} & \NS{} & \TLE & \TLE  \\
\PEELB       &                                 & 1.79 & 2.26 & 2.38 & 32.59 & 283.70 &                                & 3.87 & 3.93 & 4.25 & 42.73 & 2636.17 \\\hline \hline 
\Spade{}       & \multirow{7}{*}{\texttt{uk}}     & 67.08 & 93.66 & 83.20 & \TLE & \TLE & \multirow{7}{*}{\texttt{la}}     & 175.07 & 242.26 & 224.47 & \TLE & \TLE \\
\GBBS       &     & 7.26 & 27.15 & 50.29 & \NS{} & \NS{} &      & 50.21 & 80.66 & 50.29 & \NS{} & \NS{} \\
\PKMC       &     &    167.76   &  \NS{}   &   \NS{}     &   \NS{}  &   \NS{}      &     &   946.64    &  1,048.41   &    \NS{}    &  \NS{}   & \NS{} \\
\FWA       &     &    1794   &  \NS{}   &   \NS{}     &   \NS{}  &   \NS{}      &     &    4995   &  \NS{}   &    \NS{}    &  \NS{}   & \NS{} \\
\ALENEX       &     &    238.36   &  \NS{}   &   \NS{}     &   \NS{}  &   \NS{}      &     &    215.98   &  \NS{}   &    \NS{}    &  \NS{}   & \NS{} \\
\PBBS       &     & \NS{} & \NS{} & \NS{} & 2900.83 & \TLE &      & \NS{} & \NS{} & \NS{} & \TLE & \TLE \\
\kClist       &  & \NS{} & \NS{} & \NS{} & 494 & 447 &    & \NS{} & \NS{} & \NS{} & 10663 & 8003  \\
\PEELB{}   &                                 & 3.38 & 3.57 & 3.56 & 16.66 & 186.31 &                                & 4.55 & 4.41 & 5.16 & 145.72 & 2241.69 \\ \hline \hline 
\Spade{}       & \multirow{7}{*}{\texttt{rv}}     & 135.89 & 135.05 & 132.61 & \TLE & \TLE & \multirow{7}{*}{\texttt{bio}}     & 1.99 & 2.02 & 2.11 & \TLE & \TLE \\
\GBBS       &      & 32.51 & 62.29 & 71.67 & \NS{} & \NS{} &      & 2.55 & 5.04 & 5.21 & \NS{} & \NS{} \\
\PKMC       &     &   659.04    &  \NS{}   &   \NS{}     &   \NS{}  &   \NS{}      &     &   11.38    &  \NS{}   &    \NS{}    &  \NS{}   & \NS{} \\
\FWA       &     &   3569    &  \NS{}   &   \NS{}     &   \NS{}  &   \NS{}      &     &    62   &  \NS{}   &    \NS{}    &  \NS{}   & \NS{} \\
\ALENEX       &     &    158.64   &  \NS{}   &   \NS{}     &   \NS{}  &   \NS{}      &     &   62.16    &  \NS{}   &    \NS{}    &  \NS{}   & \NS{} \\
\PBBS       &      & \NS{} & \NS{} & \NS{} & \TLE & \TLE &      & \NS{} & \NS{} & \NS{} & 225.06 & \TLE \\
\kClist       &  & \NS{} & \NS{} & \NS{} & 5561 & 4655 &    & \NS{} & \NS{} & \NS{} & 230.0 & 255.0  \\
\PEELB{}   &                                 & \keyres{3.34} & 3.76 & 3.89 & 96.32 & 1009.45 &                                & 0.15 & 0.20 & 0.23 & 6.25 & 82.54 \\ \hline \hline
\Spade{}       & \multirow{7}{*}{\texttt{gfg}}    & 2.30 & 2.62 & 2.70 & 5.66 & 5.84 & \multirow{7}{*}{\texttt{kron}}    & 5.08 & 5.36 & 5.61 & \TLE & \TLE \\
\GBBS       &  & 0.41 & 2.10 & 5.74 & \NS{} & \NS{} &   & 1.47 & 6.62 & 9.58 & \NS{} & \NS{} \\
\PKMC       &     &    15.96   &  \NS{}   &   \NS{}     &   \NS{}  &   \NS{}      &     &   34.47    &  \NS{}   &    \NS{}    &  \NS{}   & \NS{} \\
\FWA       &     &    133   &  \NS{}   &   \NS{}     &   \NS{}  &   \NS{}      &     &   254    &  \NS{}   &    \NS{}    &  \NS{}   & \NS{} \\
\ALENEX       &     &   4.13    &  \NS{}   &   \NS{}     &   \NS{}  &   \NS{}      &     &    85.8   &  \NS{}   &    \NS{}    &  \NS{}   & \NS{} \\
\PBBS       &  & \NS{} & \NS{} & \NS{} & 1.37 & 0.79 &   & \NS{} & \NS{} & \NS{} & 330.18 & \TLE \\ 
\kClist       &  & \NS{} & \NS{} & \NS{} & 10 & 9 &    & \NS{} & \NS{} & \NS{} & 406 & 450  \\
\PEELB       &                                 & 0.29 & 0.33 & 0.35 & 0.74 & 0.69 &                                & 0.18 & 0.36 & 0.23 & 11.69 & 138.63 \\ \hline
\end{tabular}
}
\end{table*}
}

\begin{table}[ht]
\centering
\caption{Overall density evaluation for $\DENG$, $\DENGW$, and $\Fraudar$.}
\label{tab:effectiveness-deng}
\resizebox{0.85\linewidth}{!}{
\begin{tabular}{l||c|c|c|c||c|c|c|c}
\hline
\textbf{Methods} & \textbf{Dataset} & $\DENG$ & $\DENGW$ & $\Fraudar$ & \textbf{Dataset} & $\DENG$ & $\DENGW$ & $\Fraudar$ \\ \hline\hline

\Spade{}       & \multirow{6}{*}{\texttt{soc}}    & 1,307 & 63,372 & 16,947 & \multirow{6}{*}{\texttt{sk}}    & 2,257 & 109,097 & 33,741 \\
\GBBS       &     & \keyres{1,307} & \keyres{63,372} & \keyres{16,883} &     & \keyres{2,257} & \keyres{109,097} & \keyres{28,118} \\
\PKMC       &     &   1,053    &  54,373   &   15,363     &     &    1,954   &  95,241   &    25,868    \\
\FWA       &     &    1,307   &  63,372   &   16,883     &     &    2,257  &  109,097   &    28,188    \\
\ALENEX       &     &  1,307    &  63,372   &   16,883   &     &   2,257   &  109,097   &    28,188    \\
\PEELB       &     & 1,286 & 59,379 & 16,234 &     & 2,235 & 98,669 & 27,067 \\ \hline\hline

\Spade{}       & \multirow{6}{*}{\texttt{uk}}    & 426 & 27,812 & 10,155 & \multirow{6}{*}{\texttt{la}}    & 1,877 & 89,641 & 22,198 \\
\GBBS       &     & \keyres{426} & \keyres{27,812} & \keyres{8,987}  &     & \keyres{1,877} & \keyres{89,641} & \keyres{21,774} \\
\PKMC       &     &  375    &  24,279   &   8,268     &     &    1,512   &  83,049   &    19,637    \\
\FWA       &     &    486   &  \TLE{}   &   \TLE{}     &     &  1,877    &  \TLE   &    \TLE{}    \\
\ALENEX       &     &   486   &  27,812   &   8,987     &     &   1,877   &  89,641   &    21,774    \\
\PEELB       &     & 309 & 24,796 & 8,424 &     & 1,843 & 87,671 & 20,610 \\ \hline\hline

\Spade{}       & \multirow{6}{*}{\texttt{rv}}    & 1,643 & 74,779 & 22,678 & \multirow{6}{*}{\texttt{bio}}    & 777 & 36,446 & 13,039 \\
\GBBS       &     & \keyres{1,643} & \keyres{74,779} & \keyres{20,749}   &     & \keyres{777} & \keyres{36,446} & \keyres{12,669} \\
\PKMC       &     &   1,437    &  67,328   &   16,753     &     &   721    &  32,284   &    11,283    \\
\FWA       &     &    1,643   &  \TLE{}   &   \TLE{}     &     &   787   &  36,446   &    12,669    \\
\ALENEX       &     &   1,643   &  74,779   &   20,749     &     &   787   &  36,446   &    12,669    \\
\PEELB       &     & 1,518 & 71,058 & 18,115 &     & 699 & 31,172 & 10,404 \\ \hline\hline

\Spade{}       & \multirow{6}{*}{\texttt{gfg}}    & 28 & 1,432 & 5,369 & \multirow{6}{*}{\texttt{kron}}    & 1,177 & 53,539 & 15,381 \\
\GBBS       &     & \keyres{28} & 1,432 & 5,018 &     & \keyres{1,177} & \keyres{53,539} & \keyres{14,861} \\
\PKMC       &     &   28    &  1,396   &   4,782     &     &   1,169    &  49,983   &    12,829    \\
\FWA       &     &   28    &  1,432  &   5,108     &     &    1,177  &  53,539   &    14,861    \\
\ALENEX       &     &   28   &  1,432   &   5,108     &     &   1,177   &  53,539   &    14,861    \\
\PEELB       &     & 26 & 1,405 & 4,879 &     & 1,177 & 52,695 & 13,912 \\ \hline

\end{tabular}
}
\end{table}

\begin{table}[ht]
\centering
\caption{Overall density evaluation for \TDS{} and \KCDS{}.}
\label{tab:effectiveness-tds}
\resizebox{0.85\linewidth}{!}{
\begin{tabular}{l||c|c|c||c|c|c}
\hline
\textbf{Methods} & \textbf{Dataset} & \TDS{} & \KCDS{} & \textbf{Dataset} & \TDS{} & \KCDS{} \\ \hline\hline

\Spade{}       & \multirow{4}{*}{\texttt{soc}}    & \TLE & \TLE & \multirow{4}{*}{\texttt{sk}}    & \TLE & \TLE \\
\kClist       &    & 1,525,517 & 645,536,400 &     & \TLE & \TLE  \\
\PBBS       &     & \keyres{1,994,617} & \TLE &     & \TLE & \TLE \\
\PEELB       &     & 1,533,939 & 606,708,980 &     & 9,995,529 & 14,890,921,876 \\ \hline\hline

\Spade{}       & \multirow{4}{*}{\texttt{uk}}    & \TLE & \TLE & \multirow{4}{*}{\texttt{la}}    & \TLE & \TLE \\
\kClist       &    & 188,524 & 44,272,600 &     & 3,865,986 & 3,306,718,000  \\
\PBBS       &     & \keyres{304,577} & \TLE &      & \TLE & \TLE \\
\PEELB       &     & 187,260 & 33,823,488 &     & 3,974,028 & 3,167,144,896 \\ \hline\hline

\Spade{}       & \multirow{4}{*}{\texttt{rv}}    & \TLE & \TLE & \multirow{4}{*}{\texttt{bio}}    & \TLE & \TLE \\
\kClist       &    & 3,561,222 & 3,280,083,600 &     & 1,043,100 & 542,860,800  \\
\PBBS       &     & \TLE & \TLE &     & \keyres{1,187,388} & \TLE \\
\PEELB       &     & 3,769,671 & 2,997,257,620 &     & 1,138,083 & 544,047,728 \\ \hline\hline

\Spade{}       & \multirow{4}{*}{\texttt{gfg}}    & 0 & 0 & \multirow{4}{*}{\texttt{kron}}    & \TLE & \TLE \\
\kClist       &    & 0 & 0 &     & 1,447,859 & 424,908,400  \\
\PBBS       &    & 0 & 0 &     & \keyres{1,447,859} & \TLE   \\
\PEELB       &     & 0 & 0 &     & 1,447,788 & 426,677,504 \\ \hline

\end{tabular}
}
\end{table}

\eat{
\begin{table*}[ht]
\centering
\caption{Overall effectiveness evaluation. The notation `\TLE' signifies ``Time Limit Exceeded,'' indicating that the task did not finish within 7,200 seconds. A dash (``-'') denotes that the framework does not support the specified density metrics.}
\label{tab:effectiveness}
\resizebox{0.85\linewidth}{!}{
\begin{tabular}{l||c|c|c|c|c|c||c|c|c|c|c|c}
\hline
\textbf{Methods} & \textbf{Datasets} & $\DENG$ & {$\DENGW$} & {$\Fraudar$} & \TDS & \KCDS{} & \textbf{Datasets} & {$\DENG$} & {$\DENGW$} & {$\Fraudar$} & \TDS & \KCDS{}   \\ \hline
\hline
\Spade{}       & \multirow{7}{*}{\texttt{soc}}    & 1,307 & 63,372 & 16,947 & \TLE & \TLE & \multirow{7}{*}{\texttt{sk}}    & 2,257 & 109,097 & 33,741 & \TLE & \TLE \\
\GBBS       &     & \keyres{1,307} & \keyres{63,372} & \keyres{16,883} & - & - &     & \keyres{2,257} & \keyres{109,097} & \keyres{28,118} & - & - \\
\PKMC       &     &   1,053    &  \NS{}   &   \NS{}     &   \NS{}  &   \NS{}      &     &    \NS{}   &  \NS{}   &    \NS{}    &  \NS{}   & \NS{} \\
\FWA       &     &    1,307   &  \NS{}   &   \NS{}     &   \NS{}  &   \NS{}      &     &    2,257  &  \NS{}   &    \NS{}    &  \NS{}   & \NS{} \\
\ALENEX       &     &  1,307    &  \NS{}   &   \NS{}     &   \NS{}  &   \NS{}      &     &   2,257   &  \NS{}   &    \NS{}    &  \NS{}   & \NS{} \\
\PBBS       &     & - & - & - & \keyres{1,994,617} & \TLE &     & - & - & - & \TLE & \TLE \\
\kClist       &    & \NS & \NS & \NS  & 1,525,517 & 645,536,400 &     & \NS & \NS & \NS & \TLE & \TLE  \\
\kClist++       &  & \NS{} & \NS{} & \NS{} &  &  &    & \NS{} & \NS{} & \NS{} &  &   \\
\PEELB       &                                 & 1,286 & 59,379 & 16,234 & 1,533,939 & 606,708,980 &                                & 2,235 & 98,669 & 27,067 & 9,995,529 & 14,890,921,876 \\ \hline \hline
\Spade{}       & \multirow{7}{*}{\texttt{uk}}    & 426 & 27,812 & 10,155 & \TLE & \TLE & \multirow{7}{*}{\texttt{la}}    & 1,877 & 89,641 & 22,198 & \TLE & \TLE \\
\GBBS       &     & \keyres{426} & \keyres{27,812} & \keyres{8,987}  & - & - &     & \keyres{1,877} & \keyres{89,641} & \keyres{21,774} & - & - \\
\PKMC       &     &  \NS{}    &  \NS{}   &   \NS{}     &   \NS{}  &   \NS{}      &     &    1,512   &  \NS{}   &    \NS{}    &  \NS{}   & \NS{} \\
\FWA       &     &    486   &  \NS{}   &   \NS{}     &   \NS{}  &   \NS{}      &     &  1,877    &  \NS{}   &    \NS{}    &  \NS{}   & \NS{} \\
\ALENEX       &     &   486   &  \NS{}   &   \NS{}     &   \NS{}  &   \NS{}      &     &   1,877   &  \NS{}   &    \NS{}    &  \NS{}   & \NS{} \\
\PBBS       &      & - & - & - & \keyres{304,577} & \TLE &      & - & - & - & \TLE & \TLE \\
\kClist       &    &\NS & \NS & \NS & 188,524 & 44,272,600 &     & \NS & \NS & \NS  & 3,865,986 & 3,306,718,000  \\
\PEELB       &                                 & 309 & 24,796 & 8,424 & 187,260 & 33,823,488 &                                & 1,843 & 87,671 & 20,610 & 3,974,028 & 3,167,144,896 \\ \hline \hline
\Spade{}       & \multirow{7}{*}{\texttt{rv}}    & 1,643 & 74,779 & 22,678 & \TLE & \TLE & \multirow{7}{*}{\texttt{bio}}    & 777 & 36,446 & 13,039 & \TLE & \TLE \\
\GBBS       &     & \keyres{1,643} & \keyres{74,779} & \keyres{20,749}   & - & - &     & \keyres{777} & \keyres{36,446} & \keyres{12,669} & - & - \\
\PKMC       &     &   1,437    &  \NS{}   &   \NS{}     &   \NS{}  &   \NS{}      &     &   721    &  \NS{}   &    \NS{}    &  \NS{}   & \NS{} \\
\FWA       &     &    1,643   &  \NS{}   &   \NS{}     &   \NS{}  &   \NS{}      &     &   787   &  \NS{}   &    \NS{}    &  \NS{}   & \NS{} \\
\ALENEX       &     &   1,643   &  \NS{}   &   \NS{}     &   \NS{}  &   \NS{}      &     &   787   &  \NS{}   &    \NS{}    &  \NS{}   & \NS{} \\
\PBBS       &     & - & - & - & \TLE & \TLE &     & - & - & - & \keyres{1,187,388} & \TLE \\
\kClist       &    & \NS & \NS & \NS & 3,561,222 & 3,280,083,600 &     & \NS & \NS & \NS & 1,043,100 & 542,860,800  \\
\PEELB       &                                 & 1,518 & 71,058 & 18,115 & 3,769,671 & 2,997,257,620 &                                & 699 & 31,172 & 10,404 & 1,138,083 & 544,047,728 \\ \hline \hline
\Spade{}       & \multirow{7}{*}{\texttt{gfg}}    & 28 & 1,432 & 5,369 & \NS & \NS & \multirow{7}{*}{\texttt{kron}}    & 1,177 & 53,539 & 15,381 & \TLE & \TLE \\
\GBBS       &     & \keyres{28} & 1,432 & 5,018 & \NS{} & \NS{} &     & \keyres{1,177} & \keyres{53,539} & \keyres{14,861} & \NS & \NS \\
\PKMC       &     &   \NS{}    &  \NS{}   &   \NS{}     &   \NS{}  &   \NS{}      &     &   1,169    &  \NS{}   &    \NS{}    &  \NS{}   & \NS{} \\
\FWA       &     &   \NS{}    &  \NS{}   &   \NS{}     &   \NS{}  &   \NS{}      &     &    1,177  &  \NS{}   &    \NS{}    &  \NS{}   & \NS{} \\
\ALENEX       &     &   \NS{}   &  \NS{}   &   \NS{}     &   \NS{}  &   \NS{}      &     &   1,177   &  \NS{}   &    \NS{}    &  \NS{}   & \NS{} \\
\PBBS       &    & \NS{} & \NS{} & \NS{} & - & - &     & \NS{} & \NS{} & \NS{} & \keyres{1,447,859} & \TLE   \\
\kClist       &    & \NS & \NS & \NS & \NS & \NS &     & \NS & \NS & \NS & 1,447,859 & 424,908,400  \\
\PEELB       &                                 & 26 & 1,405 & 4,879 & \NS{} & \NS{} &                                & 1,177 & 52,695 & 13,912 & 1,447,788 & 426,677,504 \\ \hline
\end{tabular}
}
\end{table*}
}

\expstitle{Impact of \PEELLT{}.} Next, we evaluate the acceleration effect of \PEELLT{}, using \PEELB{} as a baseline for comparison. Notably, \PEELLT{}-$\DENG$ is found to be $1.14$ times faster, \PEELLT{}-$\DENGW$ is $1.11$ times faster, \PEELLT{}-\TDS{} is $1.10$ times faster, and \PEELLT{}-\KCDS{} is $1.35$ times faster than their respective \PEELB{} implementations. The acceleration is particularly significant for \KCDS{} on larger datasets, such as \texttt{soc},  \texttt{rv}, \texttt{kron}, and \texttt{sk}. \PEELLT{}-\KCDS{} is $1.40$, $1.54$, $1.51$, and $1.61$ times faster than \PEELB{}-\KCDS{} on \texttt{soc},  \texttt{rv}, \texttt{kron}, and \texttt{sk}, respectively. This is attributed to the greater number of iterations required due to the larger dataset sizes and the higher count of $k$-Cliques, necessitating more peeling iterations. In such scenarios, \PEELLT{} effectively prunes the tail iterations early, thereby reducing the number of rounds needed.

\expstitle{Impact of \PEELLTP{}.} We next compared the efficiency of \PEELB{} and \PEELLTP{}. On average, though \PEELLTP{}-$\DENG{}$ was 10.09\% slower than \PEELB{}-$\DENG{}$, the speed is still comparable. For example, on the \texttt{soc} dataset, \PEELLTP{}-$\DENG{}$ was 5.94\% faster than \PEELB{}-$\DENG{}$. This is because \PEELLTP{} requires some lock operations to update the peeling weights of neighboring vertices during local peeling optimization. However, the advantage is that \PEELLTP{} can detect up to 26.26\% denser subgraphs, as detailed in Section~\ref{sec:effectiveness}. On larger datasets, such as \texttt{soc}, \texttt{rv}, and \texttt{la}, the vertices being trimmed have relatively low correlations with each other, so the impact of locks is minimal. For example, on the \texttt{soc} (resp. \texttt{rv}), \PEELLTP{}-$\DENG{}$ was 5.94\% (resp. 9.94\%) faster than \PEELB{}-$\DENG{}$. In contrast, on smaller datasets, such as \texttt{bio} and \texttt{kron}, the extra cost of locks is higher.


\expstitle{Impact of the Number of Threads \( t \).} All methods are influenced by the concurrency level, \ie the number of threads. Generally, the greater the number of threads, the faster the execution speed. In our experiments, using the \texttt{la} dataset as a benchmark, we varied the number of threads \( t \) from $2$ to $128$ to observe the impact on runtime performance. For \GBBS-\(\DENG\), runtime stabilizes and accelerates by 3.33 times as thread count increases from $2$ to $8$. For \GBBS-\(\DENGW\), performance deteriorates when the thread count exceeds $32$, but from $2$ to $32$ threads, it accelerates by $3.50$ times. Similarly, \GBBS-\(\Fraudar\) shows no significant change in runtime beyond $32$ threads, with an acceleration of $3.09$ times from $2$ to $32$ threads. \GBBS-\TDS{} and \GBBS-\KCDS{} fail to complete computations on the \texttt{la} dataset. In contrast, all five methods tested on \PEELB{} exhibit good scalability. As the number of threads increases from $2$ to $128$, \PEELB-\(\DENG\) accelerates by $9.91$ times, \PEELB-\(\DENGW\) by $12.69$ times, \PEELB-\(\Fraudar\) by $13.03$ times, and \PEELB-\TDS{} by $28.06$ times. Although \PEELB-\KCDS{} fails to produce results when the thread count is below $16$, it accelerates by $1.32$ times from $32$ to $128$ threads. These results demonstrate \PEELB{}'s strong scalability across varied threading levels.

\expstitle{CPU Utilization (Figure~\ref{fig:breakdown:main}).} Using $\DENG{}$ as an example, as the number of threads increases from 2 to 64, the stall rates of \GBBS\ grow more rapidly than those of \Dupin, demonstrating \Dupin’s superior scalability. In particular, at 64 threads, the busy rate of \Dupin\ is approximately 2.49 times (resp. 3.81 times) that of \GBBS\ and 1.04 times (resp. 1.67 times) that of \ALENEX\ on \texttt{soc} (resp. \texttt{sk}). Additional CPU utilization data across datasets are provided in Appendix~\ref{appendix:exp} of~\cite{dupin2024}.

\begin{figure}[tb]
        \includegraphics[width=0.8\linewidth]{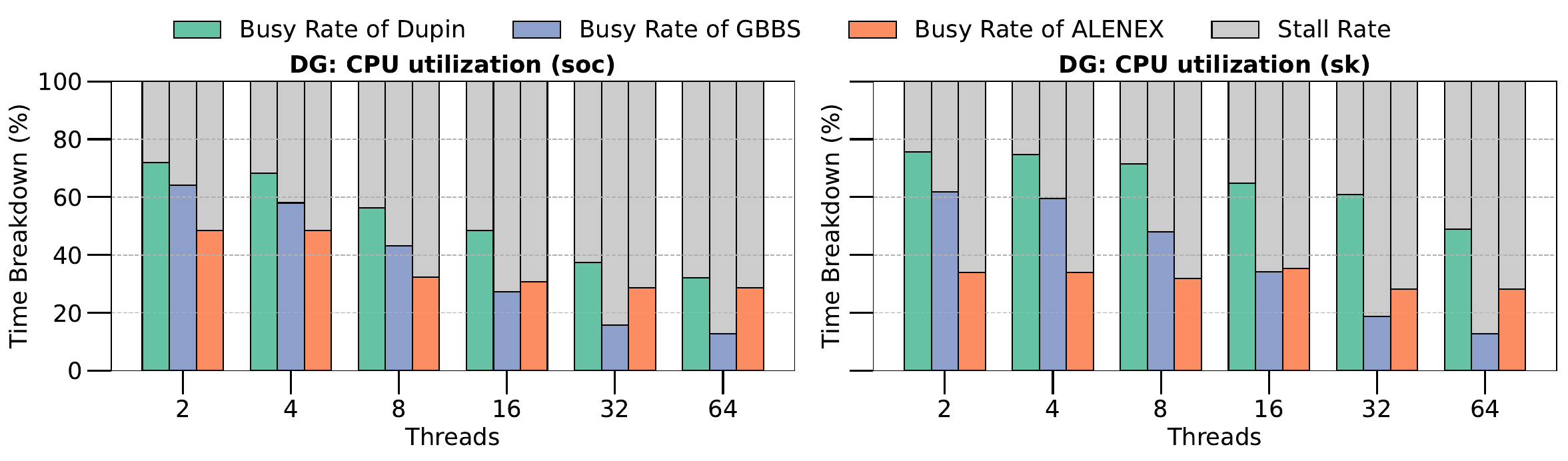}
  \caption{CPU Utilization \wrt $\DENG$.}\label{fig:breakdown:main}
\end{figure}


\subsection{Effectiveness of \Dupin{}}\label{sec:effectiveness}


\begin{figure*}
    \centering
\includegraphics[width=0.95\linewidth]{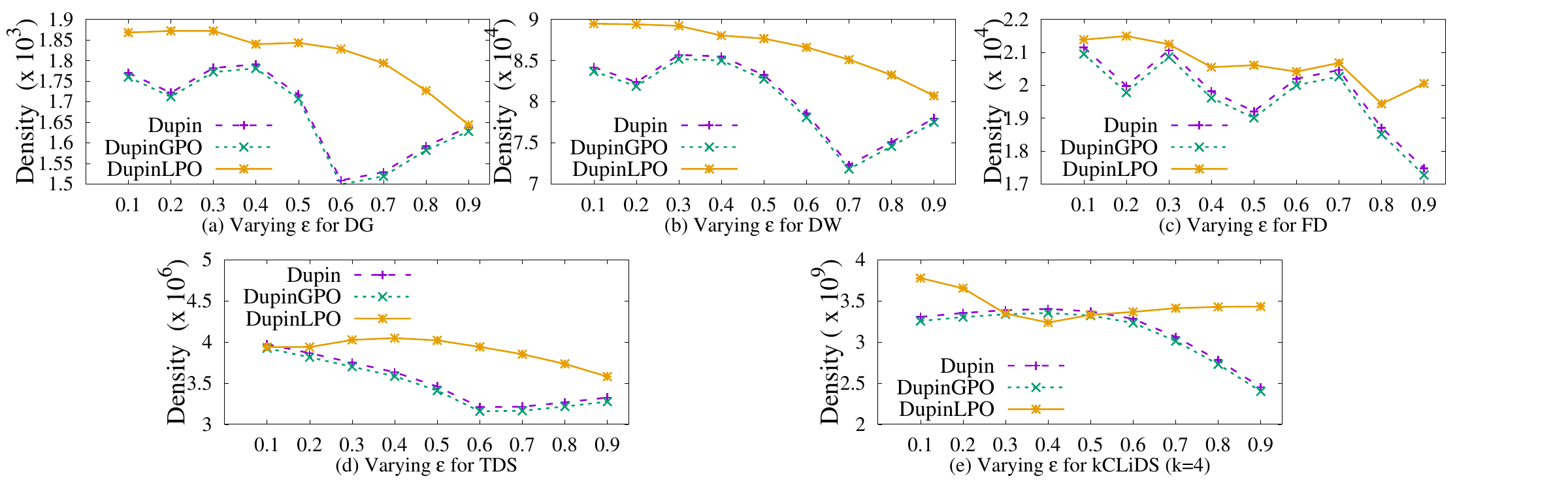}
    \caption{Effectiveness of \Dupin{} by varying $\epsilon$.}
    \label{fig:threads}
\end{figure*}

In addition to assessing performance speedup, we evaluate the density of the subgraphs detected by different frameworks. For parallel methods, we consider \GBBS{}, \PBBS{}, and \kClist{}, because they generally detect subgraphs with higher density than those detected by other methods across most datasets. For incremental methods, we use \Spade{} as a representative. Detailed density comparisons are presented in Table~\ref{tab:effectiveness-deng} and Table~\ref{tab:effectiveness-tds}.

\expstitle{\GBBS{} vs \Spade{} vs \PEELB{}.} Firstly, we compare the densities of the subgraphs identified by \GBBS{}-$\DENG$ (respectively \GBBS{}-$\DENGW$, and \GBBS{}-$\Fraudar$) to those detected by \PEELB{} in parallel execution. Our experimental results indicate that \PEELB{} maintains comparable subgraph densities with \GBBS{}, signifying that the increase in computational efficiency does not compromise the accuracy of the detected subgraphs. Specifically, the density of the subgraph detected by \GBBS{}-$\DENG$ (respectively \GBBS{}-$\DENGW$, and \GBBS{}-$\Fraudar$) is 7.08\% (resp. 6.48\% and 7.43\%) greater than that detected by \PEELB{}-$\DENG$ (respectively \PEELB{}-$\DENGW$, and \PEELB{}-$\Fraudar$) on average. \Spade{} shares similar densities with \GBBS{}. Due to \Spade{}'s tendency to accumulate errors over time, its density can become inflated. We will explain its limitations in practical applications in more detail in the case study.

\expstitle{\kClist{} vs \PBBS{} vs \PEELB{}.} a) For \KCDS{}, \PBBS{}-\KCDS{} could not complete on most of our test datasets, so we did not compare their densities. \kClist{}-\KCDS{} could not complete on the \texttt{sk} dataset. On the other datasets, \PEELB{}-\KCDS{}'s density was only 6.97\% smaller than \kClist{}-\KCDS{}. On some datasets, such as \texttt{bio} and \texttt{kron}, \PEELB{}-\KCDS{} even detected subgraphs with greater density. b) For \TDS{}, \PBBS{}-\TDS{} could detect subgraphs on smaller graphs but failed on billion-scale graphs such as \texttt{rv}, \texttt{sk}, and \texttt{la}. \PEELB{}-\TDS{}'s density was 2.03\% greater than \kClist{}-\TDS{}. Therefore, we conclude that \PEELB{} is comparable to existing parallel methods in terms of the density of the detected dense subgraphs.

\expstitle{Ablation Study of \Dupin{}.} This experiment delves into the effectiveness of our optimizations, \PEELLT{} and \PEELLTP{}. Our analysis primarily focuses on comparing the densities of dense subgraphs detected by both \PEELLT{} and \PEELLTP{} in various settings. \PEELLT{} achieves the same density as \Dupin{} across five density metrics because it trims the long tail, enhancing efficiency. For \PEELLTP{}-$\DENG$ (resp. \PEELLTP{}-$\DENGW$, \PEELLTP{}-$\Fraudar{}$, \PEELLTP{}-\TDS{}, and \PEELLT{}-\KCDS{}), we observe that the detected subgraphs exhibit densities 11.09\% (resp. 7.32\%, 8.00\%, 1.01\%, and 26.26\%) greater than those detected by \Dupin{}-$\DENG$, \Dupin{}-$\DENGW$, \Dupin{}-$\Fraudar{}$, \Dupin{}-\TDS{}, and \Dupin{}-\KCDS{}, respectively, demonstrating notable outcomes. This is because \PEELLTP{} trims the graph in each iteration, resulting in denser subgraphs compared to \Dupin{}.

\expstitle{Impact of the Approximation Ratio \( \epsilon \).} The approximation ratio \( \epsilon \) serves to control accuracy. In our experiment, we vary \( \epsilon \) from 0.1 to 1 to investigate its influence on the accuracy of \Dupin{}. We observe that as \( \epsilon \) increases from 0.1 to 1, the density of subgraphs detected by \Dupin{}, \PEELLT{}, and \PEELLTP{} generally decreases. For $\DENG$, the densities of \Dupin{}-$\DENG$, \PEELLT{}-$\DENG$, and \PEELLTP{}-$\DENG$ decrease by 22.71\%, 22.71\%, and 11.99\%, respectively. The trends for $\DENGW$ and $\DENG$ are similar, with the worst performance observed at \( \epsilon = 0.6 \) for both density metrics. The trends for $\Fraudar$, \TDS{}, and \KCDS{} are also similar. \PEELLTP{} detects denser subgraphs across most density metrics and is less affected by \( \epsilon \), due to its iterative trimming. For example, for $\Fraudar$, \PEELLTP{}'s density decreases by only 6.23\%, while \PEELLT{} and \Dupin{}'s densities decrease by 17.37\%.

\begin{figure}[tb]
        \includegraphics[width=0.85\linewidth]{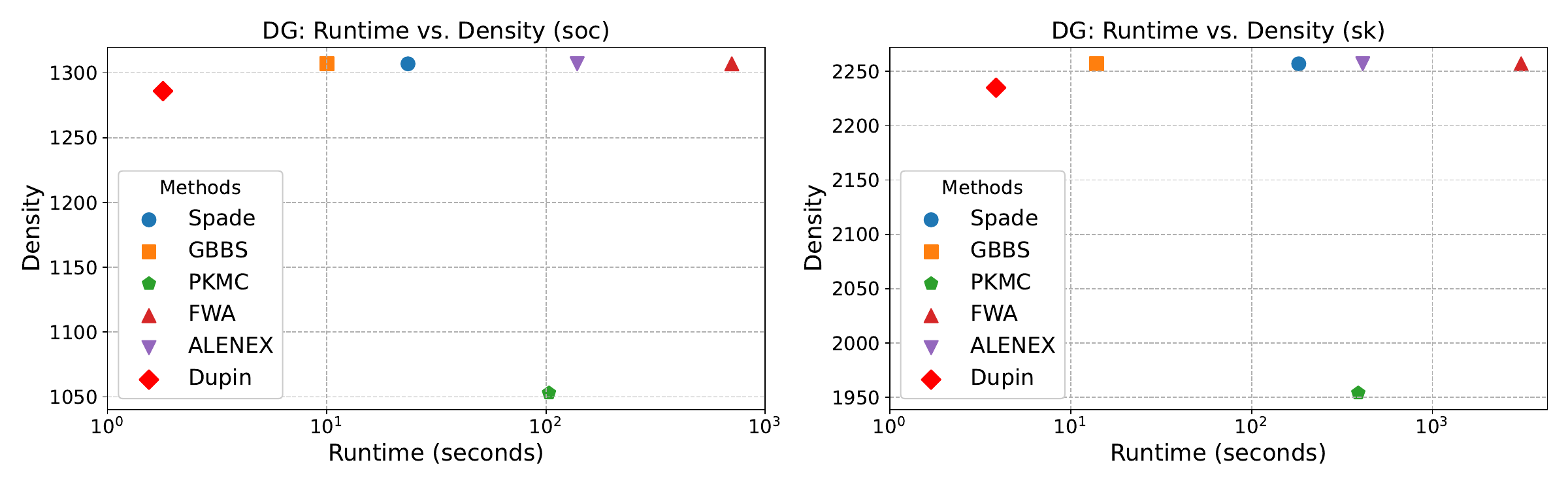}
  \caption{Comparison of runtime and density across different methods on two datasets, \wrt $\DENG$.}\label{fig:DG:runtime:density}
\end{figure}

\expstitle{Runtime-Density Trade-off Analysis.} Figure~\ref{fig:DG:runtime:density} shows runtime versus the density for six methods \Spade, \GBBS, \PKMC, \FWA, \ALENEX, and \Dupin{} over two representative datasets (\texttt{soc} and \texttt{sk}). \Spade{} and \FWA{} achieve high densities at the cost of longer runtimes, \PKMC\ is fast but yields lower densities, while \GBBS\ and \ALENEX\ strike a moderate balance. Notably, \Dupin\ consistently finishes the fastest with competitive densities, making it ideal for real-time or large-scale applications. Additional results for other datasets and density measures are provided in Appendix~\ref{appendix:exp} of~\cite{dupin2024}.

\subsection{Case study}

The comparison in real-world scenarios focusing on accuracy, latency, and prevention ratio.

\begin{figure}
\centering
   \includegraphics[width=0.8\linewidth]{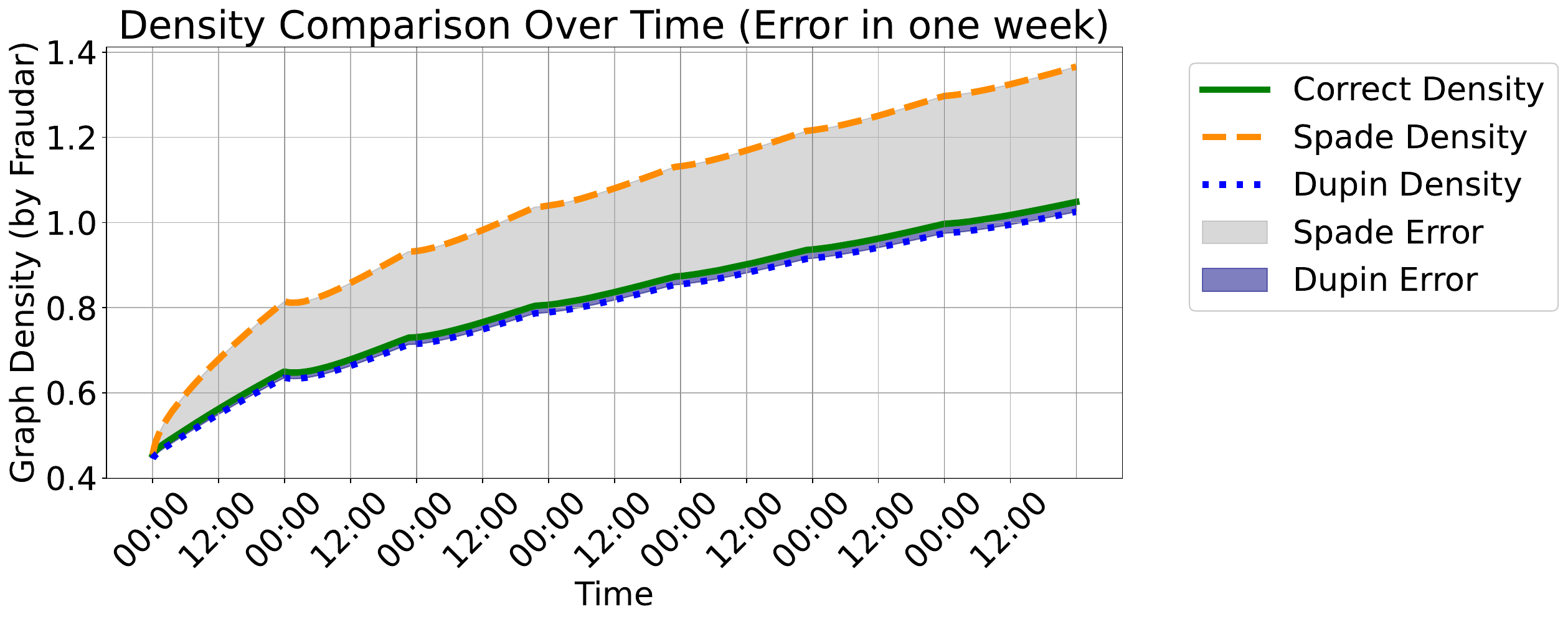}
        \caption{Weekly Graph Density Comparison in $\Grab$— Shows Correct Density (green), Spade Density (orange dashed), and Dupin Density (blue dotted) over one week. Gray and navy shaded areas indicate error margins for \Spade{} and \Dupin{}, highlighting their deviations from the ideal density.}\label{fig:density-error-accumulation}  
\end{figure}

\expstitle{Accuracy.} As discussed in Section~\ref{sec:intro}, although \Spade{} can quickly respond to new transactions, it tends to accumulate errors over time. Our analysis utilizes transaction data from $\Grab$, with the transaction network comprising \(|E|=2\) billion edges. Under the incremental computation framework of \Spade{}, the assumption is made that the weights of existing edges remain constant, thereby neglecting the impact of newly inserted edges on these weights, leading to error accumulation. For instance, using \(\Fraudar\) density metric, we observe that errors accumulate up to 24.4\% within a day and increase to 30.3\% over a week, as illustrated in Figure~\ref{fig:density-error-accumulation}. Although \Dupin{} sacrifices some accuracy to achieve parallel computational efficiency, our experiments show that \Dupin{}'s errors remain relatively stable between 3\% and 5\%. Hence, the detection precision of \Spade{} declines from 87.9\% to 68.3\%, while \Dupin{}'s precision remains above 86\%.


\begin{table}[tb]
\centering
\resizebox{0.85\linewidth}{!}{
\begin{tabular}{|l|cc|cc|cc|cc|}
\hline
\textbf{} & \multicolumn{2}{c|}{$\DENG$} & \multicolumn{2}{c|}{$\DENGW$} & \multicolumn{2}{c|}{$\Fraudar$} & \multicolumn{2}{c|}{\TDS{}} \\
\hline
\textbf{} & \( \mathcal{L} \) & \( \mathcal{R} \) & \( \mathcal{L} \)& \( \mathcal{R} \) & \( \mathcal{L} \) & \( \mathcal{R} \) & \( \mathcal{L} \) & \( \mathcal{R} \) \\
\hline
\Dupin{} & 3.10 & 78\% & 3.54 & 86\% & 3.59 & 94.5\% & 2145.00 & 32\% \\
\hline
\Spade{} & 165.20 & 58\% & 235.63 & 63\% & 197.61 & 45\% & \TLE & \TLE \\
\hline
\GBBS  & 927.88 & 12\% & \TLE & \TLE & 6014.00 & 3\% & - & - \\
\hline
\end{tabular}
}
\caption{Latency vs. Prevention Ratio by Method: A method's performance in latency (\( \mathcal{L} \): seconds) against its prevention ratio (\( \mathcal{R} \)), detailing the effectiveness of various computational strategies in real-time environments.}
\label{fig:latency2prevention}
\end{table}

\expstitle{Latency and Prevention Ratio.} If a transaction is identified as fraudulent, subsequent related transactions are blocked to mitigate potential losses. We define the ratio of successfully prevented suspicious transactions to the total number of suspicious transactions as \( \mathcal{R} \). As shown in Figure~\ref{fig:latency2prevention}, the prevention ratio \( \mathcal{R} \) continues to decrease as latency increases on $\Grab$'s datasets. Using the default density metric, \(\Fraudar\), in $\Grab$, our results show that \Dupin-\(\Fraudar\) prevents 94.5\% of fraudulent activities, \GBBS{}-\(\Fraudar\) prevents 3.0\%, and \Spade-\(\Fraudar\) prevents 45.3\%. The reason for the lower performance of \GBBS{}-\(\Fraudar\) is its low parallelism in the peeling process, which results in long peeling times and thus an average latency of 6{,}014 seconds, delaying the prevention of many fraudulent activities. Although \Spade-\(\Fraudar\) performs well in reordering techniques on smaller datasets, in industrial scenarios with billion-scale graphs, latency significantly worsens due to the extensive reordering range required in the peeling sequences as normal users transition to fraudsters, increasing the cost of incremental computations. Additional tests were conducted with other density metrics; for \(\DENG\), \Dupin-\(\DENG\) prevents 78.8\%, while other methods prevent varying amounts. Similarly, for \(\DENGW\), \Dupin-\(\DENGW\) prevents 86.1\%, while other methods demonstrate different prevention capabilities.





\section{Conclusion and future works}\label{sec:conclusion}

In this paper, we presented \Dupin, a novel parallel fraud detection framework tailored for massive graphs. Our experiments demonstrated that \Dupin's advanced peeling algorithms and long-tail pruning techniques, \PEELLT{} and \PEELLTP{}, significantly enhance the detection efficiency of dense subgraphs without sacrificing accuracy. Comparative studies with \GBBS{}, \PBBS{}, \kClist{}, \Spade{}, and \PEELB{} highlighted \Dupin's superior performance in terms of computational efficiency and accuracy. Through case studies, we validated that \Dupin{} achieves lower latency and higher detection accuracy compared to existing fraud detection systems on billion-scale graphs.

\stitle{Acknowledgement.} This research is supported by the National Research Foundation, Singapore under its AI Singapore Programme (AISG Award No: AISG2-TC-2021-002), the Ministry of Education, Singapore, under its Academic Research Fund Tier 2 (Award No: MOE2019-T2-2-065), and the Singapore Ministry of Education (MOE) Academic Research Fund (AcRF) Tier 1 grant, 23-SIS-SMU-063. Yuchen Li is partially supported by the Lee Kong Chian fellowship.


\newpage
\clearpage

\bibliographystyle{abbrv}
\bibliography{ref}

\newpage
\clearpage

\appendix

\section*{Appendix}

\section{More Experiments}\label{appendix:exp}

\begin{table*}[ht]
\centering
\captionsetup{width=1\linewidth} 
\caption{Comparison on Different Hardware Platforms. A dash (``-'') denotes that the framework does not support the density metrics. \textbf{TLE} = Time Limit Exceeded.}
\label{tab:performance-comparison}
\resizebox{\textwidth}{!}{%
\begin{tabular}{l*{10}{c}}
\toprule
\multirow{2}{*}{\textbf{System}} & \multicolumn{2}{c}{$\DENG$} & \multicolumn{2}{c}{$\DENGW$} & \multicolumn{2}{c}{$\Fraudar$} & \multicolumn{2}{c}{\TDS{}} & \multicolumn{2}{c}{\KCDS{}} \\
\cmidrule(lr){2-3} \cmidrule(lr){4-5} \cmidrule(lr){6-7} \cmidrule(lr){8-9} \cmidrule(lr){10-11}
& X5650 & EPYC 7742 & X5650 & EPYC 7742 & X5650 & EPYC 7742 & X5650 & EPYC 7742 & X5650 & EPYC 7742 \\
\midrule
Spade   & 23.46 & 20.45 & 28.10 & 26.04 & 30.67 & 28.41 & -      & -      & -      & -      \\
\FWA    & 704   & 559   & 1,241 & 839   & 1,381 & 914   & -      & -      & -      & -      \\
\GBBS   & 10.01 & 8.41  & 23.28 & 15.73 & 35.43 & 24.08 & -      & -      & -      & -      \\
\PBBS   & -     & -     & -     & -     & -     & -     & 3,524.11 & 1,970.90 & \TLE & \TLE \\
Dupin   & 1.79  & 0.80  & 2.26  & 1.09  & 2.38  & 1.18  & 32.59   & 18.09   & 283.70 & 148.83 \\
\bottomrule
\end{tabular}
}
\end{table*}

\stitle{Hardware Evaluation.} In our experiments, we used the Xeon X5650 processor to ensure a fair comparison with the setup reported in Spade’s original work. We further conducted tests on a more modern EPYC 7742 processor. Our results show that Dupin achieves a 1.91–2.25x speedup on EPYC 7742 compared to its performance on the Xeon X5650, whereas Spade only attains a 1.08–1.15x speedup. Notably, Dupin’s performance scales nearly linearly with the number of cores, effectively leveraging modern hardware’s memory bandwidth and parallelism. In contrast, Spade’s largely sequential design limits its scalability. Similarly, both \GBBS\ and \PBBS\ exhibit lower speedup improvements on the EPYC 7742 than Dupin, with respective speedups of 1.19x and 1.79x, further highlighting the superior scalability of Dupin.

\begin{figure}[tb]
        \includegraphics[width=\linewidth]{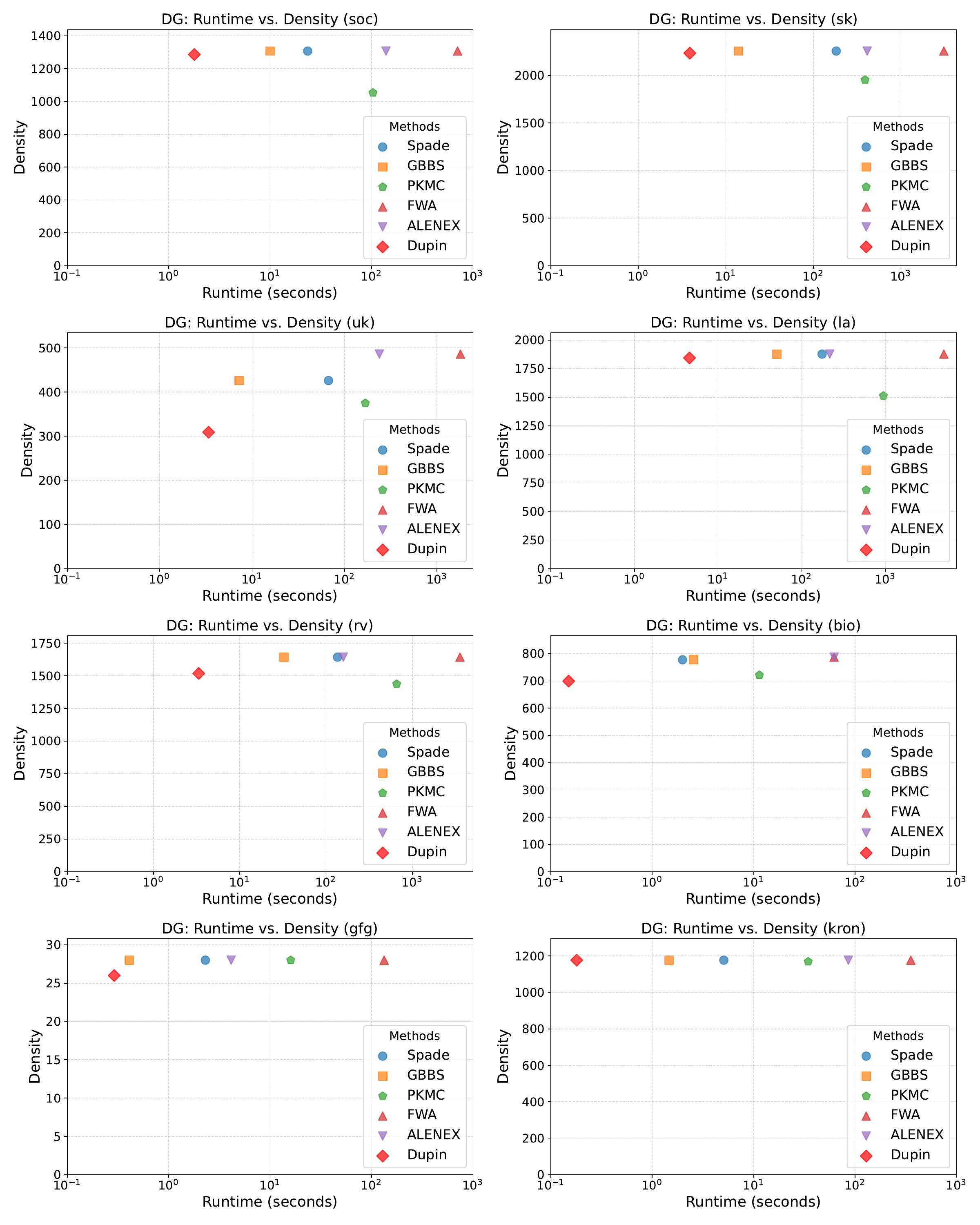}
  \caption{Comparison of runtime and density across different methods, \wrt $\DENG$}\label{fig:DG:runtime:density:appendix}
\end{figure}

\begin{figure}[tb]
        \includegraphics[width=\linewidth]{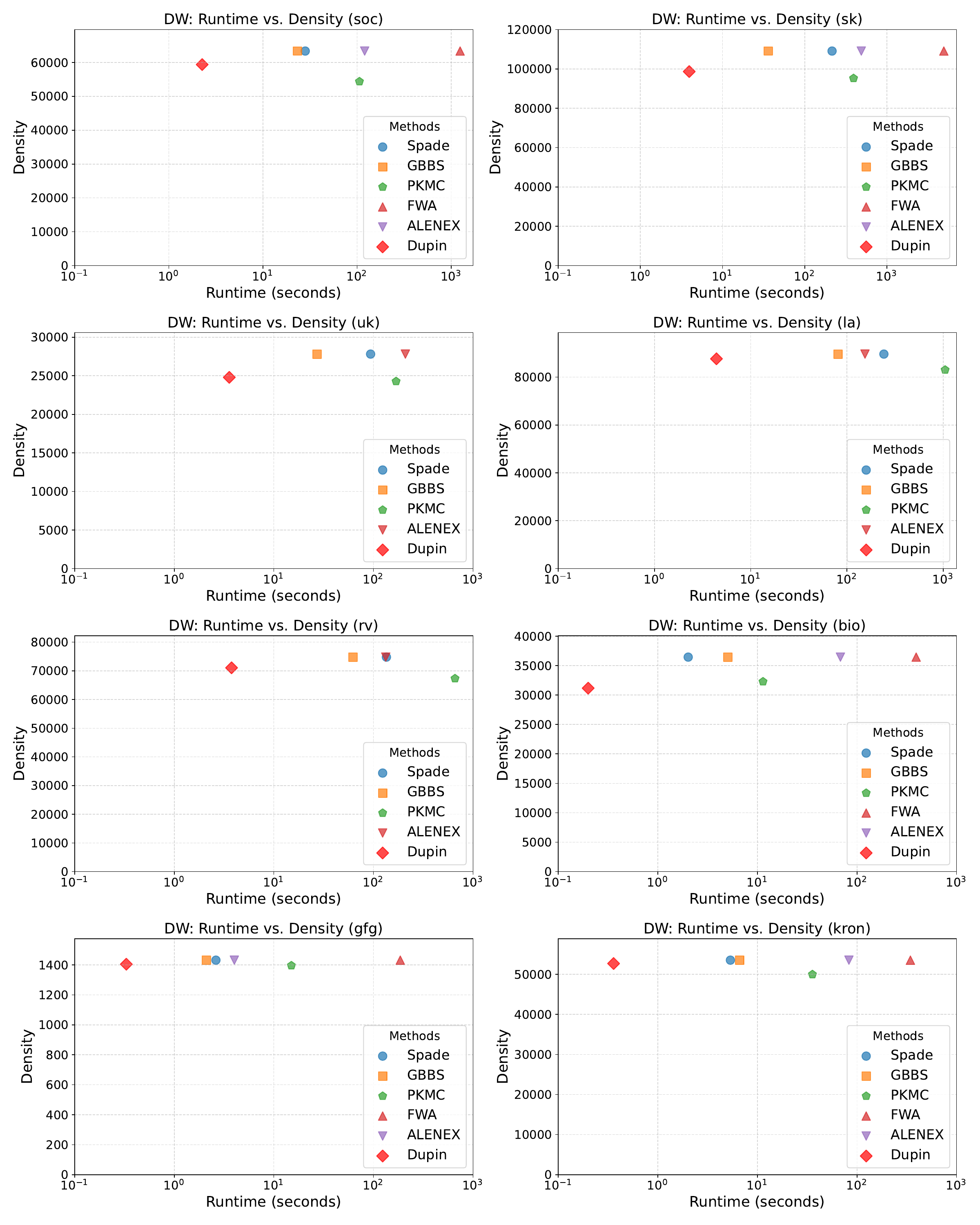}
  \caption{Comparison of runtime and density across different methods, \wrt $\DENGW$}\label{fig:DW:runtime:density:appendix}
\end{figure}

\begin{figure}[tb]
        \includegraphics[width=\linewidth]{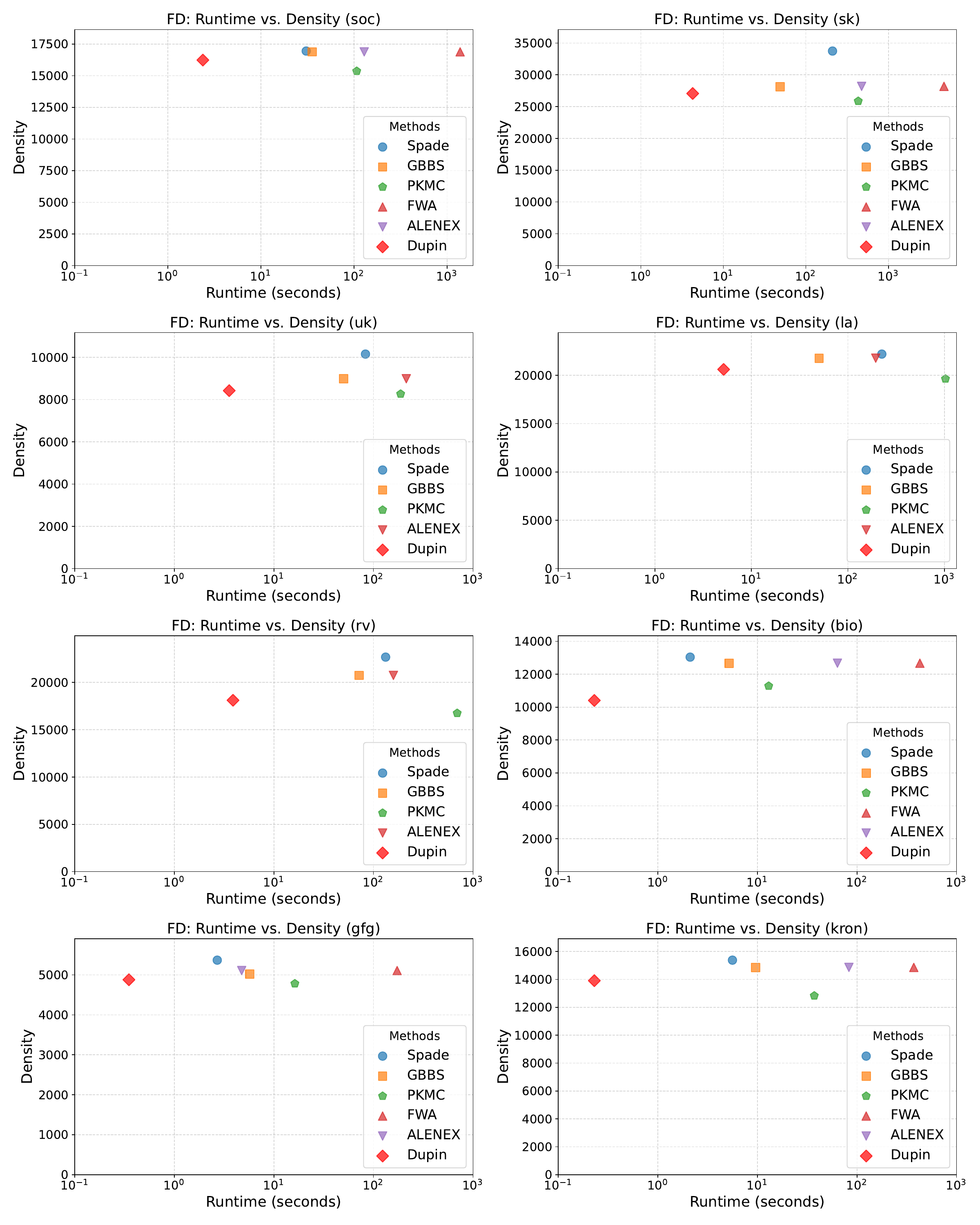}
  \caption{Comparison of runtime and density across different methods, \wrt $\Fraudar$}\label{fig:FD:runtime:density:appendix}
\end{figure}

\stitle{Additional CPU Utilization Experiments.} 
To comprehensively evaluate \Dupin{}'s performance, we conduct additional experiments on CPU utilization across various density metrics. These experiments demonstrate how \Dupin{} efficiently manages CPU resources across different density evaluation scenarios, highlighting its scalability and effectiveness in processing large-scale graphs. Specifically, we analyze:
\begin{enumerate}
    \item \textbf{CPU Utilization \wrt $\DENG$} (Figure~\ref{fig:breakdown:appendix})
    \item \textbf{CPU Utilization \wrt $\DENGW$} (Figure~\ref{fig:breakdown:appendix:dw})
    \item \textbf{CPU Utilization \wrt $\Fraudar$} (Figure~\ref{fig:breakdown:appendix:fd})
\end{enumerate}

\begin{figure}[h]
        \includegraphics[width=\linewidth]{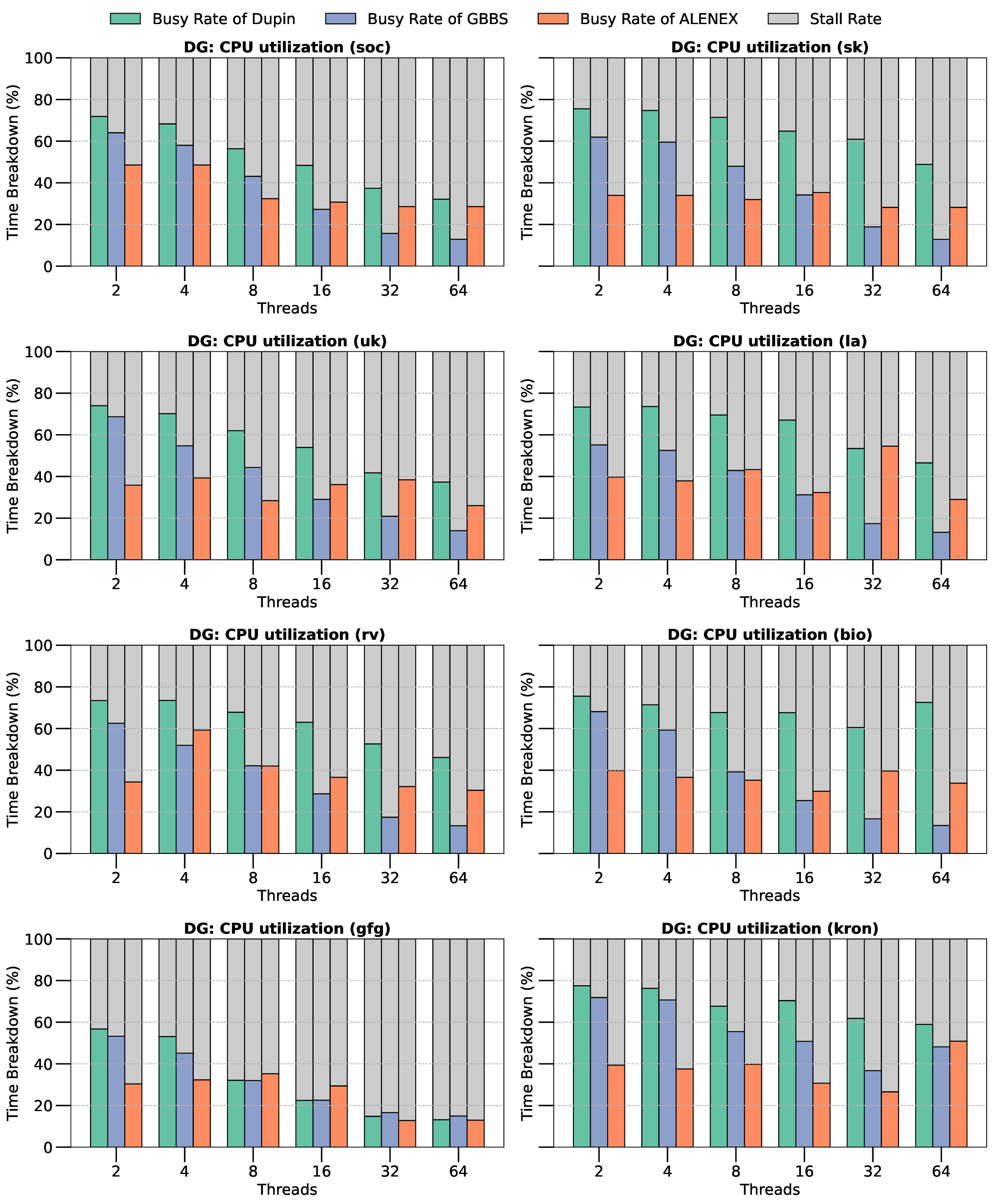}
  \caption{CPU Utilization \wrt $\DENG$}\label{fig:breakdown:appendix}
\end{figure}

\begin{figure}[h]
        \includegraphics[width=\linewidth]{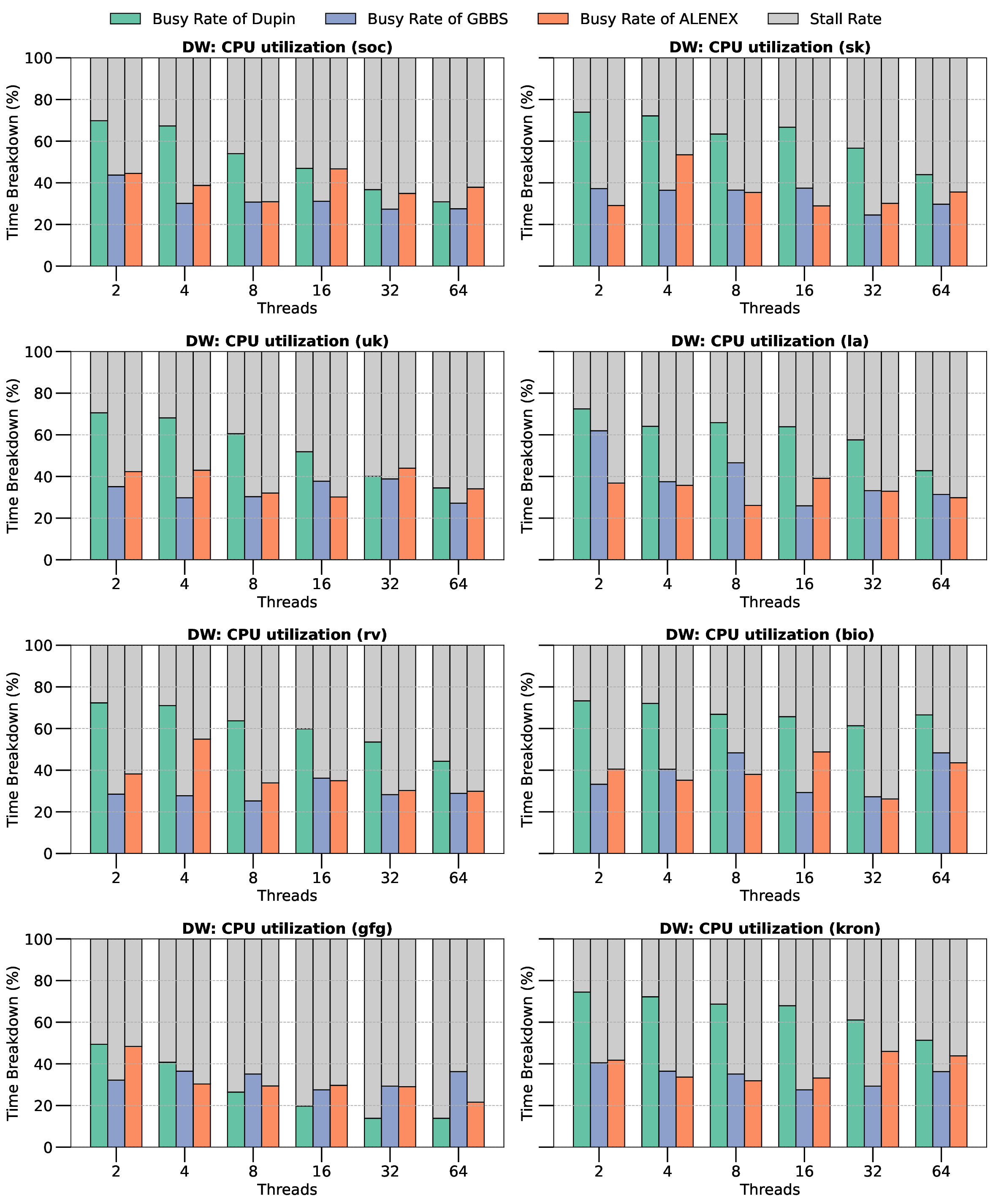}
  \caption{CPU Utilization \wrt $\DENGW$}\label{fig:breakdown:appendix:dw}
\end{figure}

\begin{figure}[h]
        \includegraphics[width=\linewidth]{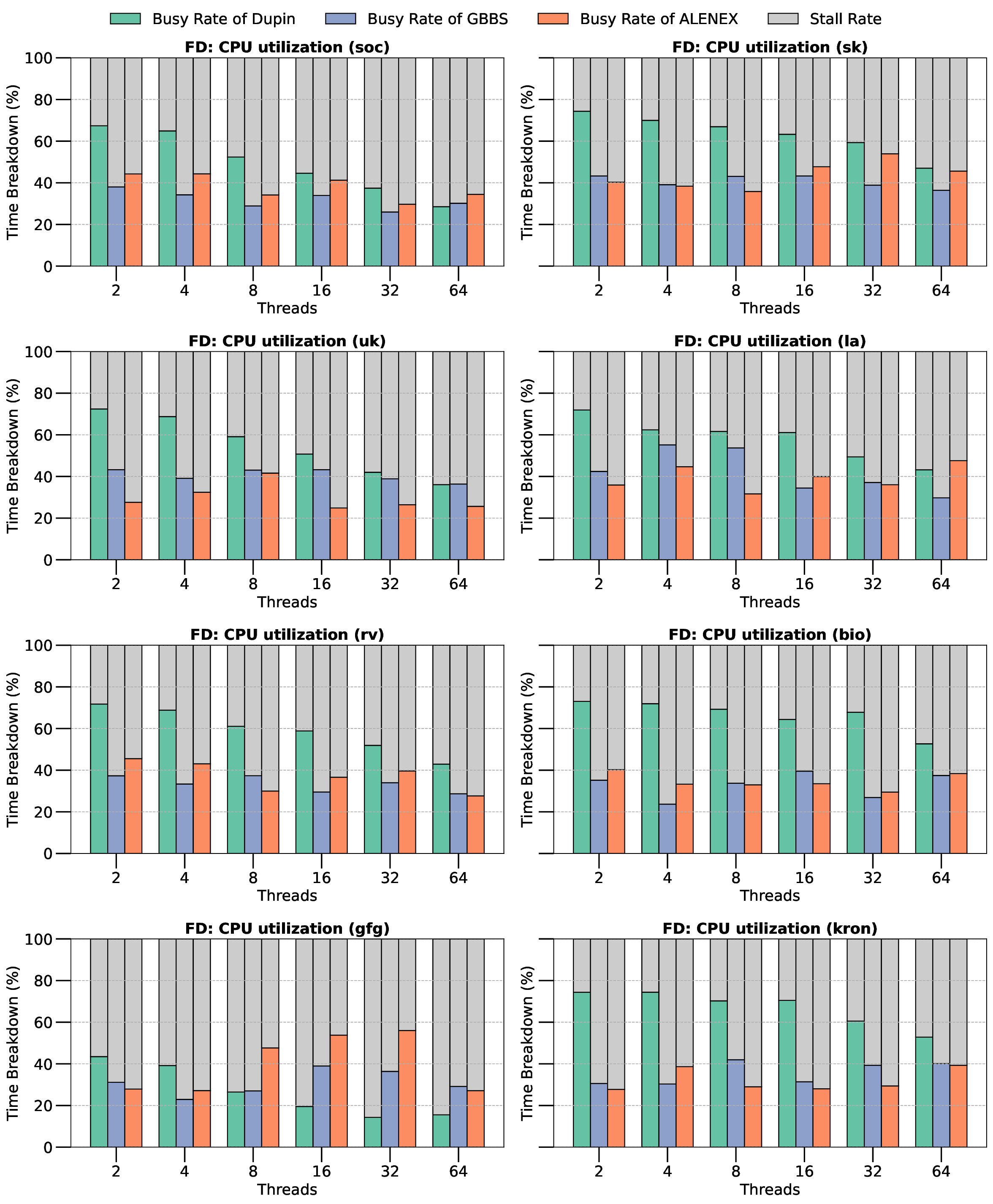}
  \caption{CPU Utilization \wrt $\Fraudar$}\label{fig:breakdown:appendix:fd}
\end{figure}

\stitle{Comparison of Runtime and Density Across Different Methods.}
Figures~\ref{fig:DG:runtime:density:appendix},~\ref{fig:DW:runtime:density:appendix}, and~\ref{fig:FD:runtime:density:appendix} present a comprehensive comparison of runtime and density for six methods—Spade, GBBS, PKMC, FWA, ALENEX, and Dupin—across two representative datasets with respect to the density metrics $\DENG{}$, $\DENGW{}$, and $\Fraudar{}$, respectively. Across all metrics, \Dupin{} consistently achieves the fastest runtimes while maintaining competitive density scores. For instance, in scenarios evaluated with $\DENG{}$, $\DENGW{}$, and $\Fraudar{}$, \Dupin{} outperforms state-of-the-art algorithms by significantly reducing processing times without substantial sacrifices in density. This balanced performance makes \Dupin{} particularly well-suited for large-scale and real-time fraud detection applications, where both speed and accuracy are critical. These results demonstrate \Dupin{}'s superior scalability and effectiveness in handling massive graphs, underscoring its advantage over existing methods in diverse density evaluation contexts.

\clearpage

\section{\DDS{} implementation in \Dupin}\label{appendix:implementation}


    

\begin{lstlisting}[language=C++, style=apistyle, caption=Implementation of $\DENG$ in \Dupin{}, captionpos=b, label=lst:deng]
double vsusp(const Vertex& v, const Graph& g) {
  // For unweighted graphs, we ignore vertex weights.
  return 0.0;
}
double esusp(const Edge& e, const Graph& g) {
  // For unweighted graphs, each edge has weight 1.0.
  return 1.0;
}
int main() {
  Dupin dupin;
  dupin.VSusp(vsusp); // Plug in vertex suspiciousness
  dupin.ESusp(esusp); // Plug in edge suspiciousness
  dupin.setEpsilon(0.1); // Set approximation parameter
  dupin.LoadGraph("graph_unweighted_path");
  vector<Vertex> result = dupin.ParDetect(); // Detect densest subgraph
  return 0;
}
\end{lstlisting}

\begin{lstlisting}[language=C++, style=apistyle, caption=Implementation of $\DENGW$ in \Dupin{}, captionpos=b, label=lst:dengw]
double vsusp(const Vertex& v, const Graph& g) {
  return g.weight[v]; // Side information on the vertex
}
double esusp(const Edge& e, const Graph& g) {
  return g.weight[e];// Side information on the edge
}
int main() {
  Dupin dupin;
  dupin.VSusp(vsusp);  // Plug in vertex suspiciousness
  dupin.ESusp(esusp);  // Plug in edge suspiciousness
  dupin.setEpsilon(0.1); // Set approximation parameter
  dupin.LoadGraph("graph_weighted_path");
  vector<Vertex> result = dupin.ParDetect();
  return 0;
}
\end{lstlisting}

\begin{lstlisting}[language=C++, style=apistyle, 
  caption={Implementation of \TDS{} and \KCDS{} in \Dupin{}}, 
  captionpos=b, label=lst:kcds_tds]
double vsusp(const Vertex& v, const Graph& g) {
  static const int k = 3; // k=3 => TDS; k>3 => kCLiDS
  // returns how many k-cliques vertex 'v' forms with active vertices.
  return Dupin::countKCliques(v, g, k) / k;
}
double esusp(const Edge& e, const Graph& g) {
  return 0.0;
}
int main() {
  Dupin dupin;
  // Assign suspiciousness functions for TDS or KCDS (depending on 'k')
  dupin.VSusp(vsusp);
  dupin.ESusp(esusp);
  dupin.setEpsilon(0.1);
  dupin.setK(3);
  dupin.LoadGraph("graph_kclique_path"); 
  std::vector<Vertex> result = dupin.ParDetect();
  return 0;
}
\end{lstlisting}

\section{Proofs}\label{subsec:appendix:proof}
\begin{manualtheorem}{\ref{thm:2ppr}}

\begin{figure}[b]
\includegraphics[width=0.8\linewidth]{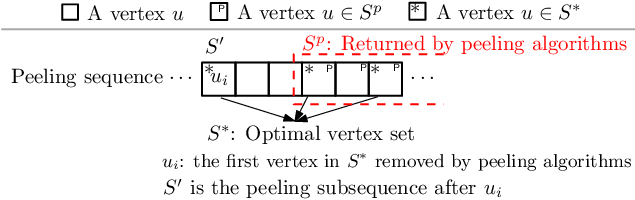}
    \caption{Illustration of key notations in peeling algorithms}\label{fig:proof}    
\end{figure}

Let $S^p$ be the vertex set returned by Algorithm~\ref{algo:peeling} and $S^*$ be any optimal solution. For each of the density metrics $\DENG$, $\DENGW$, and $\Fraudar$, the following approximation guarantee holds:
\[
g(S^p) \;\ge\; \tfrac{1}{2}\,g(S^*).
\]
\end{manualtheorem}

\begin{proof}\label{eq:proof}
Consider the first vertex \(u_i\in S^*\) that the peeling algorithm removes. Let \(S'\subseteq S^*\) be the subset immediately before \(u_i\) is removed. By definition, \(u_i\) is chosen because removing it has the smallest impact on the function \(f(\cdot)\). Hence,
\begin{equation}
    g(S') \;=\; \frac{f(S')}{|S'|} 
    \;\;\ge\;\; \frac{\sum_{u_j\in S'} w_{u_j}(S')}{2\,|S'|}
    \;\;\ge\;\; \frac{w_{u_i}(S')}{2}.
\end{equation}
Since Algorithm~\ref{algo:peeling} ensures \(g(S^p)\ge g(S')\), we have
\[
g(S^p)
\;\;\ge\;\;
g(S')
\;\;=\;\;
\frac{f(S')}{|S'|}
\;\;\ge\;\;
\frac{w_{u_i}(S')}{2}
\;\;\ge\;\;
\frac{w_{u_i}(S^*)}{2}
\;\;\ge\;\;
\frac{g(S^*)}{2}.
\]
The last inequality uses the fact that \(w_{u_i}(S') \ge w_{u_i}(S^*)\) directly implies \(g(S') \ge g(S^*)\) up to a factor of \(2\) under the $\DENG$, $\DENGW$, or $\Fraudar$ metrics.
\end{proof}

\begin{manualtheorem}{\ref{thm:3ppr_cli}}
Let $S^p$ be the vertex set returned by Algorithm~\ref{algo:peeling} and $S^*$ be any optimal solution. For the density metrics \TDS{} and \KCDS{}, where each edge is counted $k$ times (and $k=3$ for \TDS{}), the following guarantee holds:
\[
g(S^p) \;\ge\; \frac{g(S^*)}{k}.
\]
\end{manualtheorem}

\begin{proof}
Consider the first vertex \(u_i \in S^*\) that the peeling algorithm removes. Let \(S'\subseteq S^*\) be the subset immediately before \(u_i\) is removed. Because \(u_i\) minimally decreases \(f(S')\) and each edge is counted \(k\) times in \TDS{} or \KCDS{}, it follows that
\begin{equation}\label{eq:proof}
    g(S') \;=\; \frac{f(S')}{|S'|}
    \;\;\ge\;\;
    \frac{\sum_{u_j \in S'} w_{u_j}(S')}{k\,|S'|}
    \;\;\ge\;\;
    \frac{w_{u_i}(S')}{k}.
\end{equation}
By construction, \(g(S^p)\!\geq\!g(S')\). Therefore,
\[
g(S^p)
\;\;\ge\;\;
g(S')
\;=\;
\frac{f(S')}{|S'|}
\;\;\ge\;\;
\frac{w_{u_i}(S')}{k}
\;\;\ge\;\;
\frac{w_{u_i}(S^*)}{k}
\;\;\ge\;\;
\frac{g(S^*)}{k},
\]
which completes the proof.
\end{proof}

\appendix








\end{document}